\renewcommand{\ALG@name}{Procedure}
\newtheorem{proposition}{Proposition}
\newtheorem{lemma}[proposition]{Lemma}
\newtheorem{corollary}[proposition]{Corollary}
\newtheorem{claim}[proposition]{Claim}
\newdefinition{definition}{Definition}
\newdefinition{example}{Example}
\newdefinition{remark}{Remark}
\newproof{proof}{Proof}
\begin{document}

	\begin{frontmatter}
	\title{Decisiveness of Stochastic Systems and its Application to Hybrid Models\tnoteref{tnote}}
	\tnotetext[tnote]{Research partly supported by F.R.S.-FNRS under Grant n$^\circ$ F.4520.18 (ManySynth) and Grant n$^\circ$ T.0027.21, by ENS Paris-Saclay visiting professorship (M.\ Randour, 2019), and by ANR MAVeriQ (ANR-20-CE25-0012).}
	\author[1]{Patricia Bouyer}
	\author[2]{Thomas Brihaye}
	\author[2,3]{Mickael Randour\texorpdfstring{\fnref{fnmick}}{}}
	\author[2]{C\'edric Rivi\`ere}
	\author[1,2,3]{Pierre Vandenhove\texorpdfstring{\fnref{fnpierre}}{}}

	\address[1]{Université Paris-Saclay, CNRS, ENS Paris-Saclay, Laboratoire Méthodes Formelles, 91190, Gif-sur-Yvette, France}
	\address[2]{UMONS -- Universit{\'e} de Mons, Mons, Belgium}
	\address[3]{F.R.S.-FNRS, Belgium}
	\fntext[fnmick]{Mickael Randour is an F.R.S.-FNRS Research Associate.}
	\fntext[fnpierre]{Pierre Vandenhove is an F.R.S.-FNRS Research Fellow.}

	\begin{abstract}
		In~\cite{ABM07}, Abdulla et al.\ introduced the concept of
		decisiveness, an interesting tool for lifting good properties of
		finite Markov chains to denumerable ones. Later~\cite{BBBC18}, this
		concept was extended to more general stochastic transition systems (STSs), allowing the
		design of various verification algorithms for large classes of (infinite)
		STSs. We further improve the understanding and utility of decisiveness in two ways.

		First, we provide a general criterion for proving the decisiveness of
		general STSs. This criterion, which is very
		natural but whose proof is rather technical, (strictly) generalizes
		all known criteria from the literature.

		Second, we focus on stochastic hybrid systems (SHSs), a
		stochastic extension of hybrid systems. We establish the
		decisiveness of a large class of SHSs and, under a few classical
		hypotheses from mathematical logic, we show how to decide reachability
		problems in this class, even though they are undecidable for
		general SHSs. This provides a decidable stochastic extension of
		o-minimal hybrid systems~\cite{BMRT04,Gen05,LPS00}.
	\end{abstract}
	\begin{keyword}
	model checking \sep stochastic systems \sep hybrid systems \sep reachability \sep o-minimality
	\end{keyword}
	\end{frontmatter}

	\section{Introduction}
		\paragraph{Hybrid and stochastic models}
Various kinds of mathematical models have been proposed to represent real-life
systems. In this article, we focus on models combining \emph{hybrid} and
\emph{stochastic} aspects. We outline the main features of these models to
motivate our approach.

The idea of \emph{hybrid systems} originates from
the urge to study systems subject to both \emph{discrete} and \emph{continuous}
phenomena, such as digital computer systems interacting with analog data.
These systems are transition systems with two kinds of transitions: continuous
transitions, where some continuous variables evolve over time (e.g., according
to a differential equation), and discrete transitions, where the system changes
modes and variables can be reset.
Much of the research about hybrid systems focuses on \emph{non-deterministic}
hybrid systems, i.e., systems modeling uncertainty by considering all possible
behaviors (e.g., different possibilities for discrete transitions
at a given time, arbitrary long time between discrete transitions). A typical
question concerns the \emph{safety} of such systems---if a system can
reach an undesirable state, it is said to be \emph{unsafe}; if not, it is
\emph{safe}. If there is
even a single behavior that does not satisfy the properties that the system
should verify, then the whole system is deemed inadequate---such a
specification is called \textit{qualitative}. However, this is limiting for two
reasons. First, it does not take
into account that some behaviors are more likely to occur than others. Second,
risks cannot necessarily be avoided, and it is unrealistic to prevent
undesirable outcomes altogether. Therefore, we want to make probabilities an
integral part of our models,
in order to be able to \textit{quantify} the probability that
they behave according to the specification.
We thus consider the class of \emph{stochastic hybrid systems} (SHSs, for
short), hybrid systems in which a stochastic semantics replaces non-determinism.

\paragraph{Goals}
Our interest lies in the formal analysis of continuous-time SHSs, and more
specifically in \emph{reachability} questions, i.e., concerning the likelihood
that some set of states is reached in a system. The
questions we seek to
answer are both of the qualitative kind (is some region of the state space
\emph{almost surely} reached, i.e., reached with probability $1$?) and of
the quantitative kind (what is the probability
that some part of the state space is eventually reached?). Such questions are
crucial, as verifying that a system works safely often reduces to verifying
that some undesirable state of the system is never reached (or reached with a
very low probability), or that some desirable state is to be reached with high
probability~\cite{BK08}.
We want to give algorithms that decide, for an SHS $\calH$ and
reachability property $P$, whether $P$ is satisfied in $\calH$.
Such an endeavor faces multiple challenges; a first obvious one being that even
for rather restricted classes of non-deterministic hybrid systems, reachability
problems are undecidable~\cite{HKPV,HR00}. We want to define and
consider a \emph{class} of SHSs for which \emph{some} reachability problems are
decidable.

\paragraph{Methods and contributions}
Our methodology consists of two main steps.  In a first step, we
follow the approach of Bertrand et al.~\cite{BBBC18}: we study general
\emph{stochastic transition systems} (STSs) through the
\emph{decisiveness} concept (Section~\ref{sec:sts}). The class of STSs
is a very versatile class of systems encompassing many well-known
families of stochastic systems, such as Markov chains, but also
stochastic systems with a continuous state-space such as generalized
semi-Markov processes, stochastic timed automata, stochastic Petri
nets, and stochastic hybrid systems.  Decisiveness was introduced
in~\cite{ABM07} to study Markov chains, and extended to STSs
in~\cite{BBBC18}. An STS is said to be \emph{decisive} with respect to
a set of states $B$ if executions of the system almost surely reach
either $B$ or a state from which $B$ is unreachable. Decisive
STSs benefit from many useful properties that make possible the design of
some verification algorithms related to reachability properties. Our
first contribution is to provide a \textbf{criterion to check the
  decisiveness of STSs} (Proposition~\ref{prop:mainDecCritGen}), which
generalizes the decisiveness criteria from~\cite{ABM07,BBBC18}. This
generalization was mentioned as an open problem in~\cite{BBBC18}.

In a second step, we focus on \textit{stochastic hybrid systems}, which we
introduce in Section~\ref{sec:stochHS}. Our contributions regarding SHSs are
split into three parts.

First, we motivate our subsequent results by showing a proof that \textbf{reachability problems are undecidable in general for SHSs} (Proposition~\ref{prop:undecSHS}).
More precisely, our proof shows the undecidability of a subclass of SHSs which is different from those appearing in classical proofs from the literature for non-deterministic hybrid systems~\cite{HKPV,HR00}.
Our intent with this new undecidability proof is to get as close as possible to the decidable class that we consider in the following sections.

Second, we aim to use the decisiveness idea to get closer to
the decidability frontier.
Albeit desirable, the decisiveness of a class of SHSs is not
sufficient to handle algorithmic questions about each SHS, as we need an
effective way to apprehend their uncountable state space. In this regard, an
often-used technique is to consider
a \emph{finite abstraction} of the system, that is, a finite partition of the
state space that preserves the properties to be verified (a well-known
example is the region graph for timed automata~\cite{AD94}).
To find such an abstraction of SHSs, we borrow ideas from~\cite{BMRT04,LPS00}:
we consider SHSs with \emph{strong resets} (Section~\ref{sec:decisiveSHS}), a
syntactic condition that decouples their continuous behavior from
their discrete behavior. We show that \textbf{SHSs with adequately placed
strong resets} (at least one per cycle of their discrete graph) $(i)$
\textbf{have a finite abstraction} (Proposition~\ref{prop:finiteAbs}), and
$(ii)$ \textbf{are decisive} (Proposition~\ref{prop:strongResetAreDec}), which
can be proved using our new criterion.

Third, in Section~\ref{sec:ominimal}, we show, under the hypotheses of
the previous section, how to effectively compute a finite abstraction and use
it to perform a reachability analysis of the original system.
The way we proceed is by assuming that the components of our systems are
definable in an \emph{o-minimal structure}. The main difficulty here lies in
the fact that ``\emph{a satisfactory theory of measure and integration seems to
be lacking in the o-minimal context}''~\cite{BO04}. In particular, in an
o-minimal structure, the primitive function of a definable function
is in general not definable in the same o-minimal structure, which complicates
definability questions regarding probabilities. We therefore slightly restrict
the possible probability distributions that can be used, using properties of
o-minimal structures to keep our class as large as possible.
When the \textbf{theory} of the structure is \textbf{decidable} (as is the case
for the ordered field of real numbers~\cite{Tar48}), the \textbf{reachability
problems then become decidable}. This provides a stochastic extension to the
theory of o-minimal hybrid systems~\cite{LPS00}.
We study both \textbf{qualitative} (Section~\ref{sec:qualReachAnal}) and
\textbf{quantitative} (Section~\ref{sec:quantReachDec}) problems. Some proofs and technical details are deferred to the appendix.

\paragraph{Related work}
Our results combine previous work on stochastic systems and hybrid
systems. About stochastic systems, we build on works about the decisiveness notion~\cite{ABM07,BBBC18}.
Our work also takes inspiration from research about \emph{stochastic timed
automata}, a subclass of stochastic hybrid systems which already combines
stochastic and timed aspects;
model checking of stochastic timed automata has been studied
in~\cite{BBBBG08,BBBM08,BBB+14} and
considered in the context of the decisiveness property in~\cite{Car17}.
Our model of stochastic hybrid systems can be seen as a generalization of this model of stochastic timed automata.
Fundamental results about the decidability of the reachability problem for
simple classes of (non-deterministic) hybrid systems can be found in~\cite{AD94,HKPV,HR00}.
The class of \emph{o-minimal hybrid systems}, of which we
introduce a stochastic extension, has been studied
in~\cite{BMRT04,Gen05,LPS00}.

Our aim about SHSs is mostly theoretical: we show a path to obtain a large decidable class, using the decisiveness notion and decidability results from mathematical logic as our main tools.
The literature about SHSs often follows a more practical or numerical
approach. A first introduction to the model was provided in~\cite{HLS00}. An
extensive review of the underlying theory and of many applications of SHSs
is provided in book~\cite{CL07}, and a review of different possible
semantics for this model is provided in~\cite{LP10}.
Book~\cite{Bu12} overviews several approaches (probabilistic, analytic, statistical) to the reachability problem in general stochastic hybrid systems.
Our point of view is orthogonal to these existing approaches, yielding interesting formal guarantees at the expense of syntactic restrictions, which we discuss in Sections~\ref{sec:stochHS},~\ref{sec:decisiveSHS}, and~\ref{sec:ominimal}.
Applications of SHSs are numerous: a few examples are air traffic
management~\cite{PH09,PHLS00}, communication networks~\cite{He04}, biochemical
processes~\cite{LOQD17,SH10}. The software tool \textsc{Uppaal}
implements a model of SHS similar to the one studied in this article~\cite{DDL+12} and uses
numerical methods to compute reachability probabilities, through numerical
solving of differential equations and Monte Carlo simulation.
Reachability problems have also been considered in an alternative semantics
with \emph{discrete time}; a numerical approach is for instance provided
in~\cite{AKLP10,APLS08}.

This article extends its conference version~\cite{BBRRV20}.

\paragraph{Notations}
We write $\IN = \{0, 1, 2,\ldots\}$ for the set of non-negative integers, and $\IR^+ = \{x\in\IR\mid x \ge 0\}$ for the set of
non-negative real numbers. To emphasize that a union is
disjoint, we denote it by $\disjUnion$ instead of $\bigcup$.
Let $(\Omega,\Sigma)$ be a measurable space. We write
$\Dist(\Omega,\Sigma)$ (or $\Dist(\Omega)$ if there is no ambiguity) for
the set of \emph{probability} distributions over $(\Omega,\Sigma)$.
The complement of a set $A\in\Sigma$ is denoted by $\comp{A} =
\Omega\setminus A$.
For $A\in\Sigma$ a measurable set,
we say that two probability distributions $\mu, \nu \in \Dist(\Omega)$ are
\emph{qualitatively equivalent on $A$} (or \emph{equivalent on $A$}) if for
each $B\in\Sigma$, if $B\subseteq A$, then $\mu(B)>0$ if and only if $\nu(B)>0$.
For $s\in \Omega$, we denote by $\delta_s$ the Dirac distribution centered on
$s$.

	\section{Decisiveness of Stochastic Transition Systems}
		\label{sec:sts}
		In this section, we define \emph{stochastic transition systems} (STSs,
for short) as in~\cite{BBBC18}.
We then describe
the concept of \emph{decisiveness}, first defined in the specific case
of Markov chains~\cite{ABM07}, and then extended to STSs~\cite{BBBC18}.
Decisive stochastic systems benefit from ``nice'' properties making their
qualitative and quantitative analysis more accessible. A first
contribution of our work is a new decisiveness criterion
(Proposition~\ref{prop:mainDecCritGen}), which generalizes existing criteria
from the literature~\cite{ABM07,BBBC18}. It is an intuitive criterion,
which was conjectured in~\cite{BBBC18} but could not be proved.
We finish with a brief subsection on the notion of \emph{abstraction}
between STSs, which will be useful to apply our results to stochastic
hybrid systems.
We will use STSs in Section~\ref{sec:stochHS} to define the semantics of
\emph{stochastic hybrid systems}, a model of STSs to which the newly
developed techniques will apply.

\subsection{Stochastic Transition Systems\texorpdfstring{~\cite[Section~2]{BBBC18}}{}}
\begin{definition}[Stochastic transition system]
	A \emph{stochastic transition system} (STS) is a tuple
	$\calT =(S,\Sigma,\kappa)$ consisting of a measurable space of \emph{states}
	$(S,\Sigma)$, and a function $\kappa\colon S \times\Sigma \to [0,1]$ (called \emph{Markov kernel}) such that
	for every $s \in S$, $\kappa(s,\cdot)$ is a probability measure
	and for every $A \in \Sigma$, $\kappa(\cdot, A)$ is a measurable function.
\end{definition}
The second condition on $\kappa$ implies in particular that for a
measurable set $B\in\Sigma$, the set
\[\Pre^\calT(B) = \{s\in S\mid \kappa(s, B) > 0\}\]
is measurable.

\paragraph{Measuring runs}
We interpret STSs as systems generating executions, with a probability
measure over these executions.
We fix an STS $\calT=(S,\Sigma,\kappa)$. From a state $s \in S$, a
probabilistic transition is performed according to distribution
$\kappa(s, \cdot)$, and the system resumes from one of the successor states;
this process generates random sequences of states.
To formally provide a probabilistic semantics to STSs, we define a probability
measure over the set of \emph{runs} of STSs.
A \emph{run} of $\calT$ is an infinite
sequence $\run = s_0s_1s_2\ldots$ of states. We write $\Runs(\calT)$
for the set of runs of $\calT$.

In order to get a probability measure over $\Runs(\calT)$, we equip
this set with a $\sigma$-algebra. For every finite sequence
$(A_i)_{0 \le i \le n} \in \Sigma^{n+1}$, we define the
\emph{cylinder}
\[
\Cyl(A_0,A_1,\ldots,A_n) = \{\run = s_0 s_1 \ldots s_n\ldots \in
\Runs(\calT) \mid \forall 0 \le i \le n, s_i \in A_i\}\puncteq{.}
\]
Given an initial distribution $\mu\in\Dist(S)$, we extend in a natural way the Markov kernel $\kappa$ to a function $\Prob_{\mu}^\calT$ on the set of all cylinders.
We initialize it for $A_0 \in \Sigma$ with
$\Prob_{\mu}^{\calT}(\Cyl(A_0))=\mu(A_0)$: this is the probability to
be in $A_0$, starting with probability distribution $\mu$.
For every finite sequence $(A_i)_{0 \le i \le n} \in \Sigma^{n+1}$, we
then set inductively:
\[
\Prob^{\calT}_\mu(\Cyl(A_0,A_1,\ldots,A_n)) = \int_{s_0 \in A_0}
\Prob^{\calT}_{\kappa(s_0,\cdot)}(\Cyl(A_1,\ldots,A_n))\ud\mu(s_0)
\puncteq{.}
\]
The intuition is that we split the probability on the left-hand side into all
the possible ways to perform the first step in $A_0$, according to the initial
distribution $\mu$.
Function $\Prob_{\mu}^\calT$ can then be lifted to a unique probability measure on the $\sigma$-algebra generated by all the cylinders, using Ionescu-Tulcea extension theorem~\cite[Section~2.7.2]{Ash72}.

\paragraph{Expressing properties of runs}
To express properties of runs of $\calT$, we use standard notations taken from \LTL~\cite{Pnu77}.
If $B, B' \in \Sigma$, we write in particular $\F B$ (resp.\ $\F[\le n] B$, $B'\U B$, $B'\U[\le n] B$, $\GG B$, $\GG \F B$) for the set of runs that visit $B$ at some point (resp.\ visit $B$ in less than $n$ steps, stay in $B'$ until a first visit to $B$, stay in $B'$ until a first visit to $B$ in less than $n$ steps, always stay in $B$, visit $B$ infinitely often).
We postpone more formal definitions to~\ref{sec:app-defs}.
We will be especially interested in two kinds of reachability problems.

\begin{definition}[Qualitative and quantitative reachability]
\label{def:problems}
	Let $B\in\Sigma$ be a measurable set of target states, and $\mu\in\Dist(S)$
	be an initial distribution. The \emph{qualitative reachability problems}
	consist in deciding whether $\Prob^\calT_\mu(\F B) = 1$, and whether
	$\Prob^\calT_\mu(\F B) = 0$. The
	\emph{quantitative reachability problem} consists in deciding, given
	$\epsilon, p\in\IQ$ with $\epsilon > 0$, whether $\abs{\Prob^\calT_\mu(\F
	B) - p} < \epsilon$.
\end{definition}

\paragraph{Transforming probability distributions}
Another useful way to reflect on STSs is as transformers of
probability distributions on $(S,\Sigma)$.
\begin{definition}[STS as a transformer]
	For $\mu \in\Dist(S)$, its \emph{transformation through $\calT$} is the
	probability distribution $\Omega_\calT(\mu)\in\Dist(S)$ defined for
	$A\in\Sigma$ by
	\[
	\Omega_\calT(\mu)(A) = \int_{s\in S} \kappa(s, A)\, \ud\mu(s)
	= \Prob^\calT_\mu(\Cyl(S,A))
	\puncteq{.}
	\]
\end{definition}
The meaning of function $\Omega_\calT$ can be interpreted as
follows: $\Omega_\calT(\mu)(A)$ is the probability to reach $A$
in one step, from the initial distribution $\mu$.

\begin{definition}[Conditional distributions] \label{def:condDist}
       For $\mu \in\Dist(S)$ and $A \in \Sigma$ such that
       $\mu(A) > 0$, the \emph{conditional probability distribution of
       $\mu$ given $A$} is denoted as $\mu_A$ and is such that for
       $B\in\Sigma$, $\mu_A(B) = \mu(A\cap B)/\mu(A)$.

	Let $\calA = (A_i)_{0\le i\le n}\in\Sigma^{n+1}$. The \emph{conditional probability distribution of $\mu$ given $\calA$} is defined as $\mu_\calA = \mu_{A_0,\ldots,A_{n}}$ by induction: $\mu_{A_0}$
	is defined as above, and for $1\le j\le n$:
	\[
		\mu_{A_0,\ldots,A_j} =
		\Omega_\calT(\mu_{A_0,\ldots,A_{j-1}})_{A_j}
		\puncteq{.}
	\]
	To be well-defined, we require in addition that $\mu(A_0)>0$ and that for
	all $0\le j\le n - 1$, $\Omega_\calT(\mu_{A_0,\ldots,A_j})(A_{j + 1}) > 0$.
\end{definition}
The intuition is that $\mu_\calA$ is the conditional distribution on $A_n$ after normalizing and restricting at each step $i$ the distribution $\mu$ to
the set $A_i$.

We connect the two interpretations of the semantics of an STS: as an
object generating a measure on infinite runs, and as a transformer of
probability distributions. The next result originates from~\cite[Lemma~5]{BBBC18}.
\begin{lemma} \label{lem:splitCyl}
	Let $\mu\in\Dist(S)$ be an initial distribution and $(A_i)_{0\le i\le n}$
	be a sequence of measurable sets of states. For $0\le j\le n$, we denote by
	$\mu_j$ the
	conditional distribution $\mu_{A_0,\ldots,A_j}$, which we assume to be
	well-defined.
	Then for every $0\le j\le n$, we have
	\begin{align*}
	\Prob^\calT_\mu(\Cyl(A_0,\ldots,A_n)) &=
	\Prob^\calT_\mu(\Cyl(A_0,\ldots,A_j))\cdot
	\Prob^\calT_{\Omega_\calT{(\mu_j)}}(\Cyl(A_{j+1},\ldots,A_n)) \\
	&= \Prob^\calT_\mu(\Cyl(A_0,\ldots,A_j))\cdot
	\Prob^\calT_{\mu_j}(\Cyl(A_j,\ldots,A_n))
	\puncteq{.}
	\end{align*}
\end{lemma}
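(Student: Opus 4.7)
The plan is to reduce both equalities to two elementary identities about $\Prob^\calT$, and then combine them by induction on $j$. The two auxiliary identities are: \emph{(I) a conditioning identity}, stating that for any $\nu \in \Dist(S)$ with $\nu(A) > 0$ and measurable $B_1, \ldots, B_k$, $\Prob^\calT_\nu(\Cyl(A, B_1, \ldots, B_k)) = \nu(A) \cdot \Prob^\calT_{\nu_A}(\Cyl(A, B_1, \ldots, B_k))$, which follows by unfolding the outer integral in the inductive definition of $\Prob^\calT_\nu$ and using $\ud\nu_A = (1/\nu(A))\,\mathbf{1}_A\,\ud\nu$; and \emph{(II) a one-step push-forward identity}, stating that if $\nu$ is supported on $A_j$, then $\Prob^\calT_\nu(\Cyl(A_j, A_{j+1}, \ldots, A_n)) = \Prob^\calT_{\Omega_\calT(\nu)}(\Cyl(A_{j+1}, \ldots, A_n))$, which follows from the defining identity $\int f(s')\,\ud\Omega_\calT(\nu)(s') = \int\!\int f(s')\,\kappa(s, \ud s')\,\ud\nu(s)$ (extended from indicators to bounded measurable integrands by monotone convergence) applied with $f(s') = \mathbf{1}_{A_{j+1}}(s')\cdot \Prob^\calT_{\kappa(s',\cdot)}(\Cyl(A_{j+2}, \ldots, A_n))$.

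The second equality of the lemma is immediate from (II), since $\mu_j$ is by construction supported on $A_j$. For the first equality, I induct on $j$. In the base case $j=0$, identity (I) with $\nu = \mu$ and $A = A_0$ pulls out a factor $\mu(A_0) = \Prob^\calT_\mu(\Cyl(A_0))$, and identity (II) applied to $\mu_0$ rewrites the remaining $\Prob^\calT_{\mu_0}(\Cyl(A_0, A_1, \ldots, A_n))$ as $\Prob^\calT_{\Omega_\calT(\mu_0)}(\Cyl(A_1, \ldots, A_n))$. For the inductive step, starting from the IH at index $j-1$, I apply (I) to $\Omega_\calT(\mu_{j-1})$ with $A = A_j$ to extract a factor $\Omega_\calT(\mu_{j-1})(A_j)$ and to shift the distribution to $(\Omega_\calT(\mu_{j-1}))_{A_j} = \mu_j$ (by Definition~\ref{def:condDist}); (II) then pushes $\mu_j$ forward to $\Omega_\calT(\mu_j)$; finally, the IH reapplied to the shorter sequence $A_0, \ldots, A_j$ (split at index $j-1$) absorbs $\Prob^\calT_\mu(\Cyl(A_0, \ldots, A_{j-1})) \cdot \Omega_\calT(\mu_{j-1})(A_j)$ into $\Prob^\calT_\mu(\Cyl(A_0, \ldots, A_j))$.

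The computations are mostly bookkeeping; the only real measure-theoretic subtlety is establishing the push-forward identity for $\Omega_\calT$ on bounded measurable integrands, rather than merely on indicators, since this is what is needed to handle iterated cylinder probabilities of the form $\Prob^\calT_{\kappa(s,\cdot)}(\Cyl(\cdots))$ appearing inside the integrals. This extension is standard (monotone class or simple-function approximation), and the non-degeneracy hypotheses of Definition~\ref{def:condDist} ensure that every conditional distribution encountered in the induction is well-defined.
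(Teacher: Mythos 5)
Your proof is correct. Note that the paper itself gives no proof of this lemma---it is imported verbatim from the cited reference (\cite[Lemma~5]{BBBC18})---so there is no in-paper argument to compare against; your derivation is the natural reconstruction obtained by unfolding the inductive definition of $\Prob^\calT_\mu$ on cylinders. Both of your auxiliary identities check out: (I) follows because $\nu_A$ has density $\ind{A}/\nu(A)$ with respect to $\nu$, and (II) follows from the change-of-variables formula $\int f\,\ud\Omega_\calT(\nu) = \int\int f(s')\,\kappa(s,\ud s')\,\ud\nu(s)$ together with $\nu(A_j)=1$, the inner integral being exactly $\Prob^\calT_{\kappa(s,\cdot)}(\Cyl(A_{j+1},\ldots,A_n))$. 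The one step worth making explicit is the last move of your inductive step: applying the induction hypothesis at index $j-1$ to the \emph{truncated} sequence $A_0,\ldots,A_j$ to rewrite $\Prob^\calT_\mu(\Cyl(A_0,\ldots,A_{j-1}))\cdot\Omega_\calT(\mu_{j-1})(A_j)$ as $\Prob^\calT_\mu(\Cyl(A_0,\ldots,A_j))$ is legitimate only if your induction statement is quantified over all lengths $n\ge j$ and all sequences (which costs nothing, since $\mu_{j-1}$ depends only on $A_0,\ldots,A_{j-1}$, and well-definedness for the truncated sequence is implied); as written this is implicit, so state the quantification when you set up the induction. Your remark about extending the push-forward identity from indicators to bounded measurable integrands, and the measurability of $s\mapsto\Prob^\calT_{\kappa(s,\cdot)}(\Cyl(\cdots))$, are indeed the only measure-theoretic points, and both are standard.
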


\paragraph{Attractors}
We will be particularly interested in the existence of \emph{attractors} for STSs.

\begin{definition}[Attractor]
	A set $A\in\Sigma$ is an \emph{attractor for $\calT$} if for every
	$\mu\in\Dist(S)$, $\Prob^{\calT}_{\mu}(\F A)=1$.
\end{definition}

While $S$ is always an attractor for $\calT$, we will later search for
attractors with more interesting properties.
The definition of \emph{attractor} actually implies a seemingly
stronger statement: an attractor is almost surely visited
\emph{infinitely often} from any initial distribution.

\begin{lemma}[{\cite[Lemma~19]{BBBC18}}] \label{lem:AttrInfOften}
	Let $A$ be an attractor for $\calT$. Then, for every
	initial distribution $\mu\in\Dist(A)$, $\Prob^\calT_\mu(\GG\F
    A) = 1$.
\end{lemma}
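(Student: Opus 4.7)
The plan is to prove the equivalent claim $\Prob^\calT_\mu(\F\GG\comp{A}) = 0$. The event $\F\GG\comp{A}$ decomposes as a countable union $\bigcup_{n\ge 0} E_n$, where $E_n = \{s_0 s_1 \ldots \in \Runs(\calT) \mid \forall m \ge n,\ s_m \notin A\}$, so by countable subadditivity it suffices to show $\Prob^\calT_\mu(E_n) = 0$ for each fixed $n$.

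The conceptual core is a time-shift reduction: never visiting $A$ from step $n$ onwards should have the same probability as never visiting $A$ starting from the initial distribution $\Omega_\calT^n(\mu)$ obtained after performing $n$ steps of $\calT$, and the attractor hypothesis applied to this new initial distribution forces that probability to be $0$. To make this rigorous, I would approximate $E_n$ from above by the decreasing cylinders
\[C_{n,k} = \Cyl(\underbrace{S,\ldots,S}_{n \text{ copies}}, \underbrace{\comp{A},\ldots,\comp{A}}_{k \text{ copies}}),\]
so that continuity of measure from above yields $\Prob^\calT_\mu(E_n) = \lim_{k\to\infty}\Prob^\calT_\mu(C_{n,k})$. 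Lemma~\ref{lem:splitCyl}, applied with split index $j = n-1$, factors each $\Prob^\calT_\mu(C_{n,k})$ as $\Prob^\calT_\mu(\Cyl(S,\ldots,S)) \cdot \Prob^\calT_{\Omega_\calT^n(\mu)}(\Cyl(\comp{A},\ldots,\comp{A}))$; the first factor equals $1$, and an easy induction using Definition~\ref{def:condDist} verifies that the conditional distribution $\mu_{S,\ldots,S}$ (with $n$ copies of $S$) is well-defined and coincides with $\Omega_\calT^{n-1}(\mu)$, so that $\Omega_\calT(\mu_{S,\ldots,S}) = \Omega_\calT^n(\mu)$. Passing $k\to\infty$ and observing that the decreasing cylinders $\Cyl(\comp{A},\ldots,\comp{A})$ intersect to $\GG\comp{A} = \Runs(\calT) \setminus \F A$ gives
\[\Prob^\calT_\mu(E_n) = \Prob^\calT_{\Omega_\calT^n(\mu)}(\GG\comp{A}) = 1 - \Prob^\calT_{\Omega_\calT^n(\mu)}(\F A) = 0,\]
the last equality by applying the attractor hypothesis to the initial distribution $\Omega_\calT^n(\mu)$.

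The only real obstacle is technical bookkeeping: matching the indices in Lemma~\ref{lem:splitCyl}, checking well-definedness of the conditional distributions, and the monotone passage from finite cylinders to the tail event $\GG\comp{A}$. The conceptual content is simply that the attractor property is uniform over all initial distributions, so it applies in particular to the $n$-shifted distribution $\Omega_\calT^n(\mu)$, which closes the argument for every $n$, and hence via the initial union bound for $\F\GG\comp{A}$ as a whole.
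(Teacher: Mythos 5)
Your proof is correct. Note that the paper does not prove this lemma internally---it is imported verbatim from \cite[Lemma~19]{BBBC18}---so there is no in-paper argument to compare against; your route (write the complement $\F\GG\comp{A}$ as the countable union of the events $E_n$, use Lemma~\ref{lem:splitCyl} with the trivial prefix $S,\ldots,S$ to identify $\Prob^\calT_\mu(E_n)$ with $\Prob^\calT_{\Omega_\calT^n(\mu)}(\GG\comp{A})$, and then apply the attractor hypothesis to the shifted distribution $\Omega_\calT^n(\mu)$) is exactly the standard shift/Markov-property argument one would expect behind the cited result, and all the measure-theoretic steps (continuity from above along the decreasing cylinders, well-definedness of $\mu_{S,\ldots,S}=\Omega_\calT^{n-1}(\mu)$ since conditioning on $S$ is trivial) are sound. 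Two cosmetic remarks: the case $n=0$ does not fit your ``split at $j=n-1$'' phrasing (there is no prefix to split off), but it is immediate since $\Prob^\calT_\mu(E_0)=1-\Prob^\calT_\mu(\F A)=0$ directly from the attractor definition; and your argument never uses the restriction $\mu\in\Dist(A)$ in the statement, so you in fact establish the (equally true) stronger claim for every $\mu\in\Dist(S)$, which is precisely why the attractor property can be invoked for the pushforward distributions $\Omega_\calT^n(\mu)$, whose support need not lie in $A$.
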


\subsection{Decisiveness} \label{sec:dec}
Before introducing decisiveness,
we give the definition of an \emph{avoid-set}: for
$B \in \Sigma$, its \emph{avoid-set} is written as
$\Btilde = \{s \in S \mid \Prob_{\delta_s}^{\calT}(\F B) = 0\}$ (where $\delta_s$ is the Dirac distribution at $s$). The
avoid-set $\Btilde$ corresponds to the set of states from which
executions almost surely stay out of $B$ \textit{ad infinitum}. One
can show that the set $\widetilde{B}$ belongs to the $\sigma$-algebra
$\Sigma$~\cite[Lemma~14]{BBBC18}. We can now define the concept of
decisiveness as in~\cite{BBBC18}.

\begin{definition}[Decisiveness]
	Let $B\in\Sigma$ be a measurable set. We say that $\calT$ is \emph{decisive
	w.r.t.\ $B$} if for every $\mu\in\Dist(S)$,
	$\Prob^{\calT}_\mu(\F B \lor \F \Btilde) = 1$.
\end{definition}
Intuitively, the decisiveness property states that, almost surely,
either $B$ will eventually be visited, or states from which $B$ can no
longer be reached will eventually be visited.

\begin{example}[Random walk]
	We consider the random walk $\calT$ from Figure~\ref{fig:randomWalk}.
	We want to find out whether $\calT$ is decisive
	w.r.t.\ $B = \{0\}$. We assume that the initial distribution is given by
	$\delta_1$ (the Dirac distribution at $1$). By the theory on random walks,
	we know that if $\frac{1}{2} < p < 1$, the walk will almost surely diverge to
	$\infty$. This entails that $\Prob^\calT_{\delta_1}(\F B) < 1$ and
	$\Prob^\calT_{\delta_1}(\GG\F B) = 0$.
	Moreover, since $p < 1$, there is a path with
	positive probability from every state to $0$, so $\Btilde =	\emptyset$.
	Therefore,
	$
	\Prob^\calT_{\delta_1}(\F B \lor \F\Btilde) = \Prob^\calT_{\delta_1}(\F B)
	< 1\puncteq{,}
	$
	which means that $\calT$ is not decisive w.r.t.\ $B$. If $p\le\frac{1}{2}$, we have that
	$\Prob^\calT_{\delta_1}(\F B) = 1$. Hence, in this case, STS
	$\calT$ is decisive w.r.t.\ $B$.
	\begin{figure}[tbh]
		\centering
		\includegraphics[width=0.4\columnwidth]{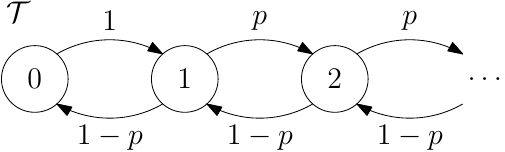}
		\caption{Random walk on $\IN$.}
		\label{fig:randomWalk}
	\end{figure}
\end{example}

A major interest of the decisiveness concept lies in the design of
simple procedures for the qualitative and quantitative analysis of stochastic
systems.
Indeed, as presented in~\cite{ABM07,BBBC18}, it allows for instance to compute,
under some effectiveness hypotheses, arbitrary close approximations of the
probabilities of various properties, like reachability, repeated
reachability, and even arbitrary $\omega$-regular properties. We refer
to~\cite[Sections~6 \&~7]{BBBC18} for more details, but briefly recall the
approximation scheme for reachability properties in order to illustrate the usefulness of the decisiveness property. This scheme will be applied to a specific class of STSs in Section~\ref{sec:quantReachDec}.

Let $B \in \Sigma$ be a measurable set and $\mu \in \Dist(S)$ be an
initial distribution. To compute an approximation of
$\Prob_\mu^\calT(\F B)$, we define two sequences $(\pnYes)_{n\in\IN}$
and $(\pnNo)_{n\in\IN}$ such that for $n\in\IN$,
\[
\pnYes = \Prob_\mu^\calT(\F[\leq n] B)\ \text{and}\
\pnNo = \Prob_\mu^\calT(\comp{B} \U[\leq n]{\Btilde})\puncteq{.}
\]
These sequences are non-decreasing and converge respectively to
$\Prob_\mu^\calT(\F B)$ and $\Prob_\mu^\calT(\comp{B} \U \Btilde)$.
Observe moreover that for all $n \in \IN$, we have that
\[
\pnYes \leq \Prob_\mu^\calT(\F B) \leq 1 - \pnNo\puncteq{.}
\]

The main idea behind decisiveness of STSs lies in the following
property~\cite{ABM07,BBBC18}: $\calT$ is decisive w.r.t.\ $B$ if and only if
\[
\lim_{n \to \infty} \pnYes + \pnNo = 1\puncteq{.}
\]
The equivalence comes from the fact that $\lim_{n \to \infty} \pnYes + \pnNo = \Prob^{\calT}_\mu(\F B \lor \F \Btilde)$~\cite[Proposition~105]{BBBC18}.
In situations where $\pnYes$ and $\pnNo$ can be effectively
approximated arbitrarily closely and $\calT$ is decisive w.r.t.\
$B$, we can thus approximate
$\Prob_\mu^\calT(\F B)$ up to any desired error bound.
Decisiveness is therefore \emph{exactly} the property required to guarantee the termination and correctness of this approximation scheme.

\subsection{A New Criterion for Decisiveness} \label{sec:mainDecCrit}
Our goal is to provide new sufficient conditions for
decisiveness of STSs. To this end, we expose the following crucial lemma.
For $\calA = (A_i)_{0\le i\le n}\in \Sigma^{n+1}$, we denote by
$\phi_\calA$ the \LTL formula
$A_0\land\X A_1\land\ldots\land\X[n] A_n$, where \X is the standard
``next'' modality, and \X[n] its $n$\textsuperscript{th} iterate. The missing
proofs for this section are provided in~\ref{sec:app-sts}.

\begin{restatable}{lemma}{GBcGFAGen} \label{lem:GBcGFAGen}
   	Let $B \in \Sigma$, and
    $\calA = (A_i)_{0\le i\le n}\in \Sigma^{n+1}$. Suppose that
    there is $p > 0$ such that for all $\nu \in \Dist(S)$ with $\nu_\calA$
    well-defined, we have
    $\Prob^\calT_{\nu_\calA}(\F B) \geq p$. Then for any
    $\mu \in\Dist(S)$,
	\[
	\Prob^\calT_\mu(\GG \comp{B} \land
	\GG\F \phi_\calA) = 0\puncteq{.}
	\]
\end{restatable}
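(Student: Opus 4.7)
The plan is to derive a contradiction on the event $\GG\comp{B} \land \GG\F\phi_\calA$ by combining a one-step consequence of the hypothesis (via the strong Markov property at a suitable stopping time) with two martingale convergence results. Let $g(s) = \Prob^\calT_{\delta_s}(\F B)$ and $h(s) = \Prob^\calT_{\delta_s}(\GG\comp{B} \land \GG\F\phi_\calA)$, and let $(T_j)_{j\ge 1}$ be the stopping times marking the successive (sufficiently separated) completions of the pattern $\phi_\calA$, so that on $\GG\F\phi_\calA$ each $T_j$ is almost surely finite. Let $\mathcal{F}_k = \sigma(s_0,\ldots,s_k)$ denote the natural filtration on runs.

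I first establish the one-step estimate
\[
\Prob^\calT_\nu\bigl(\GG\comp{B} \cap \{T_j<\infty\}\bigr) \le (1-p)\,\Prob^\calT_\nu(T_j<\infty) \qquad (\nu\in\Dist(S),\ j\ge 1).
\]
The strong Markov property at $T_j$ gives $\Prob^\calT_\nu(\GG\comp{B}\text{ after }T_j\mid\mathcal{F}_{T_j}) = 1-g(s_{T_j})$ on $\{T_j<\infty\}$, so the bound reduces to $\mathbb{E}^\calT_\nu[g(s_{T_j})\mathbf{1}_{T_j<\infty}] \ge p\,\Prob^\calT_\nu(T_j<\infty)$. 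This in turn is a consequence of the hypothesis applied conditionally on the state $s_{T_j-n}$ at which the completing pattern started: decomposing the expectation according to the time $T_j=t$ and to $s_{t-n}=s$, and using the Markov property at the deterministic time $t-n$, the law of $s_t$ given that the pattern occurs is $(\delta_s)_\calA$ by definition, and $\int g\,\ud(\delta_s)_\calA = \Prob^\calT_{(\delta_s)_\calA}(\F B) \ge p$ whenever this measure is defined.

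Passing to the limit $j\to\infty$ on the decreasing sequence $\{T_j<\infty\}$ (whose intersection is $\GG\F\phi_\calA$) yields $\Prob^\calT_\nu(\F B\cap\GG\F\phi_\calA)\ge p\,\Prob^\calT_\nu(\GG\F\phi_\calA)$; combined with $\Prob^\calT_\nu(\F B\cap\GG\F\phi_\calA)\le \int g\,\ud\nu$, this gives the pointwise comparison $h(s)\le g(s)/p$ for every $s\in S$. Two martingale convergences then close the argument. Since $h$ is $\kappa$-harmonic on $\comp{B}$ with $h|_B=0$, the process $M_k := \mathbf{1}_{\{s_0,\ldots,s_k\in\comp{B}\}}\,h(s_k)$ is a bounded martingale coinciding with $\mathbb{E}^\calT_\mu[\mathbf{1}_{\GG\comp{B}\cap\GG\F\phi_\calA}\mid\mathcal{F}_k]$; L\'evy's upward theorem gives $M_k\to\mathbf{1}_{\GG\comp{B}\cap\GG\F\phi_\calA}$ almost surely, so $h(s_{T_j})\to 1$ a.s.\ on $\GG\comp{B}\cap\GG\F\phi_\calA$. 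Separately, optional stopping for the bounded martingale $g(s_{k\wedge\sigma_B})$ (with $\sigma_B$ the hitting time of $B$) forces $g(s_k)\to 0$ a.s.\ on $\GG\comp{B}$. Substituting into $h(s_{T_j})\le g(s_{T_j})/p$ yields $1\le 0$ on $\GG\comp{B}\cap\GG\F\phi_\calA$, so this event has $\Prob^\calT_\mu$-measure zero.

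The main obstacle I anticipate is the first step: handling the strong Markov property at the random pattern-completion times $T_j$ requires careful measurability bookkeeping and a clean treatment of the exceptional locus on which $(\delta_s)_\calA$ is undefined (the pattern cannot be completed from $s$). In particular, the interaction of the ``no earlier completion'' condition with possibly overlapping pattern starts may force one to separate consecutive $T_j$'s by more than $n$ steps in order to decouple the past-present constraints on the intermediate states; once the one-step estimate is in place, the martingale machinery in the last step is standard.
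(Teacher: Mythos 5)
Your closing machinery is sound: the pointwise bound $h(s)\le g(s)/p$, together with L\'evy's zero-one law applied along the natural filtration to the two events $\GG\comp{B}\land\GG\F\phi_\calA$ (giving $h(s_k)\to 1$ a.s.\ on that event) and $\F B$ (giving $g(s_k)\to 0$ a.s.\ on $\GG\comp{B}$), does yield the contradiction and would finish the proof; this is a genuinely different assembly from the paper, which instead makes $B$ absorbing and applies L\'evy once to the tail event $\F\GG B$. The gap is exactly where you anticipate it, in the one-step estimate, and your proposed remedy does not close it. For $n\ge 2$, the event $\{T_j=t\}$ is not ``past up to time $t-n$'' intersected with the plain pattern event on the window $[t-n,t]$: the requirement that $t$ be the $j$-th (or first-after-$T_{j-1}$) completion imposes \emph{negative} constraints (``no completion at $t'\in\{t-n+1,\dots,t-1\}$'') whose windows overlap the final one, hence constrain the intermediate states $s_{t-n+1},\dots,s_{t-1}$ beyond membership in $A_1,\dots,A_{n-1}$. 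Consequently the conditional law of $s_t$ given $\{T_j=t,\ s_{t-n}=s\}$ is \emph{not} $(\delta_s)_\calA$ (when the $A_i$ overlap, the extra conditioning can skew the endpoint law toward states where $g$ is small), and the hypothesis---which only speaks of distributions of the exact form $\nu_\calA$---cannot be invoked. Merely separating consecutive $T_j$'s by more than $n$ steps does not help, since the ``first completion after\dots'' condition still reaches into the final window; and if you instead choose trial windows in a past-measurable way (start a fresh $n$-step window at the next visit to $A_0$ after the previous window closes), the conditioning becomes clean but you lose the other half of the argument: on $\GG\F\phi_\calA$ such trials may systematically miss the actual completions, so $\bigcap_j\{T_j<\infty\}$ need no longer cover $\GG\F\phi_\calA$, and ``infinitely many successful trials'' is itself a statement of the same difficulty as the lemma. (For $n\le 1$ your decomposition is fine, since the negative constraints then only touch the window's start state.)

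For comparison, the paper resolves precisely this tension without stopping times: it works in the $B$-absorbing copy $\calT_B$, where $\F\GG B$ is a tail event, and conditions at \emph{every} deterministic time with a modified filtration whose generators are cylinders $\Cyl(B_0,\dots,B_{i-1},A_1,\dots,A_n)$, i.e., the observation already includes the next $n$ pattern sets; the analogue of Lemma~\ref{lem:tailCond} then identifies the conditional probability of the tail event at time $i$ with $\Prob^\calT_{(\delta_{s_{i-1}})_{\calA(\rho_{\ge i-1})}}(\cdot)$, which is exactly a distribution of the form $\nu_\calA$ to which the hypothesis applies whenever the run is currently traversing the pattern. If you want to keep your two-martingale ending, you would need to replace your stopping-time step by an argument of this kind (conditioning at deterministic times through a filtration that sees the pattern window), at which point you are essentially reproducing the paper's key lemma; as it stands, the one-step estimate is unproven for general $n$ and the proposal is incomplete.
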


This result seems rather intuitive: if we go infinitely often through the sequence $\calA$ in order, and after every passage through $\calA$ we have a probability bounded from below to reach $B$, then the probability to stay in $\comp{B}$ forever is $0$.
An equivalent statement for Markov chains has been used without proof in \cite[Lemmas~3.4 \&~3.7]{ABM07}.
A weaker version of this statement is given as part of the proof of~\cite[Proposition~36]{BBBC18}, where it is said that this general case was not known to be true or false.
This weaker version assumes that there is a uniform upper bound $k$ such that for all $\nu\in\Dist(S)$, $\Prob^\calT_{\nu_\calA}(\F[\le k] B) \geq p$ to obtain a similar conclusion.
We have removed the need for this constraint.

A possible proof of this result consists of regarding a stochastic transition system as a stochastic process, and resorting to results from martingale theory.
More precisely, using \textit{L\'evy's zero-one law}, we obtain that infinite runs that never reach $B$ are the same (up to a set of probability $0$) as the infinite runs $s_0s_1\ldots$ for which the probability to reach $B$ given $s_0\ldots s_n$ converges to $0$ as $n$ grows to infinity.
Runs that go through $\calA$ infinitely often cannot both avoid $B$ and have a probability to visit $B$ that converges to $0$ (since for every passage through $\calA$, the probability to visit $B$ is bounded from below by $p > 0$).
Therefore, such runs will almost surely visit $B$.

\paragraph{Main decisiveness criterion}
We can now state our main contribution to decisiveness.

\begin{restatable}[Decisiveness criterion]{proposition}{decCrit} \label{prop:mainDecCritGen}
  Let $B\in\Sigma$ and
  $m\in\IN\setminus\{0\}\cup\{\infty\}$. For every $0 \le j < m$, let $n_j \in \IN$
  and $\calA_j = (A^{(j)}_i)_{0\le i\le n_j} \in \Sigma^{n_j+1}$. We
  assume that for all $\nu\in\Dist(S)$,
	\[
	\Prob^\calT_\nu\Big(\bigvee_{0\le j< m}\GG\F \phi_{\calA_j}\Big) = 1
	\puncteq{.}
	\]
	For $\nu\in\Dist(S)$, when well-defined, we set $\nu_j^* = (\nu_{\calA_j})_{\comp{(\Btilde)}}$ for the conditional distribution of $\nu$ given sequence $\calA_j$, which is then conditioned on $\comp{(\Btilde)}$ (as defined in Definition~\ref{def:condDist}).
	Assume that there exists $p > 0$ such that for all $0\le j< m$,
	for all $\nu\in\Dist(S)$ such that $\nu_{\calA_j}$ is well-defined, either
	$\nu_{\calA_j}(\Btilde) = 1$ or
	\[
	\Prob_{\nu_j^*}^\calT(\F B)\ge p
	\puncteq{.}
	\]
	Then $\calT$ is decisive w.r.t.\ $B$.
\end{restatable}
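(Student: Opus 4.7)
The plan is to show that $\Prob^\calT_\mu(\GG\comp{B}\land\GG\comp{\Btilde}) = 0$ for every $\mu\in\Dist(S)$, which is the logical complement of decisiveness w.r.t.\ $B$. Since by hypothesis the event $\bigvee_{0\le j<m}\GG\F\phi_{\calA_j}$ has probability $1$ and is a countable disjunction, a union bound reduces the task to proving, for each $j$,
\[
\Prob^\calT_\mu(\GG\comp{B}\land\GG\comp{\Btilde}\land\GG\F\phi_{\calA_j}) = 0.
\]

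The natural attempt is to apply Lemma~\ref{lem:GBcGFAGen} directly to $\calA_j$, but this requires a uniform lower bound on $\Prob^\calT_{\nu_{\calA_j}}(\F B)$, whereas the hypothesis only bounds $\Prob^\calT_{(\nu_{\calA_j})_{\comp{\Btilde}}}(\F B)$---a strictly smaller quantity that can vanish along sequences of $\nu$ for which $\nu_{\calA_j}(\Btilde)$ tends to $1$. To absorb this conditioning into the structure of the sequence, I would introduce the modified tuple $\calA_j' = (A^{(j)}_0, \ldots, A^{(j)}_{n_j-1}, A^{(j)}_{n_j}\cap\comp{\Btilde})$. Any run satisfying $\GG\comp{\Btilde}\land\GG\F\phi_{\calA_j}$ also satisfies $\GG\F\phi_{\calA_j'}$, because every visit to $\calA_j$ along such a run has its endpoint in $\comp{\Btilde}$. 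Hence it suffices to prove $\Prob^\calT_\mu(\GG\comp{B}\land\GG\F\phi_{\calA_j'}) = 0$.

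A brief unwinding of Definition~\ref{def:condDist} shows that whenever $\nu_{\calA_j'}$ is well-defined, $\nu_{\calA_j}$ is also well-defined and $\nu_{\calA_j'} = (\nu_{\calA_j})_{\comp{\Btilde}}$; moreover, well-definedness of $\nu_{\calA_j'}$ forces $\nu_{\calA_j}(\comp{\Btilde}) > 0$, excluding the degenerate case $\nu_{\calA_j}(\Btilde) = 1$ of the proposition's hypothesis. The remaining case of that hypothesis then yields $\Prob^\calT_{\nu_{\calA_j'}}(\F B)\ge p$, which is exactly the uniform lower bound required by Lemma~\ref{lem:GBcGFAGen}. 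Applying the lemma to $\calA_j'$ closes each summand and therefore the whole argument. The main obstacle is precisely this bridging step---translating the conditional bound supplied by the hypothesis into the plain bound needed by Lemma~\ref{lem:GBcGFAGen}---and the trick of folding the restriction to $\comp{\Btilde}$ into the last coordinate of $\calA_j$ is the sole ingredient that resolves it.
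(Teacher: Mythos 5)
Your proposal is correct and follows essentially the same route as the paper's proof: reduce decisiveness to $\Prob^\calT_\mu(\GG\comp{B}\land\GG\comp{(\Btilde)})=0$, use the almost-sure disjunction and a union bound, fold the restriction to $\comp{(\Btilde)}$ into the last set of each tuple to form $\calA_j'$, observe $\nu_{\calA_j'}=(\nu_{\calA_j})_{\comp{(\Btilde)}}$, and apply Lemma~\ref{lem:GBcGFAGen} to each $\calA_j'$. No substantive differences.
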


In order to provide an intuitive understanding of this proposition, we
instantiate its statement with $m = 1$, $n_0 = 0$.

\begin{corollary} \label{cor:mainDecCrit}
  Let $B\in\Sigma$ be a
  measurable set, and $A\in\Sigma$ be an attractor for $\calT$. We
  denote by $A' = A\cap\comp{(\Btilde)}$ the set of states of $A$ from
  which $B$ is reachable with a positive probability. Assume that
  there exists $p > 0$ such that for all $\nu\in\Dist(A')$,
  $\Prob_\nu^\calT(\F B)\ge p$. Then $\calT$ is decisive w.r.t.\ $B$.
\end{corollary}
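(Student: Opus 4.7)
The plan is to derive Corollary~\ref{cor:mainDecCrit} by directly instantiating Proposition~\ref{prop:mainDecCritGen} with $m = 1$, $n_0 = 0$, and $\calA_0 = (A)$. Under this instantiation, the \LTL formula $\phi_{\calA_0}$ simply asserts membership in $A$, so the two hypotheses of the proposition reduce to: (i) $\Prob^\calT_\nu(\GG\F A) = 1$ for all $\nu \in \Dist(S)$, and (ii) for every $\nu \in \Dist(S)$ with $\nu(A) > 0$, either $\nu_A(\Btilde) = 1$ or $\Prob^\calT_{(\nu_A)_{\comp{(\Btilde)}}}(\F B) \ge p$. Once both are verified, the proposition yields decisiveness w.r.t.\ $B$ immediately.

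For (i), since $A$ is an attractor, the definition directly gives $\Prob^\calT_\nu(\F A) = 1$ for any $\nu \in \Dist(S)$. The strengthening to infinitely often does not appear verbatim in Lemma~\ref{lem:AttrInfOften} (which requires $\mu \in \Dist(A)$), so it must be bootstrapped. I would split a run at its first visit to $A$ using a cylinder decomposition in the spirit of Lemma~\ref{lem:splitCyl}: conditioning on the state at that first visit, the tail is distributed according to some measure in $\Dist(A)$, which by Lemma~\ref{lem:AttrInfOften} almost surely satisfies $\GG\F A$. Integrating over the almost sure event $\F A$ gives $\Prob^\calT_\nu(\GG\F A) = 1$ as required.

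For (ii), fix $\nu\in\Dist(S)$ with $\nu(A) > 0$, so that $\nu_A$ is well-defined and supported on $A$. If $\nu_A(\Btilde) = 1$ the condition holds trivially; otherwise $\nu_A(\comp{(\Btilde)}) > 0$, so $(\nu_A)_{\comp{(\Btilde)}}$ is well-defined and supported on $A \cap \comp{(\Btilde)} = A'$, i.e., $(\nu_A)_{\comp{(\Btilde)}} \in \Dist(A')$. The hypothesis of the corollary then yields $\Prob^\calT_{(\nu_A)_{\comp{(\Btilde)}}}(\F B) \ge p$, establishing the second hypothesis of the proposition with the same constant $p$.

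The only genuinely non-routine step is the bootstrapping from attractor to ``infinitely often visited from any initial distribution'' in (i); the remainder is a direct unpacking of definitions. Once (i) and (ii) are in place, Proposition~\ref{prop:mainDecCritGen} delivers decisiveness of $\calT$ w.r.t.\ $B$.
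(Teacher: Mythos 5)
Your proof is correct and takes essentially the same route as the paper, which presents the corollary precisely as the instantiation of Proposition~\ref{prop:mainDecCritGen} with $m=1$, $n_0=0$, $\calA_0=(A)$, and whose implicit verification of the two hypotheses matches yours (including the reduction of hypothesis (ii) to distributions in $\Dist(A')$). Your bootstrapping of $\Prob^\calT_\nu(\GG\F A)=1$ for arbitrary $\nu\in\Dist(S)$ is sound and in fact handles a detail the paper glosses over, since it cites Lemma~\ref{lem:AttrInfOften} as if it applied to any initial distribution rather than only to $\mu\in\Dist(A)$.
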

With probability $1$, every run visits attractor $A$ infinitely
often (Lemma~\ref{lem:AttrInfOften}), but the hypotheses imply a dichotomy
between runs. Some runs will reach a state of $A$ from which $B$ is
almost surely non-reachable, and will end up in $\Btilde$. The other
runs will go infinitely often through states of $A$ such that the
probability of reaching $B$ is lower bounded by $p$ (i.e., states of
$A'$), and will almost surely visit $B$ by Lemma~\ref{lem:GBcGFAGen}.
This almost-sure dichotomy between runs is required to show
decisiveness. In the more general statement of
Proposition~\ref{prop:mainDecCritGen}, instead of a simple attractor
$A$, we assume that we visit infinitely often some finite sequences of
sets of states. This allows us to have a weaker assumption on the
probability lower bound $p$; it is enough to obtain this lower bound
from more specific conditional distributions.
This generalization will be crucial when applying this criterion to a specific class of STSs in Section~\ref{sec:decisiveSHS}.

This criterion strictly generalizes those used in the
literature:~\cite[Lemmas~3.4 \&~3.7]{ABM07} and~\cite[Propositions~36
\&~37]{BBBC18} (see proofs page~\pageref{app:prop:gen1}). The criterion
in~\cite[Lemma~3.4]{ABM07} assumes the existence of a finite attractor; the
criteria in~\cite[Propositions~36 \&~37]{BBBC18} assume some finiteness
property in an abstraction (see next section), which we do not.
In~\cite[Lemma~3.7]{ABM07}, a similar kind of property as ours is required from
all the states of the STSs, not only from some specific distributions.

\subsection{Abstractions of Stochastic Transitions Systems} \label{sec:abstractions}
Decisiveness and abstractions are deeply intertwined concepts, so we
briefly recall this notion~\cite{BBBC18} and related properties. We
let $\calT_1 = (S_1,\Sigma_1,\kappa_1)$ and
$\calT_2 = (S_2,\Sigma_2,\kappa_2)$ be two STSs, and
$\alpha\colon (S_1,\Sigma_1) \to (S_2,\Sigma_2)$ be a measurable
function. We say that a set $B \in \Sigma_1$ is \emph{$\alpha$-closed}
if $B = \alpha^{-1}(\alpha(B))$ (in other words, if for all $s \notin B$, $\alpha(s) \notin \alpha(B)$). To mean that $B$ is
$\alpha$-closed, we also say that $\alpha$ is \emph{compatible} with $B$.
Following~\cite{Bog07,GBK16}, we
define a natural way to transfer measures from $(S_1,\Sigma_1)$ to
$(S_2,\Sigma_2)$ through $\alpha$: the \emph{pushforward of $\alpha$}
is the function $\alpha_\# \colon \Dist(S_1) \to \Dist(S_2)$ such that
$ \alpha _\#(\mu)(B) = \mu(\alpha^{-1}(B)) $ for every
$\mu \in\Dist(S_1)$ and for every $B \in \Sigma_2$.

\begin{definition}[$\alpha$-abstraction]
	STS $\calT_2$ is an \emph{$\alpha$-abstraction} of $\calT_1$ if for
	all $\mu \in \Dist(S_1)$,
	\[
	\alpha_{\#}(\Omega_{\calT_1}(\mu))\ \text{is
	qualitatively equivalent to} \ \Omega_{\calT_2}(\alpha_{\#}(\mu))
	\puncteq{.}
	\]
\end{definition}
Informally, the two STSs have the same ``qualitative'' single steps.
Later, one may speak of \emph{abstraction} instead of
$\alpha$-abstraction if $\alpha$ is clear in the context.

Notice that $\alpha$ induces a natural equivalence relation $\sim_\alpha$ on
$S_1$: for $s,s'\in S_1$, $s\sim_\alpha s'$ if and only if $\alpha(s) = \alpha(s')$.
When the index of $\sim_\alpha$ (i.e., its number of equivalence classes) is countable, we can characterize when $\alpha$ can be used to define an \emph{$\alpha$-abstraction}.

\begin{lemma}
	If $\sim_\alpha$ has countable index,
	there is an $\alpha$-abstraction of $\calT_1$ if and only if
	for all $P\in \quotient{S_1}{\sim_\alpha}$, $\Pre^{\calT_1}(P)$ is
	$\alpha$-closed.
\end{lemma}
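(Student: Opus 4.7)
The plan is to prove the two implications separately: the forward direction by probing the abstraction with Dirac distributions to isolate individual equivalence classes, and the backward direction by explicitly constructing a candidate kernel $\kappa_2$ on $(S_2, \Sigma_2)$ via pushforward.

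For $(\Rightarrow)$, assume $\calT_2$ is an $\alpha$-abstraction of $\calT_1$. Fix a class $P \in \quotient{S}{\sim_\alpha}$ and take $s, s' \in S_1$ with $\alpha(s) = \alpha(s')$. Because $\alpha_\#(\delta_s) = \delta_{\alpha(s)} = \alpha_\#(\delta_{s'})$, the distributions $\Omega_{\calT_2}(\alpha_\#(\delta_s))$ and $\Omega_{\calT_2}(\alpha_\#(\delta_{s'}))$ coincide. The qualitative-equivalence clause of the abstraction definition, applied separately to $\mu = \delta_s$ and $\mu = \delta_{s'}$, then forces $\alpha_\#(\Omega_{\calT_1}(\delta_s))$ and $\alpha_\#(\Omega_{\calT_1}(\delta_{s'}))$ to be qualitatively equivalent to this common distribution, hence to each other. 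Writing $P = \alpha^{-1}(\{\alpha(s)\})$ and evaluating on the measurable singleton $\{\alpha(s)\}$ yields $\kappa_1(s, P) > 0 \Leftrightarrow \kappa_1(s', P) > 0$, i.e., $\Pre^{\calT_1}(P)$ is $\alpha$-closed.

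For $(\Leftarrow)$, assume $\Pre^{\calT_1}(P)$ is $\alpha$-closed for every class $P$. I build a Markov kernel by picking a (measurable) section $r\colon S_2 \to S_1$ of $\alpha$ and setting, for each $t \in S_2$ and $B \in \Sigma_2$, $\kappa_2(t, B) := \kappa_1(r(t), \alpha^{-1}(B))$. A direct change-of-variables computation then gives
\[
\Omega_{\calT_2}(\alpha_\#(\mu))(B) = \int_{S_1} \kappa_1(r(\alpha(s)), \alpha^{-1}(B))\, \ud\mu(s), \qquad \alpha_\#(\Omega_{\calT_1}(\mu))(B) = \int_{S_1}\kappa_1(s, \alpha^{-1}(B))\,\ud\mu(s).
\]
The two integrands have the same positivity set as functions of $s$ precisely when $\Pre^{\calT_1}(\alpha^{-1}(B))$ is $\alpha$-closed, in which case the two integrals are simultaneously positive for every $\mu$. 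To transfer the hypothesis from individual classes to $\alpha^{-1}(B)$, I decompose $\alpha^{-1}(B) = \bigsqcup_{P \subseteq \alpha^{-1}(B)} P$ and use that $\Pre^{\calT_1}$ distributes over countable unions of measurable sets via countable subadditivity of $\kappa_1(s,\cdot)$; since a union of $\alpha$-closed sets is $\alpha$-closed, the class-wise assumption yields $\alpha$-closedness of $\Pre^{\calT_1}(\alpha^{-1}(B))$, and hence the required qualitative equivalence.

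The main obstacle is this last propagation step when the quotient $\quotient{S}{\sim_\alpha}$ is uncountable: $\kappa_1(s, \alpha^{-1}(B))$ can be strictly positive even while $\kappa_1(s, P) = 0$ for every single class $P \subseteq \alpha^{-1}(B)$, so the class-by-class hypothesis does not immediately transfer. The clean resolution is to invoke a standard-Borel-type assumption in which $\Sigma_2$ admits a countable generating family, reducing the propagation to the countable case where subadditivity applies. A secondary technical subtlety, the existence of a measurable section $r$, is likewise a standard uniformization result in the Polish/standard-Borel setting that is implicit in the paper's framework.
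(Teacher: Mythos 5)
Your forward direction is essentially correct (up to the harmless slip of fixing an arbitrary class $P$ and then re-identifying it with the class of $s$: one should evaluate the qualitative equivalence at the singleton $\{t\}$ defining the arbitrary class $P=\alpha^{-1}(\{t\})$; the measurability of singletons of $S_2$ is in any case implicitly needed for the statement to make sense, since $\kappa_1(s,P)$ must be defined). The problem is the backward direction. You correctly isolate the real difficulty---passing from $\alpha$-closedness of $\Pre^{\calT_1}(P)$ for single classes $P$ to $\alpha$-closedness of $\Pre^{\calT_1}(\alpha^{-1}(B))$ for all $B\in\Sigma_2$---but the fix you propose does not work. A countable generating family for $\Sigma_2$ does not make $\alpha^{-1}(B)$ a countable union of classes, and the family of sets $B$ for which $\Pre^{\calT_1}(\alpha^{-1}(B))$ is $\alpha$-closed is stable under countable unions but not under complementation, so no monotone-class or generator-based argument can close the gap. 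Worse, the propagation genuinely fails even in the standard Borel setting: take $S_1=[0,1]\times\{0,1\}$ and $S_2=[0,1]$ with their Borel $\sigma$-algebras, $\alpha(x,i)=x$, $\kappa_1((y,0),\cdot)$ the uniform distribution on $[0,\frac{1}{2}]\times\{0\}$ and $\kappa_1((y,1),\cdot)$ the uniform distribution on $[\frac{1}{2},1]\times\{0\}$. Every class $P_x=\{(x,0),(x,1)\}$ satisfies $\kappa_1(s,P_x)=0$ for all $s$, so $\Pre^{\calT_1}(P_x)=\emptyset$ is $\alpha$-closed; yet for $B=[0,\frac{1}{2}]$ one has $\Pre^{\calT_1}(\alpha^{-1}(B))=[0,1]\times\{0\}$, which is not $\alpha$-closed, and your own Dirac argument (applied to $\delta_{(y,0)}$ and $\delta_{(y,1)}$) shows that no STS $\calT_2$ can then be an $\alpha$-abstraction. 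So in the generality you work in, the right-to-left implication cannot be proved, because it is false.

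What makes the implication true in the regime the lemma is actually used in is countability of the quotient $\quotient{S}{\sim_\alpha}$ (in the paper it is a finite partition, as produced by Procedure~\ref{alg:abs}). Then $\alpha^{-1}(B)$ is a countable disjoint union of classes, so by countable additivity $\kappa_1(s,\alpha^{-1}(B))>0$ if and only if $\kappa_1(s,P)>0$ for some class $P\subseteq\alpha^{-1}(B)$, and the class-wise hypothesis propagates to all $B\in\Sigma_2$. In that setting your measurable-section device is also unnecessary (and its existence is itself delicate, since $\alpha(S_1)$ need not be measurable): one can define $\kappa_2(t,\cdot)$ directly as any probability distribution charging exactly those $t'$ for which $\alpha^{-1}(\{t\})\subseteq\Pre^{\calT_1}(\alpha^{-1}(\{t'\}))$; this set is nonempty because the countably many classes partition $S_1$ and $\kappa_1(s,\cdot)$ is a probability measure, and by the hypothesis the containment is equivalent to nonempty intersection, which yields $\alpha^{-1}(\Pre^{\calT_2}(B))=\Pre^{\calT_1}(\alpha^{-1}(B))$ and hence the required qualitative equivalence for every initial distribution. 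As written, your proof of the backward direction therefore has a genuine gap: the missing ingredient is countability of the partition into $\sim_\alpha$-classes, not countable generation of $\Sigma_2$.
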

As defined earlier, $\Pre^{\calT_1}(P)$ refers to the set of states of $\calT_1$ that can reach $P$ in one step with a \emph{positive probability}. This contrasts with the classical definition of the operator $\Pre(P)$ for non-deterministic transition systems, which usually refers to the set of states from which there is a transition reaching a state in $P$.
This lemma suggests that abstractions of $\calT_1$ can be seen as
partitions that are stable w.r.t.\ the function $\Pre^{\calT_1}$, to which we
only need to add stochastic transitions between the adequate pieces of the
partition.
This also indicates that to obtain an $\alpha$-abstraction from a finite
partition, we can apply a procedure very similar to the classical
bisimulation algorithm for
non-deterministic transition systems, adapting slightly the meaning of
operator $\Pre$ to our stochastic setting. The procedure is given in
Procedure~\ref{alg:abs}. We start with an initial
(finite) partition $\calP_{\textit{init}}$ of the state space, of which we want
a compatible abstraction, i.e., an abstraction compatible with every set of states in $\calP_{\textit{init}}$.

\renewcommand{\algorithmicrequire}{\textbf{Inputs: }}
\renewcommand{\algorithmicensure}{\textbf{Output: }}

\begin{algorithm}\caption{Abstraction refinement procedure.} \label{alg:abs}
	\begin{flushleft}
	\algorithmicrequire{an STS $\calT$, and an initial finite partition
		$\calP_{\textit{init}}$ of its state space.}\\
	\algorithmicensure{the coarsest finite abstraction compatible with
	$\calP_{\textit{init}}$, if it exists.}
	\end{flushleft}
	\begin{algorithmic}
		\STATE $\calP\gets \calP_{\textit{init}}$
		\WHILE{$\exists P, P' \in \calP\ \text{such that}\
		P'\cap\Pre^\calT(P)\neq\emptyset\land
			P'\setminus\Pre^\calT(P)\neq\emptyset$}
		\STATE $P_1\gets P'\cap\Pre^\calT(P), P_2\gets P'\setminus\Pre^\calT(P)$
		\STATE $\calP\gets (\calP\setminus\{P'\})\cup\{P_1,P_2\}$
		\ENDWHILE
		\RETURN $\calP$
	\end{algorithmic}
\end{algorithm}

\begin{lemma}
Procedure~\ref{alg:abs} terminates if and only if there exists a finite
abstraction of $\calT_1$ compatible with $\calP_{\textit{init}}$. In this case,
it returns the coarsest such partition.
\end{lemma}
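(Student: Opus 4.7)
The plan is to analyze the procedure in terms of the characterization from the previous lemma: a finite partition induces an $\alpha$-abstraction compatible with $\calP_{\textit{init}}$ iff it refines $\calP_{\textit{init}}$ and $\Pre^\calT(P)$ is a union of its blocks for every block $P$. I will treat the three implications (termination $\Rightarrow$ output is a compatible finite abstraction; existence of a compatible finite abstraction $\Rightarrow$ termination; and coarseness).

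First, assume the procedure halts with partition $\calP$. Throughout the execution the partition is obtained by splitting blocks of $\calP_{\textit{init}}$, hence every block of $\calP$ is contained in some block of $\calP_{\textit{init}}$, so $\calP$ refines $\calP_{\textit{init}}$. The loop exit condition says that for every $P, P' \in \calP$, either $P' \cap \Pre^\calT(P) = \emptyset$ or $P' \subseteq \Pre^\calT(P)$. Thus $\Pre^\calT(P)$ is a union of blocks of $\calP$, i.e.\ $\alpha$-closed for the quotient map $\alpha$ induced by $\calP$. By the previous lemma, $\calP$ yields an $\alpha$-abstraction of $\calT_1$, and by construction it is compatible with $\calP_{\textit{init}}$.

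For the converse, suppose there exists a finite abstraction $\calP^{*}$ of $\calT_1$ compatible with $\calP_{\textit{init}}$. The key invariant is that every partition $\calP$ produced during the run of the procedure is coarser than $\calP^{*}$ (every block of $\calP^{*}$ lies inside a single block of $\calP$). This holds initially because $\calP^{*}$ is compatible with $\calP_{\textit{init}}$. For the inductive step, consider a split of $P' \in \calP$ along $\Pre^\calT(P)$ with $P \in \calP$. Since $P$ is, by hypothesis, a union of blocks of $\calP^{*}$, and since $\Pre^\calT$ commutes with arbitrary unions (because $\kappa(s,\bigcup_i B_i) > 0$ iff some $\kappa(s,B_i) > 0$), $\Pre^\calT(P)$ is a union of sets $\Pre^\calT(Q)$ for blocks $Q$ of $\calP^{*}$; each of these is $\calP^{*}$-closed by the previous lemma applied to $\calP^{*}$. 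Hence $\Pre^\calT(P)$ itself is a union of blocks of $\calP^{*}$, so every block $Q$ of $\calP^{*}$ contained in $P'$ lies either entirely in $P' \cap \Pre^\calT(P)$ or entirely in $P' \setminus \Pre^\calT(P)$, preserving the invariant. Consequently, $|\calP| \le |\calP^{*}|$ throughout the computation. Since each iteration strictly increases the cardinality of $\calP$, the procedure must terminate.

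Finally, the coarsest-partition claim follows from the same invariant: at termination, any finite abstraction $\calP^{*}$ compatible with $\calP_{\textit{init}}$ refines the output $\calP$, so $\calP$ is coarser than every such $\calP^{*}$ and is itself such an abstraction, hence the coarsest one. The main obstacle in the argument is the invariant's inductive step, specifically the observation that $\Pre^\calT$ distributes over unions at the level of positive probabilities, which lets one conclude that $\Pre^\calT(P)$ inherits $\calP^{*}$-closedness from the fact that $P$ is a union of $\calP^{*}$-blocks; everything else is routine bookkeeping.
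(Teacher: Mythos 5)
Your proof is correct and follows exactly the route the paper intends: the lemma is stated without proof, with the text pointing to the classical bisimulation/partition-refinement argument, and your three steps (termination yields a stable refinement of $\calP_{\textit{init}}$, hence an abstraction by the preceding characterization of $\Pre^{\calT_1}$-closedness; the invariant that any compatible finite abstraction $\calP^*$ refines every intermediate partition, giving termination in at most $|\calP^*|$ blocks; and coarseness from the same invariant at termination) are precisely that argument. The only cosmetic remark is that you invoke commutation of $\Pre^\calT$ with \emph{arbitrary} unions, whereas only finite unions of the blocks of $\calP^*$ are needed (and suffice, since $\kappa(s,\cdot)$ is a measure and $\calP^*$ is finite).
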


The objective behind the notion of abstraction is that by finding an
$\alpha$-abstraction $\calT_2$ which is somehow simpler than $\calT_1$
(for example, with a smaller state space), we should be able to use
$\calT_2$ (with initial distribution $\alpha_\#(\mu)$) to analyze some
properties of $\calT_1$ (with initial distribution $\mu$). To do so,
we need to know which properties are preserved through an
$\alpha$-abstraction. As a first observation, positive probability of
reachability properties is preserved~\cite[Corollary~94]{BBBC18}.
In particular, avoid sets are preserved: for $B\in\Sigma_2$, $\alpha^{-1}(\Btilde) = \widetilde{\alpha^{-1}(B)}$.
Stronger conditions are
required to study almost-sure reachability properties of an STS
through its $\alpha$-abstraction. We select a definition and two key
results of~\cite{BBBC18} about that matter which will be useful in the
subsequent sections.

\begin{definition}[Sound $\alpha$-abstraction]
  We say that $\calT_2$ is a \emph{sound} $\alpha$-abstraction of $\calT_1$ if
  for all $B\in\Sigma_2$, $\Prob^{\calT_2}_{\alpha_\#(\mu)}(\F B) = 1$
  implies $\Prob^{\calT_1}_\mu(\F \alpha^{-1}(B)) = 1$.
\end{definition}

Sound abstractions preserve almost-sure reachability properties from $\calT_2$
to $\calT_1$ (but not necessarily the converse). Soundness of
the abstraction is a sufficient condition to lift
decisiveness of the abstraction to the original system: for every
$B\in\Sigma_2$, if $\calT_2$ is
decisive w.r.t.\ $B$, then $\calT_1$ is decisive w.r.t.\ $\alpha^{-1}(B)$
\cite[Proposition~33]{BBBC18}.
We also have that some decisiveness property is sufficient to
guarantee soundness of the abstraction.

\begin{proposition}[{\cite[Prop.~40]{BBBC18}}]
	\label{prop:decImpliesSound}
	If $\calT_1$ is decisive w.r.t.\ every $\alpha$-closed
	set, then $\calT_2$ is a sound $\alpha$-abstraction of $\calT_1$.
\end{proposition}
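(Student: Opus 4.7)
The plan is to fix $B \in \Sigma_2$ and $\mu \in \Dist(S_1)$ with $\Prob^{\calT_2}_{\alpha_\#(\mu)}(\F B) = 1$, set $B' := \alpha^{-1}(B) \in \Sigma_1$ (which is $\alpha$-closed by construction), and aim for $\Prob^{\calT_1}_\mu(\F B') = 1$; the hypothesis of the proposition then yields that $\calT_1$ is decisive w.r.t.\ $B'$. Write $U_1 \subseteq S_1$ for the avoid-set of $B'$ in $\calT_1$ and $U_2 \subseteq S_2$ for the avoid-set of $B$ in $\calT_2$.

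First I verify that $U_1 = \alpha^{-1}(U_2)$. This reduces to the equivalence $\Prob^{\calT_1}_{\delta_s}(\F B') = 0 \iff \Prob^{\calT_2}_{\delta_{\alpha(s)}}(\F B) = 0$ for every $s \in S_1$, which follows from the preservation of positive-probability reachability for $\alpha$-closed targets between an STS and its $\alpha$-abstraction (the ``first observation'' noted just before this proposition), together with $B' = \alpha^{-1}(B)$ and $\alpha_\#(\delta_s) = \delta_{\alpha(s)}$.

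Then I argue by contrapositive. Suppose $\Prob^{\calT_1}_\mu(\F B') < 1$, so $\Prob^{\calT_1}_\mu(\GG \comp{B'}) > 0$. By decisiveness of $\calT_1$ w.r.t.\ $B'$, almost every run that stays out of $B'$ forever eventually visits $U_1$; since such a run never visits $B'$ and $U_1 \cap B' = \emptyset$, the visit to $U_1$ happens while the run remains in $\comp{B'}$. Hence
\[
\Prob^{\calT_1}_\mu(\comp{B'} \U U_1) \ge \Prob^{\calT_1}_\mu(\GG \comp{B'}) > 0\puncteq{.}
\]
The event $\comp{B'} \U U_1$ is a countable union of cylinders $\Cyl(\comp{B'}, \ldots, \comp{B'}, U_1)$ all of whose components are $\alpha$-closed (using $\comp{B'} = \alpha^{-1}(\comp{B})$ and $U_1 = \alpha^{-1}(U_2)$). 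A straightforward induction on the cylinder length, unfolding the inductive definition of the cylinder probability and applying the single-step qualitative equivalence of $\alpha_\#(\Omega_{\calT_1}(\nu))$ and $\Omega_{\calT_2}(\alpha_\#(\nu))$ defining an $\alpha$-abstraction, lifts this positive probability to $\calT_2$, yielding $\Prob^{\calT_2}_{\alpha_\#(\mu)}(\comp{B} \U U_2) > 0$.

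To conclude, observe that any run of $\calT_2$ witnessing $\comp{B} \U U_2$ reaches $U_2$ at some step $n$ while staying in $\comp{B}$ up to that step, and by the Markov property together with the very definition of $U_2$ the continuation almost surely avoids $B$. Up to a null set, $\comp{B} \U U_2 \subseteq \GG \comp{B}$, so $\Prob^{\calT_2}_{\alpha_\#(\mu)}(\GG \comp{B}) > 0$, which contradicts $\Prob^{\calT_2}_{\alpha_\#(\mu)}(\F B) = 1$. The main obstacle is the cylinder-level transfer in the previous paragraph: checking that for any finite sequence $(X_i = \alpha^{-1}(Y_i))_{i \le n}$ of $\alpha$-closed sets, positivity of $\Prob^{\calT_1}_\mu(\Cyl(X_0, \ldots, X_n))$ is equivalent to positivity of $\Prob^{\calT_2}_{\alpha_\#(\mu)}(\Cyl(Y_0, \ldots, Y_n))$; this is a routine induction once the single-step qualitative equivalence defining the abstraction is in hand.
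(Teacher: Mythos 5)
Your proposal is correct, and since the paper itself imports this statement from~\cite{BBBC18} (Prop.~40) without reproducing a proof, there is no in-paper argument to diverge from; your route is the natural one and matches the cited source's approach: establish $\widetilde{\alpha^{-1}(B)} = \alpha^{-1}(\widetilde{B})$ and the transfer of positivity of cylinder probabilities with $\alpha$-closed components (the ``first observation'' recalled just before the proposition, itself a routine induction using the one-step qualitative equivalence and the identity $\alpha_\#(\mu_{\alpha^{-1}(Y)}) = (\alpha_\#\mu)_Y$), then use decisiveness w.r.t.\ $\alpha^{-1}(B)$ to turn $\Prob^{\calT_1}_\mu(\F\alpha^{-1}(B))<1$ into a positive-probability event $\comp{\alpha^{-1}(B)}\U\widetilde{\alpha^{-1}(B)}$, push it to $\calT_2$, and contradict $\Prob^{\calT_2}_{\alpha_\#(\mu)}(\F B)=1$ via the standard fact that from the avoid-set $B$ is almost surely never reached. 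The two auxiliary facts you invoke without full proofs are exactly the ones available in the surrounding text and in~\cite{BBBC18}, so no gap.
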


We can now formulate the most important result from this section, linking the
qualitative reachability of an STS and its abstraction under soundness and
decisiveness hypotheses.
\begin{proposition}[{\cite[Prop.~6.1.9]{Car17}}] \label{prop:almostSureReach}
Let $B\in\Sigma_2$ be a measurable set of $\calT_2$. If $\calT_2$ is a
sound $\alpha$-abstraction of $\calT_1$ and is decisive w.r.t.\ $B$,
then for every $\mu\in\Dist(S_1)$,
$\Prob_\mu^{\calT_1}(\F \alpha^{-1}(B)) = 1$ if and only if
$\Prob_{\alpha_{\#}(\mu)}^{\calT_2}(\F B) = 1$.
\end{proposition}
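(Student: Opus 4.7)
The $(\Leftarrow)$ direction is immediate from the definition of soundness: because $\calT_2$ is a sound $\alpha$-abstraction of $\calT_1$, $\Prob_{\alpha_\#(\mu)}^{\calT_2}(\F B) = 1$ forces $\Prob_\mu^{\calT_1}(\F \alpha^{-1}(B)) = 1$. The nontrivial direction is $(\Rightarrow)$, where soundness gives nothing, since reachability in $\calT_1$ does not a priori transfer up to $\calT_2$. The plan is to argue by contradiction, combining decisiveness of $\calT_2$ with the elementary fact that positive probability of cylinders built from $\alpha$-closed sets transfers between $\calT_1$ and $\calT_2$ in both directions (this is what ``positive probability of reachability is preserved'' really amounts to, and follows by induction on the length of the cylinder from the single-step qualitative equivalence defining an $\alpha$-abstraction).

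So assume $\Prob_\mu^{\calT_1}(\F \alpha^{-1}(B)) = 1$ and suppose, for contradiction, that $\Prob_{\alpha_\#(\mu)}^{\calT_2}(\F B) < 1$. Decisiveness of $\calT_2$ w.r.t.\ $B$ gives $\Prob_{\alpha_\#(\mu)}^{\calT_2}(\F B \lor \F \Btilde) = 1$, hence
\[
\Prob_{\alpha_\#(\mu)}^{\calT_2}(\neg \F B \land \F \Btilde) = 1 - \Prob_{\alpha_\#(\mu)}^{\calT_2}(\F B) > 0\puncteq{.}
\]
This event is contained in $\comp{B}\,\U\,\Btilde$, which is the countable union over $n\in\IN$ of the cylinders $\Cyl(\comp{B},\ldots,\comp{B},\Btilde)$ of length $n+1$, so by countable additivity one such cylinder has positive probability under $\Prob_{\alpha_\#(\mu)}^{\calT_2}$. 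Since $\comp{B}$ and $\Btilde$ lie in $\Sigma_2$, their preimages under $\alpha$ are $\alpha$-closed, and the preservation of positive probability for cylinders yields
\[
\Prob_\mu^{\calT_1}\bigl(\Cyl(\comp{\alpha^{-1}(B)},\ldots,\comp{\alpha^{-1}(B)},\alpha^{-1}(\Btilde))\bigr) > 0\puncteq{.}
\]

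It then remains to observe that from any $s \in \alpha^{-1}(\Btilde)$, the probability of ever visiting $\alpha^{-1}(B)$ in $\calT_1$ is zero: indeed $\alpha(s) \in \Btilde$ means $\Prob_{\delta_{\alpha(s)}}^{\calT_2}(\F B) = 0$, and since $\alpha^{-1}(B)$ is $\alpha$-closed, the same positive-probability preservation (used this time from $\calT_1$ to $\calT_2$) forces $\Prob_{\delta_s}^{\calT_1}(\F \alpha^{-1}(B)) = 0$. Splitting the cylinder above at its last coordinate via Lemma~\ref{lem:splitCyl} and integrating the vanishing continuation probability over $s_n \in \alpha^{-1}(\Btilde)$, the contribution of runs going through this cylinder to $\F \alpha^{-1}(B)$ vanishes, so $\Prob_\mu^{\calT_1}(\neg \F \alpha^{-1}(B)) > 0$, contradicting the assumption. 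The one subtle point throughout is the bidirectional preservation of positive probability for cylinders over $\alpha$-closed sets; once that is established, the rest is a short manipulation of the decisiveness dichotomy with Lemma~\ref{lem:splitCyl}.
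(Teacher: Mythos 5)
Your proof is correct: the $(\Leftarrow)$ direction is exactly soundness, and the $(\Rightarrow)$ direction correctly combines decisiveness of $\calT_2$ with the bidirectional preservation of positive probability of cylinders over $\alpha$-closed sets (which indeed follows by induction on cylinder length from the one-step qualitative equivalence, using Lemma~\ref{lem:splitCyl}), decomposing $\comp{B}\U\Btilde$ into cylinders and using that $\alpha^{-1}(\Btilde)$ cannot reach $\alpha^{-1}(B)$ in $\calT_1$. The paper itself does not reprove this statement (it is imported from~\cite[Prop.~6.1.9]{Car17}), and your argument is essentially the standard one underlying that reference, so there is nothing to flag.
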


	\section{Stochastic Hybrid Systems}
		\label{sec:stochHS}
		From now on, we focus on a subclass of stochastic transition systems that consists in a stochastic extension of the well-studied \emph{hybrid systems}.
A \emph{hybrid system} is a dynamical system combining discrete and continuous
transitions. It can be defined as a non-deterministic automaton with a finite
number of continuous variables, whose evolution is described via an infinite
transition system. Hybrid systems have been widely studied since their
introduction in the 1990s (e.g.,~\cite{ACH+,Hen96}). They are
effectively used to model various time-dependent reactive systems; systems that
need to take into account both continuous factors (e.g., speed, heat, time,
distance) and discrete factors (e.g., events, instructions) are ubiquitous.

It is worthwhile to add \emph{stochasticity} to hybrid systems, as this permits
a more fine-grained analysis by distinguishing scenarios that are likely to
happen and scenarios that are not. If we cannot
completely prevent an undesirable outcome from happening, it is still
beneficial to have the ability to quantify its probability.

We define hybrid systems and give them a fully stochastic semantics, yielding the
class of \emph{stochastic hybrid systems} (SHSs).
We then prove in Section~\ref{sec:undecSHS} the undecidability of reachability
problems for SHSs, thereby showing the interest of finding large decidable
subclasses. This undecidability result is not surprising, as these problems are
already undecidable in non-deterministic hybrid systems, but a proof in the
stochastic context allows to reason about the quantitative problem as well.

\subsection{Hybrid Systems}
We proceed with the definition of a (non-deterministic) \emph{hybrid system}.
\begin{definition}[Hybrid system]
	A \emph{hybrid system} (HS) is a tuple
	$\calH=(L,X,\calA,E,\flow,\inv,\calG,\calR)$
	where:
	$L$ is a finite set of \emph{locations} (discrete states);
	$X=\{x_1,\ldots,x_n\}$ is a finite set of \emph{continuous  variables};
	$\calA$ is a finite alphabet of \emph{events};
	$E \subseteq L \times \calA \times L$ is a finite set of \emph{edges};
	for each $\ell \in L$, $\flow(\ell)\colon \IR^n\times \IR^+ \to \IR^n$
	is a continuous function describing the \emph{dynamics} in location $\ell$;
	$\inv$ assigns to each location a subset of $\IR^{n}$ called
	\emph{invariant};
	$\calG$ assigns to each edge a subset of $\IR^{n}$ called \emph{guard};
	$\calR$ assigns to each edge $e$ and valuation $\ve\in\IR^n$ a subset
	$\calR(e)(\ve)$ of $\IR^n$ called \emph{reset}.
	For $\ell\in L$, $e\in E$, we usually denote $\flow(\ell)$ and $\calR(e)$
	by $\flow_\ell$ and $\calR_e$.
\end{definition}

We denote the number of variables $|X|$ by $n$.
We denote by $\IR^X$ the set of valuations that map variables from $X$ to real
numbers.
In what follows, we treat
elements of $\IR^X$ as elements of $\IR^n$ through the bijection $\ve
\mapsto (\ve(x_1),\ldots,\ve(x_n))$.

We now give the semantics of hybrid systems.
Given a hybrid system $\calH$, we define $S_\calH = L\times\IR^n$ as the states
of $\calH$.
We distinguish two kinds of transitions between states:
\begin{itemize}
	\item there is a \emph{switch-transition}
	$(\ell,\ve)\xrightarrow{a}(\ell',\ve')$ if there exists
	$e = (\ell,a,\ell')\in E$
	such that $\ve\in \inv(\ell)\cap\calG(e)$, $\ve'\in \calR_e(\ve)\cap
	\inv(\ell')$;
	\item there is a \emph{delay-transition}
	$(\ell,\ve)\xrightarrow{\tau}(\ell,\ve')$ if there exists $\tau\in\IR^+$
	such that for all $0 \leq \tau' \leq \tau$, $\flow_\ell(\ve,\tau') \in
	\inv(\ell)$ and $\ve' = \flow_\ell(\ve,\tau)$.
\end{itemize}
Informally, a switch-transition $(\ell,\ve)\xrightarrow{a}(\ell',\ve')$ means
that an edge
$e=(\ell, a, \ell')$ can be taken without violating any constraint: the value
$\ve$ of the continuous variables is an element of the invariant $\inv(\ell)$
and of the guard $\calG(e)$,
and there is a possible reset $\ve'$ of the variables which is an element of
the invariant $\inv(\ell')$. A delay-transition
$(\ell,\ve)\xrightarrow{\tau}(\ell,\ve')$ means that some time $\tau$ elapses
without changing the discrete location of the system---the only constraint is
that all the values taken by the continuous variables during this time are in
the invariant $\inv(\ell)$.

Given $s = (\ell, \ve)\in L \times \IR^n$ a state of
the hybrid system, and $\tau \in \IR^+$, we denote by $s + \tau = (\ell,
\flow_\ell(\ve, \tau))$ the new state after time $\tau$ has elapsed, without
changing the location; $\tau$ is then referred to as a \emph{delay}.

We only consider \emph{mixed transitions} in runs, i.e., transitions that
consist of a delay-transition (some time elapses) followed by a
switch-transition (an edge is taken and the location changes).
A mixed transition is denoted by $(\ell,\ve)\xrightarrow{\tau, a}(\ell', \ve')$
if and only if there exists $\ve''\in\IR^n$ such that
$(\ell,\ve)\xrightarrow{\tau}(\ell, \ve'')\xrightarrow{a}(\ell',\ve')$.

We usually assume that there is a bijection between the edges $E$ and the
alphabet of events $\calA$, and we omit mentioning this alphabet. If $e =
(\ell, a, \ell')\in E$, we can thus denote $\xrightarrow{e}$ (resp.\
$\xrightarrow{\tau, e}$) for switch-transitions (resp.\ mixed transitions)
instead of $\xrightarrow{a}$ (resp.\ $\xrightarrow{\tau, a}$), if there is no
ambiguity.

\begin{definition}[Run of an HS]
	A \emph{run} of an HS is an infinite sequence
	\[
	(\ell_0, \ve_0)\xrightarrow{\tau_0, e_0} (\ell_1, \ve_1)
	\xrightarrow{\tau_1, e_1} (\ell_2, \ve_2) \xrightarrow{\tau_2, e_2}\ldots
	\]
	of elements in $L\times\IR^n$ such that for all $i\ge 0$,
	$(\ell_i, \ve_i)\xrightarrow{\tau_i, e_i} (\ell_{i+1}, \ve_{i+1})$ is a
	mixed transition.
\end{definition}

\begin{example} \label{ex:pacman1}
	We provide in Figure~\ref{fig:pacman} an example of a hybrid system.
	This example was first studied in~\cite{BBBBG08}. There are two continuous
	variables ($x$ and $y$) and five
	locations, each of them equipped with the same simple dynamics: $\dot{x} =
	\dot{y} = 1$ (i.e., $\gamma_\ell((x,y), \tau) = (x + \tau, y + \tau)$ for
	every location $\ell\in L$). Locations $\ell_2$ and $\ell_4$ have the same
	invariant,
	which is $\{(x,y)\mid y < 1\}$; the other invariants are simply $\IR^2$.
	Guards are written next to the edge to which they are
	related: for instance, $\calG(e_4) = \{(x,y)\mid y = 2\}$. The notation
	``$x\defeq 0$'' is used to denote a deterministic
	reset (in this case, the value of $x$ is reset to $0$ after taking
	the edge). For instance, $\calR_{e_1}(x, y)= \{(x, 0)\}$ (the value of
	$x$ is preserved and $y$ is reset to $0$).
	If nothing else is written next to an edge $e$, it means that there
	is no reset on $e$, i.e., that $\calR_{e}(\ve)= \{\ve\}$ for all $\ve\in\IR^n$.
	An example of the beginning of a run of this system can be
	\begin{align*}
		(\ell_0, (0,0))&\xrightarrow{0.4, e_0} (\ell_1, (0.4, 0.4))
		\xrightarrow{0.6, e_1} (\ell_2, (1,0))\xrightarrow{0.2,e_2}(\ell_0,
		(0,0.2)) \\
		&\xrightarrow{1.5, e_3} (\ell_3, (1.5,1.7))\xrightarrow{0.3,e_4}
		(\ell_4,(1.8,0))\xrightarrow{0.8, e_5}(\ell_0,(0, 0.8))\ldots
	\end{align*}
	Due to his fairly simple dynamics, guards and resets, this hybrid
	system actually belongs to the class of \emph{timed automata}~\cite{AD94}.
	\begin{figure}[tbh]
		\centering
		\includegraphics[width=0.6\columnwidth]{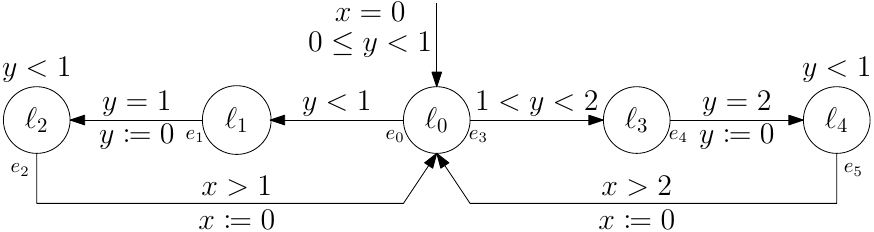}
		\caption{Example of a hybrid system with two continuous variables. Each
		location is equipped with the dynamics $\dot{x} = \dot{y} = 1$.}
		\label{fig:pacman}
	\end{figure}
\end{example}

We now give more vocabulary to refer to hybrid systems.
If there is a switch-transition $(\ell, \ve) \xrightarrow{e} (\ell', \ve')$, we
say that edge $e$ is \emph{enabled} at state $(\ell, \ve)$.
An edge $e$ is enabled at state $(\ell, \ve)$ if it can be taken with no delay from state $(\ell,\ve)$.
Given a state $s = (\ell, \ve)$ and an edge $e = (\ell, a, \ell')$ of
$\calH$, we define $I(s, e) = \{\tau \in \IR^+ \mid s \xrightarrow{\tau, e}
s'\}$ as the set of delays after which edge $e$ is enabled from $s$, and $I(s)
= \bigcup_e I(s, e)$ as the set of delays after which any edge is
enabled from $s$.
For instance,
in the hybrid system from Example~\ref{ex:pacman1}, for $s =
(\ell_0,(0,0.2))$, the set $I(s,e_0) = \intervalco{0,0.8}$, and $I(s) =
\intervalco{0,1.8}\setminus\{0.8\}$.

We say that a state $s \in L \times \IR^n$ is \emph{non-blocking} if $I(s) \neq \emptyset$.
In the sequel, we only consider hybrid systems such that all states are
non-blocking, which implies that a mixed transition is always possible from any state.
When adding stochasticity on top of this model, this will allow to have a well-defined probability distribution on runs, which consist of mixed transitions.

\subsection{Probabilistic Semantics for Hybrid Systems}
We expand on the definition of a hybrid system by replacing the
non-deterministic aspects of the definition with stochasticity.
\emph{Stochastic hybrid systems} will be our main focus of attention
in the rest of the paper.
\begin{definition}[Stochastic hybrid system]
	A \emph{stochastic hybrid system} (SHS) is defined as a tuple $\calH=(\calH',\delDist_L,\resDist_\calR,\edgDist)$, where:
	\begin{itemize}
		\item $\calH'=(L,X,\calA,E,\flow,\inv,\calG,\calR)$
		is a hybrid system,
		which is referred to as the \emph{underlying hybrid system of $\calH$}.
		We require guards, invariants and resets to be Borel sets.
		\item $\delDist_L \colon L \times \IR^n \to \Dist(\IR^+)$ associates to each state a probability
		distribution on the time delay in $\IR^+$ (equipped with the
		classical Borel $\sigma$-algebra) before leaving a location.
		Given $s \in L \times \IR^n$ (a state of $\calH$), the distribution
		$\delDist_L(s)$ will also be denoted by $\delDist_s$. We require that
		for every $s \in L \times \IR^n$, $\delDist_s(\IR^+) = \delDist_s(I(s))
		= 1$, i.e., the probability that at least one edge is enabled after a
		delay is $1$.
		\item $\resDist_\calR$ associates to each edge $e$ and valuation $\ve$
		a probability distribution on the set $\calR_e(\ve)\subseteq \IR^n$.
		Given $e\in E$ and $\ve\in\IR^n$, the
		distribution $\resDist_\calR(e)(\ve)$ will also be denoted by
		$\resDist_e(\ve)$.
		\item $\edgDist\colon L\times\IR^n \to \Dist(E)$ is a function
		assigning to each state of $\calH$ a probability distribution on the
		edges. We require that $\edgDist(s)(e) > 0$ \emph{if and only if} edge
		$e$ is enabled at $s$. For $s\in L\times\IR$, we denote $\edgDist(s)$
		by $\edgDist_s$. This distribution is only defined for states at which
		at least one edge is enabled.
	\end{itemize}
\end{definition}

\begin{remark}
	The term ``stochastic hybrid system'' is used for a wide variety of
	stochastic extensions of hybrid systems throughout the literature.
	In this work, we consider stochastic delays, stochastic
	resets, a stochastic edge choice, and initial states given by a probability
	distribution.
	The way this probabilistic semantics is added on top of hybrid systems
	is very similar to how \emph{timed automata} are converted to
	\emph{stochastic timed automata} in~\cite{BBBBG08,BBBM08,BBB+14}. The
	major difference is that in the case of timed automata, resets
	are deterministic, hence do not require stochasticity.
	One of the differences compared to many SHS models from the literature is that our dynamics are deterministic: after a stochastic reset has been performed in a location, the dynamics in this location are not affected by randomness (as opposed to dynamics described by stochastic differential equations; see, e.g.,~\cite{HLS00,Bu12}).
	This restriction is due to the fact that we will assume the definability of the dynamics in some mathematical structure in Section~\ref{sec:ominimal}, which is difficult to determine in general when the dynamics are given by differential equations.

	Although dynamics are deterministic, the model is
	powerful enough to
	emulate some randomness in dynamics by assuming that extra variables are
	used to influence the continuous flow of the other variables. These
	variables can be chosen stochastically in each location through the reset
	mechanism.
	This is for example sufficient to consider a class of models that could constitute a potential stochastic extension of \emph{rectangular automata}~\cite{HKPV}, whose
	variables evolve according to slopes inside an interval; the actual value of each slope would then
	be given by one of these extra variables and could be reset in each location. For example, we could have two variables $x$ and $m_x$ such that $m_x$ is reset with a uniform distribution on $\intervalcc{1,4}$ and $\dot{x} = m_x$.
	Our model is very similar to the underlying theoretical model of the software tool \textsc{Uppaal}~\cite{DDL+12}.
	In Section~\ref{sec:ominimal}, we will identify the need to restrict some more the definition of some components of SHSs to ensure their definability; this is however not required for the results of Section~\ref{sec:decisiveSHS}.
\end{remark}

When referring to an SHS, we make in particular use of the same
terminology as for hybrid systems (e.g., runs, enabled edges, allowed delays
$I(\cdot)$) to describe its underlying hybrid system.

In order to apply the theory developed in Section~\ref{sec:sts}, we
give the semantics of an SHS $\calH$ as an STS $\calT_{\calH} =
(S_{\calH},\Sigma_{\calH},\kappa_{\calH})$. The set $S_{\calH}$ is the
set $L \times \IR^n$ of states of $\calH$, and $\Sigma_{\calH}$ is the
$\sigma$-algebra product between $2^L$ and the Borel $\sigma$-algebra
on $\IR^n$.  To define $\kappa_\calH$, we first explain briefly the
role of each probability distribution in the definition of SHS.
Starting from a state $s = (\ell, \ve)$, a delay $\tau$ is chosen
randomly, according to the distribution $\delDist_s$. From the state
$s+\tau = (\ell,\ve')$, an edge $e=(\ell,a,\ell')$ (enabled in
$s+\tau$) is selected, following the distribution $\edgDist_{s + \tau}$
(such an edge is almost surely available, as $\delDist_s(I(s)) = 1$ by
hypothesis). The next state will be in location $\ell'$, and the
values of the continuous variables are stochastically reset according
to the distribution $\resDist_e(\ve')$.  We can thus define
$\kappa_{\calH}$ as follows: for $s = (\ell, \ve) \in S_\calH$, $B \in
\Sigma_\calH$,
\[
\kappa_\calH(s,B)
= \int_{\tau \in \IR^+}
\sum_{e = (\ell,a,\ell')\in E} \left(
\edgDist_{s+\tau}(e) \cdot
\int_{\ve'' \in \IR^n}
\ind{B}(\ell',\ve'') \ud (\resDist_e(\flow_\ell(\ve, \tau)))(\ve'')
\right) \ud \delDist_s(\tau)
\]
where $\ind{B}$ is the characteristic function of $B$. It gives the
probability to hit set $B \subseteq S_{\calH}$ from state $s$ in one step
(representing a mixed transition).
The function $\kappa_{\calH}(s, \cdot)$ defines a probability
distribution for all $s\in S_\calH$.

\begin{definition}[STS induced by an SHS]
	For an SHS $\calH$, we define $\calT_\calH = (S_\calH, \Sigma_\calH,
	\kappa_\calH)$ as the \emph{STS induced by $\calH$}.
\end{definition}

Thanks to the stochasticity of our models, we can reason about
both \emph{qualitative} and \emph{quantitative} reachability problems, as
defined in Definition~\ref{def:problems}.

\subsection{Undecidability of Reachability for Stochastic Hybrid Systems}
\label{sec:undecSHS}
We now show that qualitative and quantitative reachability problems for SHSs
are undecidable, even for SHSs with relatively simple features.
This demonstrates the significance of establishing results about the
decisiveness of classes of SHSs (done in Section~\ref{sec:decisiveSHS}).
Along with classical effectiveness assumptions (in
Section~\ref{sec:ominimal}), these decisiveness results will be sufficient to
guarantee the decidability of reachability problems for these classes.

Reachability problems have been extensively studied for
non-deterministic hybrid systems, and some of the undecidability
proofs~\cite{HKPV,HR00} can be translated almost directly to our
stochastic setting.  A classical method to show undecidability
consists in encoding every computation of a Turing-complete
model as an execution of a hybrid system.  The undecidability proof
of~\cite{HKPV} builds for every Turing-machine $M$ a hybrid system
with one accepting run that encodes the halting computation of
$M$. This proof is not ``robust'' in the sense that a slight
perturbation of this accepting run does not encode an execution of
$M$. As argued in~\cite{HR00}, this means that undecidability might
stem from the perfect (unrealistic) accuracy required to process such
an execution, and not from the very nature of hybrid systems. This is why
in~\cite{HR00}, the authors establish an undecidability result for
\emph{robust} hybrid systems, i.e., hybrid systems such that if they
accept some run $\run$, have to accept all runs ``close enough'' to
$\run$. This provides a more convincing argument that hybrid systems
are intrinsically undecidable.

In our stochastic framework, to obtain a similar idea of ``robustness'', our
goal is to prove undecidability even when constrained to purely
continuous distributions on time delays from any state, and very simple guards,
resets, and dynamics.
This requires a distinct proof from~\cite{HKPV}.
The result from~\cite{HR00} is much closer to the one that we want to
achieve, and we take inspiration from its proof to show the undecidability of
SHSs.
The proof consists of reducing the \emph{halting problem for
two-counter machines} to deciding whether a measurable
set in an SHS is reached with probability $1$. It is provided in~\ref{sec:app-shs}. We obtain the following result.

\begin{restatable}{proposition}{undecSHSs} \label{prop:undecSHS}
	The qualitative reachability problems and the approximate
	quantitative reachability problem are undecidable for stochastic
	hybrid systems with purely continuous distributions on time delays, guards
	that are linear comparisons of variables and constants, and using
	positive integer slopes for the flow of the continuous variables. The
	approximate quantitative problem is undecidable for any fixed
	precision $\epsilon < \frac{1}{2}$.
\end{restatable}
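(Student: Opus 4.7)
The plan is to reduce from the halting problem for two-counter (Minsky) machines. Given a two-counter machine $M$, I construct an SHS $\calH_M$ with a designated halting location $T$ and initial distribution $\mu$ (a Dirac on the initial instruction with all variables at $0$) such that $\Prob_\mu^{\calT_{\calH_M}}(\F T) = 1$ if $M$ halts and $\Prob_\mu^{\calT_{\calH_M}}(\F T) = 0$ otherwise. This immediately yields undecidability of both qualitative reachability problems. Moreover, since $\Prob_\mu^{\calT_{\calH_M}}(\F T) \in \{0, 1\}$, the approximate quantitative problem is undecidable for any fixed $\epsilon < 1/2$: querying whether $|\Prob_\mu^{\calT_{\calH_M}}(\F T) - 1| < \epsilon$ would solve the halting problem.

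For the construction I use three continuous variables $x_1, x_2, z$, all with constant slope $1$, which satisfies the positive-integer-slope requirement. The counter values are not encoded by $x_i$ directly but as the differences $c_i = x_i - z$. Since all three variables evolve at the same rate, these differences are invariant under delays, decoupling the simulation from the random delay times. Each instruction of $M$ becomes a location of $\calH_M$. An increment of $c_i$ is modeled by a single outgoing edge with trivial guard and deterministic reset $x_i \defeq x_i + 1$. A Minsky-style conditional decrement-or-branch on $c_i$ is modeled by two outgoing edges with linear guards $x_i = z$ (for the zero case) and $x_i \geq z + 1$ (for the positive case, combined with the reset $x_i \defeq x_i - 1$). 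All guards are thus linear comparisons of variables and integer constants, as required. The time-delay distributions $\delDist_s$ are chosen as any purely continuous distribution on $\IR^+$, for instance exponential with rate $1$.

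Correctness rests on two elementary observations. First, since every guard depends only on the invariant quantity $x_i - z$, the set of edges enabled at $s + \tau$ is independent of $\tau \ge 0$; in particular $I(s) = \IR^+$ for every reachable state, so any continuous distribution with full support meets the constraint $\delDist_s(I(s)) = 1$. Second, at each reachable state the outgoing guards of the current instruction are mutually exclusive and jointly exhaustive, so exactly one edge is enabled and is selected with probability $1$ by $\edgDist_s$. A straightforward induction on the number of computation steps of $M$ then shows that $\calH_M$ almost surely replays the exact computation of $M$, and therefore reaches $T$ if and only if $M$ halts. The main conceptual obstacle, and the key design choice of the reduction, is to guarantee that the unavoidable continuous randomness of the delays does not leak into the simulation: this is precisely what the time-invariant encoding $c_i = x_i - z$ achieves, since the only way delays interact with the discrete computation is through guards that depend on this invariant.
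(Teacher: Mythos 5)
Your reduction is correct and does establish the literal statement, but it takes a genuinely different route from the paper's proof (Lemma~\ref{lem:redShs2cm} in Appendix~\ref{sec:app-shs}). You simulate the two-counter machine \emph{exactly}: counters are the delay-invariant differences $x_i - z$ (all slopes equal to $1$), increments and decrements are performed by value-dependent deterministic resets $x_i \defeq x_i \pm 1$, and the zero test is an equality guard. Since the proposition constrains only the delay distributions, the guards and the slopes---not the resets---this is a legitimate and considerably lighter proof; the $0$/$1$ gap argument for the quantitative problem with $\epsilon < \frac{1}{2}$ is the same as the paper's. The paper instead refuses to put the arithmetic into value-dependent resets or exact tests: it encodes counter value $n$ by the slack condition $\frac{1}{2^{n+1}} < x_2 - x_1 < x_4 - x_3 < \frac{1}{2^n}$ (twelve variables) and realizes increments/decrements by letting variables run at different integer slopes ($1,2,3$) for a stochastically chosen time, the interval encoding absorbing the randomness of the delays. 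What that heavier construction buys is precisely the ``robustness'' motivation stated before the proposition, in the spirit of~\cite{HR00} and of staying as close as possible to the decidable class of Section~\ref{sec:decisiveSHS}; your encoding, by contrast, concentrates all the required exactness in the resets and in the guard $x_i = z$, so it is closer in spirit to the non-robust encodings the paper wanted to move away from, even though it does show (which is the literal claim) that purely continuous delay distributions do not help. One small repair you need: the model requires \emph{every} state to be non-blocking ($\delDist_s(I(s)) = 1$ for all $s \in L\times\IR^n$, not only reachable states), and your test-location guards $x_i = z$ and $x_i \ge z + 1$ leave all states with $x_i - z \in (-\infty,0)\cup(0,1)$ blocked forever; replace them by the exhaustive pair $x_i < z + 1$ and $x_i \ge z + 1$ (still correct on reachable states, where $x_i - z \in \IN$), and give the halting location a guardless self-loop, as the paper does.
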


Although the proof is centered on showing the undecidability
of qualitative reachability problems, we get as a by-product the undecidability
of the approximation. Indeed, as the systems used throughout the proof reach
a target set $B$ with a probability that is either $0$ or $1$, the ability to
approximate $\Prob^{\calT_\calH}_\mu(\F B)$ with $\epsilon < \frac{1}{2}$ would
be sufficient to solve the qualitative problem. As the proof shows that these
qualitative problems are undecidable, we obtain that the approximate
quantitative problem is also undecidable. The proof therefore also shows that  deciding whether a state lies in $\Btilde$ is already undecidable. By considering a stochastic framework
rather than a non-deterministic one as in~\cite{HR00}, we thus obtain a
seemingly more powerful result with a similar proof.

	\section{Properties of Cycle-Reset Stochastic Hybrid Systems}
		\label{sec:decisiveSHS}
		The literature about non-deterministic hybrid systems suggests that to obtain
subclasses for which
the reachability problem becomes decidable, one must set sharp restrictions on
the continuous flow of the variables, and/or on the discrete transitions (via
the \emph{reset} mechanism).
In this decidable spectrum lie for instance
the \emph{rectangular initialized automata}, which are quite permissive toward
the continuous evolution of the variables, but need strong hypotheses about
what is allowed in the discrete transitions~\cite{HKPV}.
Our approach lies at one end of this spectrum: we will simply restrict the
discrete behavior by considering \emph{strong resets}, i.e., resets that forget
about the previous values of the variables, decoupling the discrete behavior
from the continuous behavior. We show that one strong reset per
cycle of the graph is sufficient to obtain our results, and we name this
property \emph{\cyclereset{}}.
This point of view has already been studied in~\cite{BMRT04,Gen05,LPS00} for
non-deterministic hybrid systems.

\begin{definition}[Strong reset]
	Given $\calH$ a stochastic hybrid system and $e$ an edge of $\calH$, we
	say that $e$ has a \emph{strong reset} (or is \emph{strongly reset}) if there exist $\calR_e^* \subseteq \IR^n$ and $\resDist_e^* \in \Dist(\calR_e^*)$ such that for all $\ve \in \calG(e)$, $\calR_e(\ve) = \calR_e^*$ and $\resDist_e(\ve) = \resDist_e^*$.
\end{definition}

If an edge $e$ is strongly reset, it stochastically resets all the continuous
variables when it is taken, and the stochastic reset does not depend on their
values.
It means that for any $\ve, \ve'\in\calG(e)$, the distributions
$\resDist_e(\ve)$ and $\resDist_e(\ve')$ are equal. If an edge is strongly
reset, we denote its reset distribution by $\resDist_e^*$ (which equals
$\resDist_e(\ve)$ for all $\ve\in\IR^n$).

Let $\calH = (L, \calA, E, \ldots)$ be an SHS. We denote by
\begin{align*}
C^\calH = \{ (e_0, e_1, \ldots, e_m) \in E^{m + 1} \mid\
&\forall 0 \leq i \leq m, e_i = (\ell_i, a_i, \ell_i'),\\
& \ell'_{m} = \ell_0, \forall 0 \leq i < m, \ell_i' = \ell_{i+1},\\
&\text{and } \forall 0 \leq i < j \leq m, e_i \neq e_j\}
\end{align*}
the set of \emph{simple cycles} of $\calH$.

\begin{definition}[\Cyclereset{} SHSs]
	We say that an SHS $\calH$ is \emph{\cyclereset{}} if for every simple
	cycle $(e_0,\ldots,e_m)\in C^\calH$, there exists $0\le i\le m$ such
	that $e_i$ is strongly reset.
\end{definition}

We show two independent and very convenient results of \cyclereset{} SHSs: such
SHSs are \emph{decisive w.r.t.\ any measurable set} (the proof of this
statement relies on the decisiveness criterion from
Proposition~\ref{prop:mainDecCritGen}), and \emph{admit a finite
abstraction}.

\begin{remark}
	It is interesting to notice that properties similar to ``one strong reset
	per cycle'' are given in various places throughout the literature about
	timed and hybrid systems, which may be due to technical usefulness. In~\cite{BBBM08}, the authors perform the
	quantitative analysis of stochastic timed automata with only one continuous
	variable, assuming that any bounded cycle in an abstraction contains a
	reset of the continuous variable. In~\cite{Gen05}, the class of
	\emph{relaxed o-minimal (non-deterministic) hybrid systems}, with one strong reset per cycle,
	is shown to admit a finite bisimulation, making the reachability problem
	decidable in cases where this bisimulation is effectively computable.
\end{remark}

\subsection{Decisiveness}

We motivate this section with an example of a simple non-decisive SHS, which we
will use to show that our decisiveness result is tight.

\begin{example} \label{ex:stochPacman}
	We add a stochastic layer to the hybrid system of
	Example~\ref{ex:pacman1}, pictured in Figure~\ref{fig:pacman}. The
	distributions on the time delays in locations $\ell_0$, $\ell_2$ and
	$\ell_4$
	are uniform distributions on the interval of allowed delays. For instance,
	at state $s = (\ell_0, (x, y))$, the distribution $\delDist_s$ follows a
	uniform
	distribution $\calU(0, 2 - y)$. In locations $\ell_1$ and $\ell_3$, the
	distributions on the delays are Dirac distributions, since the sets of
	allowed delays are singletons. As all resets are deterministic, they are
	simply modeled as Dirac distributions and since at most one edge is enabled
	in each state, the distributions $\edgDist_s$ are also Dirac distributions.

	It is proved in~\cite[Section~6.2.2]{BBB+14} that this SHS is not decisive
	w.r.t.\ $B = \{\ell_2\}\times\IR^2$.
	The reason is that from a state $s = (\ell_0, (0, y))$ with $0\le y < 1$,
	at each subsequent passage through location $\ell_0$, the value of $y$
	increases but stays bounded from above by $1$,
    which decreases the probability to take edge $e_0$ (and thus reach $B$).
    Let $\mu$ be the Dirac distribution at
	$(\ell_0, (0,0))$. As we will go infinitely often from $\mu$ through
	$\{\ell_0\}\times \{0\}\times\intervalco{0,1}$, we will never reach a state
	in $\Btilde$. However, it is proved that
	$\Prob_\mu^{\calT_\calH}(\GG\comp{B}) > 0$.
	The proof is quite technical and we do not recall it here.
	This implies in particular that
	\[
		\Prob_\mu^{\calT_\calH}(\F B\lor\F\Btilde) = \Prob_\mu^{\calT_\calH}(\F
		B) = 1 - \Prob_\mu^{\calT_\calH}(\GG\comp{B}) < 1
		\puncteq{,}
	\]
	which means that $\calT_\calH$ is not decisive w.r.t.\ $B$ from $\mu$.
\end{example}

\begin{proposition} \label{prop:strongResetAreDec}
	Every \cyclereset{} SHS is decisive w.r.t.\ any measurable set.
\end{proposition}
\newcommand{\Lsr}{\ensuremath{L^\mathsf{SR}}}
\newcommand{\Esr}{\ensuremath{E^\mathsf{SR}}}
\begin{proof}
	Let $\calH$ be a \cyclereset SHS, and $B\in\Sigma_{\calH}$ be a measurable
	set. We define
	\begin{align*}
		\Lsr &= \{ (\ell,\ell')\in L^2\mid \exists e'=(\ell, a, \ell')\in E
		\land \forall e=(\ell, a, \ell')\in E, e\text{ is strongly reset}\}
		\puncteq{,}\\
		\Esr &= \{ e = (\ell, a, \ell')\in E\mid (\ell, \ell')\in \Lsr\}
		\puncteq{.}
	\end{align*}
	We show that $\calT_\calH$ is strongly decisive w.r.t.\ $B$, using the
	criterion from Proposition~\ref{prop:mainDecCritGen}.
	In this proof, for $\ell$ a location of $\calH$, we abusively write $\ell$ instead of
	$\{\ell\}\times \IR^n$ in \LTL formulae and in conditional distributions.
	Since every infinite run goes infinitely often through at
	least one cycle, and $\calH$ is \cyclereset{}, it holds that for all
	$\nu\in\Dist(S_\calH)$,
	\[
	\Prob_\nu^{\calT_\calH}\left(\bigvee_{(\ell,\ell')\in \Lsr} \GG\F
	(\ell\land \X \ell')\right) = 1
	\puncteq{.}
	\]
	We set
	\[
	p = \min_{e\in \Esr, \resDist^*_e(\Btilde) < 1}
	\Prob^{\calT_\calH}_{(\resDist^*_e)_{\comp{(\Btilde)}}}(\F B) > 0
	\]
	if there is $e\in \Esr$ such that $\resDist^*_e(\Btilde) < 1$, and $p = 1$
	otherwise.
	Let $\nu\in\Dist(S_\calH)$, $(\ell,\ell')\in \Lsr$. We have that
	$\nu_{\ell,\ell'}$ is a linear
	combination of strong reset distributions. If $\nu_{\ell,\ell'}(\Btilde) <
	1$, we then have that
	\[
	\Prob^{\calT_\calH}_{(\nu_{\ell,\ell'})_{\comp{(\Btilde)}}}(\F B) \ge p
	\puncteq{.}
	\]
	We thus satisfy the hypotheses of Proposition~\ref{prop:mainDecCritGen},
	and $\calT_\calH$ is decisive w.r.t.\ $B$. \qed
\end{proof}

Placing (at least) one strong reset per simple cycle is an easy syntactic way
to guarantee that almost surely, infinitely many strong resets are
performed, which is the actual sufficient property used in the proof.  As
there are only finitely many edges, we can find a probability lower
bound $p$ on the probability to reach $B$ after any strong reset, as
required in the criterion of
Proposition~\ref{prop:mainDecCritGen}. Notice that as shown in
Example~\ref{ex:stochPacman}, having independent flows for each
variable and resetting each variable once in each cycle is not
sufficient to obtain decisiveness; variables need to be reset \emph{on
the same discrete transition} in each cycle.

\subsection{Existence of a Finite Abstraction} \label{sec:finiteAbs}
We show that \cyclereset{} SHSs admit a finite $\alpha$-abstraction.
We first give a simple example showing that without one strong reset per cycle,
some simple systems do not admit a finite $\alpha$-abstraction compatible with
the locations.

\begin{example} \label{ex:noFiniteAbs}
	\begin{figure}[tbh]
		\centering
		\includegraphics[width=0.35\columnwidth]{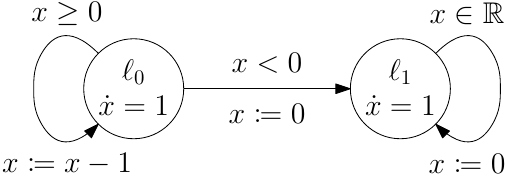}
		\caption{The time delays are given by exponential
			distributions from any state; resets are Dirac distributions.
			The smallest abstraction compatible with $\{\ell_1\}\times\IR$ is
			denumerable.}
		\label{fig:noFiniteAbs}
	\end{figure}
	Consider the SHS of Figure~\ref{fig:noFiniteAbs}. The
	self-loop edge of $\ell_0$ is the only edge not being strongly reset.
	We assume that we want to have an abstraction compatible with
	$s^*\defeq\{\ell_1\} \times\IR$.
	Using Procedure~\ref{alg:abs}, we have to split $\{\ell_0\}\times\IR^n$ in
	$s_0\defeq\Pre^{\calT_\calH}(s^*) = \{\ell_0\}\times\intervaloo{-\infty,0}$
	and $\{\ell_0\}\times\intervalco{0,+\infty}$, as all the states of $s_0$
	can reach $s^*$ with a
	positive probability in one step, but none of the states of
	$\{\ell_0\}\times\intervalco{0,+\infty}$ can. Then,
	$\{\ell_0\}\times\intervalco{0,+\infty}$ must also be split into
	$s_1\defeq\{\ell_0\}\times\intervalco{0,1}$ and
	$\{\ell_0\}\times\intervalco{1,+\infty}$ because the states of $s_1$ can
	all reach $s_0$ with a positive probability in one step, but none of the
	states of $\{\ell_0\}\times\intervalco{1,+\infty}$ can. By iterating this
	argument, we find that the smallest $\alpha$-abstraction compatible with
	$\{\ell_1\}\times\IR$ is denumerable, and the partition that it induces is
	composed of $s^*$, $s_0$ and $s_i = \{\ell_0\}\times\intervalco{i-1, i}$
	for $i\ge 1$. The underlying hybrid system actually belongs to the class of
	\emph{updatable timed automata}~\cite{BDFP04}, and the abstraction almost
	coincides with the \emph{region graph} of the automaton.
\end{example}

The \cyclereset{} assumption is sufficient to guarantee the existence of a
finite abstraction, as formulated in the next proposition.
Its proof consists of showing that Procedure~\ref{alg:abs} terminates under \cyclereset{} assumption. With our
decisiveness result, we can even show that this abstraction is \emph{sound}.

\begin{proposition} \label{prop:finiteAbs}
	Let $\calH$ be an SHS, and $B\in\Sigma_\calH$. If $\calH$ is \cyclereset,
	it has a finite and sound abstraction compatible with $B$ and with the
	locations.
\end{proposition}
\begin{proof}
	For $e = (\ell, a, \ell')\in E$ and $D\subseteq\IR^n$, we write
	$\Pre^{\calT_\calH}_e(\{\ell'\}\times D)$
	for the set of states of $\{\ell\}\times\IR^n$ that have a positive
	probability to reach $D$ through edge $e$.\footnote{Formally,
		the set
		$\Pre^{\calT_\calH}_e(\{\ell'\}\times D)$ can be expressed as
		\[
		\{s = (\ell,\ve)\in \{\ell\}\times\IR^n\mid
		\int_{\tau \in \IR^+} \left(
		\edgDist_{s+\tau}(e) \cdot
		\int_{\ve' \in \IR^n}
		\ind{D}(\ve')\, \ud (\resDist_e(\gamma_\ell(\ve, \tau)))(\ve') \right)
		\ud \delDist_s(\tau)
		> 0\}
		\puncteq{.}
		\]}
	Note that $\Pre^{\calT_\calH}(\{\ell'\}\times D) =
	\bigcup_{e=(\ell,a,\ell')} \Pre^{\calT_\calH}_e(\{\ell'\}\times D)$.
	The fact that we consider transitions combining both a
	continuous evolution and a discrete step is crucial; every transition has
	to go through an edge.

	We show that Procedure~\ref{alg:abs} terminates, with
	\[
	\calP_{init} =
	\bigcup_{\ell\in L} \big\{B\cap (\{\ell\}\times\IR^n),
	\comp{B}\cap(\{\ell\}\times\IR^n)\big\}\setminus\{\emptyset\}\puncteq{.}
	\]
	Now let $e = (\ell, a, \ell')$ be a \emph{strongly reset} edge. Notice that
	for any $D\subseteq\IR^n$,
	\[
	\Pre^{\calT_\calH}_e(\{\ell'\}\times D) =
	\begin{cases}
		\Pre^{\calT_\calH}_e(\{\ell'\}\times \IR^n) &\text{if}\ \resDist_e^*(D)
		> 0\puncteq{,}\\
		\emptyset &\text{if}\ \resDist_e^*(D) = 0\puncteq{.}
	\end{cases}
	\]
	Indeed, states that can take edge $e$ with a positive probability
	will always reach the same parts of the partition of
	$\{\ell'\}\times\IR^n$.
	Hence, edge $e$ induces one refinement in the states of location $\ell$, but
	no matter how the partition of $\{\ell'\}\times\IR^n$ is refined, it will
	not induce an extra refinement of the partition of $\{\ell\}\times\IR^n$,
	thanks to the strong reset.
	This may not be the case without strong reset, as shown in
	Example~\ref{ex:noFiniteAbs}.

	The procedure propagates each extra refinement following the edges
	backwards, but no extra refinement is propagated past strongly reset edges.
	As there is at least one strong reset per cycle, this implies that for each
	refinement, the number of iterations of the procedure is finite. Hence, it
	terminates and produces a finite abstraction.
	As cycle-reset SHSs are decisive w.r.t.\ any measurable set, we immediately
	obtain by Proposition~\ref{prop:decImpliesSound} that this abstraction is
	sound. \qed
\end{proof}

	\section{Reachability Analysis in Cycle-Reset Stochastic Hybrid Systems}
		\label{sec:ominimal}
		Our goal in this section is to perform a qualitative and quantitative analysis
of \cyclereset{} stochastic hybrid systems, using the properties established in
the previous section.
A first hurdle to circumvent is that arbitrary stochastic hybrid systems are
very difficult to apprehend algorithmically: for instance, the continuous
evolution of their variables may be defined by solutions of systems of
differential equations, which we do not know how to solve in general.
To make the problem more accessible, we follow the approach
of~\cite{BMRT04,LPS00} for non-deterministic hybrid systems by assuming that
some key components of our systems are definable in a mathematical structure.
This restricts the syntax of our SHSs in such a way that, as we will
show with a few extra hypotheses, qualitative and quantitative reachability
problems become decidable.

We adapt this point of view to the stochastic framework.
In Section~\ref{sec:qualReachAnal}, we formulate simple
decidability assumptions under which the finite $\alpha$-abstraction from
Section~\ref{sec:finiteAbs} is computable,
which makes qualitative reachability problems decidable. We identify a large
class of SHSs satisfying these hypotheses, namely \emph{\cyclereset{} o-minimal SHSs defined
in a decidable theory}.
In Section~\ref{sec:quantReachDec}, we also identify sufficient hypotheses for
the approximate quantitative problem to be decidable, in the form of a
finite  set of probabilities that have to be approximately computable.

\subsection{Qualitative Reachability Analysis} \label{sec:qualReachAnal}
We assume that the reader is familiar with the following model-theoretic and logical terms: first-order formula, structure, definability of sets, functions
and elements, theory, decidability of a theory. We refer
to~\cite{Ho97} for an introduction to these concepts. In what
follows, by \emph{definable}, we mean definable without parameters.

\begin{definition}[Definable SHS]
	Given a structure $\calM$, an SHS $\calH$
	is said to be \emph{defined in $\calM$} if
	for every location $\ell\in L$, $\flow_\ell$ is a function definable in
	$\calM$ and $\inv(\ell)$ is a set definable in $\calM$, and
	for every edge $e\in E$, the set $\calG(e)$ is definable and there exists a
	first-order formula $\resetFormula_e(\vect{x},\vect{y})$ such that
	$\ve'\in\calR_e(\ve)$ if and only if $\resetFormula_e(\ve,\ve')$ is true.
\end{definition}

Note that we require that the \emph{flow} of the dynamical
system in each location is definable in $\calM$, and not that it is the
solution to a definable system of differential equations.
We fix $\calM = \langle\IR,<,\ldots\rangle$ a structure, and $\calH$
an SHS.  We make three assumptions:
\begin{enumerate}[label=\textbf{(H\arabic*)},leftmargin=*]
	\item $\calH$ is defined in $\calM$, \label{hyp:defined}
	\item the theory of $\calM$ is decidable, \label{hyp:decidable}
	\item for every $\ell\in L$ and definable $D\subseteq\IR^n$,
	$\Pre^{\calT_\calH}(\{\ell\}\times D)$ is definable.
	\label{hyp:definableDelay}
\end{enumerate}

These hypotheses give us an automatic way to handle many questions
about $\calH$. For instance, for $s = (\ell,\ve)\in S_\calH$ and
$e=(\ell,a,\ell')\in E$, the set $I(s,e)$ (and thus $I(s)$) is definable since
$\inv(\ell)$, $\flow_\ell(\cdot,\cdot)$ and $\calG(e)$ are definable
by~\ref{hyp:defined}, and
\[
	I(s,e) = \{\tau \mid \tau \ge 0 \land
	           \forall \tau'(0\le\tau'\le\tau \implies \flow_\ell(s,\tau')\in
	           \inv(\ell)) \land
               \flow_\ell(s,\tau)\in\calG(e)
               \}
    \puncteq{.}
\]

The main issue when lifting such considerations to the stochastic setting is
that the definability of probabilities and measures is not guaranteed.
For instance, the function $x\mapsto
\frac{1}{x}$ is definable in
$\langle\IR,<,+,\cdot,0,1\rangle$, but its integral is not: for $t\ge 1$,
\[
	\int_1^t \frac{1}{x}\, \ud x = \log(t)
	\puncteq{.}
\]
Therefore, using probability measures given by a definable
probability density function is not sufficient to write arbitrary first-order
formulae about actual probabilities, so $\Pre^{\calT_\calH}(\{\ell\}\times D)$
is not necessarily definable, even for $D$ definable.
To compensate, we make assumption~\ref{hyp:definableDelay}, which amounts to
assuming that distributions $\delDist_s$ for $s\in S_\calH$
and $\resDist_e(\ve)$ for $e\in E$ and $\ve\in\IR^n$ still have some properties
that are sufficient to decide whether their evaluation on definable sets is
positive.
We summarize in the following proposition what the previous hypotheses entail.

\begin{restatable}{proposition}{decQualReach} \label{prop:decQualReach}
	We assume that $\calH$ is a \cyclereset SHS which satisfies
	hypotheses~\emph{\ref{hyp:defined}}, \emph{\ref{hyp:decidable}},
	and~\emph{\ref{hyp:definableDelay}}.
	Let $B\in S_\calH$ be a measurable and definable set of states and
	$\mu\in\Dist(S_\calH)$ be an initial distribution such that for every
	location $\ell$ and definable set $D\subseteq\IR^n$, we can decide whether
	$\mu(\{\ell\}\times D) > 0$.
	Then we can decide whether $\Prob_\mu^{\calT_\calH}(\F B) = 1$ and whether
	$\Prob_\mu^{\calT_\calH}(\F B) = 0$.
\end{restatable}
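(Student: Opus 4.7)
The plan is to reduce both decision problems to qualitative reachability in a computable finite abstraction of $\calT_\calH$. The ingredients are already assembled in the paper: by Proposition~\ref{prop:finiteAbs}, a \cyclereset{} SHS admits a finite sound $\alpha$-abstraction $\calT_2$ compatible with the locations and with $B$; since $\calT_2$ is finite it is trivially decisive w.r.t.\ every subset, so Proposition~\ref{prop:almostSureReach} gives $\Prob_\mu^{\calT_\calH}(\F B) = 1$ iff $\Prob_{\alpha_\#(\mu)}^{\calT_2}(\F \widehat{B}) = 1$, where $\widehat{B}$ denotes the (finite) collection of abstract classes forming $B$ (well-defined because $B$ is $\alpha$-closed). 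The ``$=0$'' question reduces to checking whether no class in $\mathrm{Supp}(\alpha_\#(\mu))$ can reach $\widehat{B}$ in the abstract graph, which is a positive-probability reachability property and is therefore preserved in both directions by any $\alpha$-abstraction.

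First I would run Procedure~\ref{alg:abs} on the initial partition $\calP_{\textit{init}}$ consisting of the intersections of $\{\{\ell\}\times \IR^n : \ell \in L\}$ with $\{B,\comp{B}\}$. The key invariant to maintain through the refinement is that every piece of the current partition is of the form $\{\ell\}\times D$ with $D \subseteq \IR^n$ definable in $\calM$. This is preserved by one iteration: by~\ref{hyp:defined} and~\ref{hyp:definableDelay}, the set $\Pre^{\calT_\calH}(P)$ is definable whenever $P$ is, and definability is closed under finite Boolean operations; and the test guarding the loop, namely non-emptiness of $P'\cap\Pre^{\calT_\calH}(P)$ and of $P'\setminus\Pre^{\calT_\calH}(P)$, is the satisfiability of a first-order sentence over $\calM$, hence decidable by~\ref{hyp:decidable}. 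Termination is guaranteed because, by Proposition~\ref{prop:finiteAbs}, a finite sound abstraction compatible with $\calP_{\textit{init}}$ exists, and the lemma accompanying Procedure~\ref{alg:abs} then says that the procedure terminates and returns the coarsest such partition.

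Once $\calT_2$ is computed, I would build its finite transition graph by testing, for each pair of classes $P, P' \in \calP$, whether $P \cap \Pre^{\calT_\calH}(P') \neq \emptyset$ (again a decidable first-order query). The support of $\alpha_\#(\mu)$ is the set of classes $P$ with $\mu(P) > 0$, and this is precisely where the positivity hypothesis on $\mu$ is used: each $P$ is a definable slice $\{\ell\}\times D$, so positivity of $\mu$ on $P$ is decidable by assumption. On the finite graph, qualitative reachability of $\widehat{B}$ from each vertex is a routine graph computation, and decisiveness of $\calT_2$ together with standard finite Markov-chain reasoning gives: $\Prob_{\alpha_\#(\mu)}^{\calT_2}(\F\widehat{B}) = 0$ iff no vertex in $\mathrm{Supp}(\alpha_\#(\mu))$ reaches $\widehat{B}$, and $\Prob_{\alpha_\#(\mu)}^{\calT_2}(\F\widehat{B}) = 1$ iff from every vertex reachable from $\mathrm{Supp}(\alpha_\#(\mu))$ one can still reach $\widehat{B}$. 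Lifting the answers back through Proposition~\ref{prop:almostSureReach} yields the decision procedures claimed.

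The main obstacle I anticipate is sustaining definability at each step of the construction: without~\ref{hyp:definableDelay} a definable set $D$ can have a non-definable $\Pre^{\calT_\calH}$-image (since integrals of definable densities need not be definable, as the excerpt illustrates with $\int 1/x$), which would break both the refinement step and the graph construction. A secondary subtlety worth flagging is that $\calT_2$ is only a \emph{qualitative} abstraction---it records where one-step positive probability flows but not the values---which is why this approach settles the ``$=1$'' and ``$=0$'' questions, but not the quantitative approximation (addressed separately in Section~\ref{sec:quantReachDec}).
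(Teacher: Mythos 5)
Your proposal is correct and follows essentially the same route as the paper: compute the finite sound abstraction of Proposition~\ref{prop:finiteAbs} effectively via Procedure~\ref{alg:abs} (definability of $\Pre^{\calT_\calH}$ from~\ref{hyp:defined} and~\ref{hyp:definableDelay}, decidability of the first-order tests from~\ref{hyp:decidable}), use the positivity hypothesis on $\mu$ to determine the qualitative initial distribution on the abstract classes, and settle ``$=0$'' by preservation of positive-probability reachability and ``$=1$'' by soundness plus decisiveness of the finite abstraction through Proposition~\ref{prop:almostSureReach}. The only difference is that you spell out the intermediate effectiveness steps (the invariant that partition pieces stay definable, the construction of the abstract transition graph) in more detail than the paper does, which is harmless.
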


In particular, this implies that we can decide whether a definable state is in $\Btilde$, which is not the case in general, as shown through Proposition~\ref{prop:undecSHS}.
\begin{proof}
	Using that $\calH$ is \cyclereset, we
	know by Proposition~\ref{prop:finiteAbs} that there exists a
	sound $\alpha$-abstraction of $\calH$ compatible with $B$, which is a
	finite STS (that is, a finite Markov chain)
	$\calT^*_\calH = (\calP_\calH,2^{\calP_\calH},\kappa_\calH)$. Partition
	$\calP_\calH$ is definable by~\ref{hyp:definableDelay} and
	Procedure~\ref{alg:abs}, and Procedure~\ref{alg:abs} is an algorithm thanks
	to hypothesis~\ref{hyp:decidable}.

	By hypothesis on the initial distribution $\mu$, we can decide for which
	sets $P\in\calP_\calH$ we have $\mu(P) > 0$. Therefore, we can compute an
	initial distribution $\mu'\in\Dist(\calP_\calH)$ which is qualitatively
	equivalent to $\alpha_\#(\mu)$.
	By the properties of an abstraction, it holds that
	\[
	\Prob_\mu^{\calT_\calH}(\F B) = 0\Longleftrightarrow
	\Prob_{\alpha_\#(\mu)}^{\calT^*_\calH}(\F \alpha(B)) = 0
	\Longleftrightarrow
	\Prob_{\mu'}^{\calT^*_\calH}(\F \alpha(B)) = 0
	\puncteq{,}
	\]
	which can be decided for a finite Markov chain.
	Similarly, by Proposition~\ref{prop:almostSureReach}, as $\calT^*_\calH$ is
	a sound $\alpha$-abstraction and is decisive w.r.t.\ $\alpha(B)$ (since its
	state space is finite), it holds that
	\[
	\Prob_\mu^{\calT_\calH}(\F B) = 1\Longleftrightarrow
	\Prob_{\alpha_\#(\mu)}^{\calT^*_\calH}(\F \alpha(B)) = 1
	\Longleftrightarrow
	\Prob_{\mu'}^{\calT^*_\calH}(\F \alpha(B)) = 1
	\puncteq{,}
	\]
	which can also be decided. \qed
\end{proof}

\paragraph{O-minimal SHSs}
We identify a large subclass of SHSs satisfying hypotheses~\ref{hyp:defined},
\ref{hyp:decidable}, and~\ref{hyp:definableDelay}. This subclass
consists of the \emph{o-minimal SHSs}, i.e., SHSs
defined in an \emph{o-minimal structure} (introduced in~\cite{vdD84,PS86}),
with additional assumptions on the probability distributions.

\begin{definition}
	A totally ordered structure $\calM = \langle M, <, \ldots\rangle$ is
	\emph{o-minimal} if every definable subset of $M$
	is a finite union of points and open intervals (possibly unbounded).
\end{definition}

In other words, the
definable subsets of $M$ are exactly the ones that are definable with
parameters in $\langle M, < \rangle$.  Some well-known structures are
o-minimal: the ordered additive group of rationals
$\langle\IQ,<,+,0\rangle$, the ordered additive group of reals $\Rlin
= \langle\IR,<,+,0,1\rangle$, the ordered field of reals $\Ralg =
\langle\IR,<,+,\cdot,0,1\rangle$, the ordered field of reals with the
exponential function $\Rexp = \langle\IR,<,+,\cdot,0,1,e^x
\rangle$~\cite{Wi96}.

There is no general result about the decidability of the theories of o-minimal
structures. A well-known case is the
Tarski-Seidenberg theorem, which asserts that there exists a
quantifier-elimination algorithm for sentences in the first-order language of
real closed fields~\cite{Tar48}. This result implies the decidability of the theory of $\Ralg$.
However, it is not known whether the theory of $\Rexp$ is decidable. Its decidability is implied by \emph{Schanuel's	conjecture}, a famous unsolved problem in transcendental number theory~\cite{MW}.

The o-minimality of $\calM$ implies that definable subsets of $M^n$ have a
very ``nice'' structure, described notably by the
\emph{cell decomposition theorem}~\cite{KPS86}. This implies in
particular that every subset of $\IR^n$ definable in an o-minimal structure
belongs to the $\sigma$-algebra of Borel sets of
$\IR^n$~\cite[Proposition~1.1]{Kai12}.
We have in addition the following result, showing that in an o-minimal structure, definable sets with positive Lebesgue measure coincide with definable sets with non-empty interior.

\begin{lemma}[{\cite[Remark~2.1]{Kai12}}] \label{lem:lebesgueOpen}
	Let $\lebesgue{n}$ be the Lebesgue measure on $\IR^n$. Let $\calM$ be an
	o-minimal structure. If $A\subseteq\IR^n$ is definable in
	$\calM$, then $\lebesgue{n}(A) > 0$ if and only if $A^\circ \neq \emptyset$,
	where $A^\circ$ denotes the interior of $A$.
\end{lemma}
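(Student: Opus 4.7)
The plan is to derive the lemma from the \emph{cell decomposition theorem} for o-minimal structures, which is the basic structural tool for definable sets in this setting. The forward direction is trivial: if $A^\circ \neq \emptyset$, then $A$ contains an open ball of $\IR^n$, and such a ball has positive Lebesgue measure, so $\lebesgue{n}(A) > 0$. The nontrivial direction is the converse.

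For the converse, I would first invoke cell decomposition to write the definable set $A$ as a finite disjoint union $A = C_1\sqcup\cdots\sqcup C_k$ of cells. Recall that cells in $\IR^n$ are built inductively: a cell in $\IR$ is a point or an open interval, and a cell in $\IR^{m+1}$ is either the graph of a definable continuous function over a cell $D\subseteq \IR^m$, or a ``band'' of the form $\{(\vect{x},y)\in D\times\IR \mid f(\vect{x}) < y < g(\vect{x})\}$ for definable continuous (possibly $\pm\infty$-valued) $f,g$. This inductive structure equips each cell with a well-defined dimension between $0$ and $n$, and a cell of dimension $n$ in $\IR^n$ is (by construction) an open subset of $\IR^n$.

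The key intermediate claim is that any cell $C\subseteq \IR^n$ of dimension $d<n$ has $\lebesgue{n}(C)=0$. I would prove this by induction on $n$, using Fubini's theorem: such a cell is, up to a definable reordering of coordinates, contained in the graph of a definable continuous function $\IR^{n-1}\to\IR$ (for the graph case) or a finite union of such graphs (for lower-dimensional bands obtained from cells of dimension $<n-1$ in $\IR^{n-1}$). A graph over a subset of $\IR^{n-1}$ has vanishing $n$-dimensional Lebesgue measure by Fubini, since each vertical slice is a single point. The technical core of the argument lives here; this is the step I expect to be the main obstacle, because one has to carefully match the syntactic notion of cell dimension with the measure-theoretic vanishing, and handle unbounded cells (in which case one applies the argument to bounded pieces and takes a countable union, or just notes that the graph argument works without boundedness).

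Given the claim, finish as follows: since $\lebesgue{n}(A)=\sum_{i=1}^k \lebesgue{n}(C_i)>0$, at least one cell $C_{i_0}$ satisfies $\lebesgue{n}(C_{i_0})>0$. By the claim, $C_{i_0}$ must have dimension $n$, and hence $C_{i_0}$ is open in $\IR^n$. Since $C_{i_0}\subseteq A$, we conclude $C_{i_0}\subseteq A^\circ$, so $A^\circ\neq\emptyset$, as required.
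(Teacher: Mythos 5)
Your argument is correct, but note that the paper does not prove this lemma at all: it is imported verbatim from the literature (Kaiser, cited as~\cite[Remark~2.1]{Kai12}), so there is no in-paper proof to compare against. What you give is the standard self-contained derivation from the cell decomposition theorem, and its structure is sound: the forward direction is immediate since a nonempty interior contains a ball of positive measure; for the converse, decompose $A$ into finitely many cells, observe that cells of dimension $n$ in $\IR^n$ are open by construction, and show that cells of dimension $<n$ are $\lebesgue{n}$-null, so a set of positive measure must contain an open cell and hence have nonempty interior. The only place where your write-up is slightly muddled is the null-measure claim for low-dimensional cells: a band over a cell $D\subseteq\IR^{n-1}$ of dimension $<n-1$ is not naturally ``a finite union of graphs'' over the first $n-1$ coordinates; the cleaner induction is that such a band is contained in $D\times\IR$, where $\lebesgue{n-1}(D)=0$ by the induction hypothesis, so $\lebesgue{n}(D\times\IR)=0$ by Fubini, while the graph case is null directly because every vertical fiber is a singleton (cells are Borel, so Fubini--Tonelli applies, and boundedness plays no role). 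With that adjustment your proof is complete and matches how this fact is usually established; alternatively one can simply cite it, as the paper does, since it also follows from the standard fact that for definable sets in an o-minimal expansion of the real field, having dimension $n$, having nonempty interior, and having positive Lebesgue measure are all equivalent.
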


In this work, we simply define an \emph{o-minimal SHS} as an SHS defined in an o-minimal structure.

\begin{remark}
The definition of  \emph{o-minimal (non-deterministic) hybrid system} in the
literature usually assumes that all edges are strongly reset. \emph{O-minimal
hybrid systems} were first introduced in~\cite{LPS00}, and further studied
notably in~\cite{BMRT04}. The strong reset hypothesis was relaxed
in~\cite{Gen05} to ``one strong reset per cycle''.
The main result showing the interest of such hybrid systems
is that they admit a finite abstraction (called \emph{time-abstract
bisimulation} in this case), which is computable when the underlying theory is
decidable. This finite abstraction does not necessarily extend to a
``stochastic'' abstraction (as defined in Section~\ref{sec:abstractions}), as
there may be transitions that have probability $0$
to happen, and the corresponding sets of states may not be definable without
hypothesis~\ref{hyp:definableDelay}.
\end{remark}

Let $\calM = \langle\IR,<,+,\ldots\rangle$ be an o-minimal structure whose
theory is decidable, such as $\Ralg$. Let
$\calH=(\calH',\delDist_L,\resDist_\calR,\edgDist)$ be a \cyclereset
o-minimal SHS defined in $\calM$. It therefore satisfies
hypotheses~\ref{hyp:defined} and~\ref{hyp:decidable}.
We still lack some hypotheses about the definability of the probability
distributions to obtain the definability of the finite abstraction. Let
$\mu\in\Dist(S_\calH)$ be an initial distribution. We make the following
assumptions,
which we denote by \hyp:
\begin{itemize}
	\item for all $s = (\ell,\ve)\in L\times\IR^n$, if $I(s)$ is finite, then
	$\delDist_s$ is equivalent to the uniform discrete distribution on $I(s)$;
	if $I(s)$ is infinite, then $\delDist_s$ is equivalent to the Lebesgue
	measure on $I(s)$;
	\item for each $\ell\in L$, the conditional initial distribution
	$\mu_{\{\ell\}\times\IR^n}$ is either equivalent to the discrete measure on
	some finite definable support $D_{\ell}$, or equivalent to the Lebesgue
	measure on a definable support $D_{\ell}$;
	\item for $e\in E$, $\ve\in\IR^n$, we require that $\calR_e(\ve)$ is either
	finite or has non-zero Lebesgue measure $\lebesgue{n}$; $\resDist_e(\ve)$
	is respectively either equivalent to the discrete measure on $\calR_e(\ve)$
	or equivalent to the Lebesgue measure on $\calR_e(\ve)$.
\end{itemize}

The first requirement, about the distribution on the time delays, is a standard
assumption in the case of stochastic timed
automata~\cite{BBBBG08,BBBM08,BBB+14}.
Notice that as $I(s)\subseteq\IR^+$ is definable in an o-minimal structure, it
is infinite if and only if it contains an interval with non-empty interior,
hence if and only if $\lebesgue{1}(I(s)) > 0$ (by Lemma~\ref{lem:lebesgueOpen}).
We formulate a similar requirement on the initial distribution and on the reset
distributions, and we restrict their support to be either finite or to have
non-zero Lebesgue measure.
This postulate is quite natural and easily satisfied: for instance, exponential
distributions (resp.\ uniform distributions $\calU(a,b)$) are
equivalent to the Lebesgue measure on $\IR^+$ (resp.\ $\intervalcc{a,b}$).

If a distribution $\nu$ is discrete with finite definable support $T$, for all
definable sets $D$, we can express that $\nu(D) > 0$ as a first-order formula
$(\exists s\in T, s\in D)$.
If $\nu$ is equivalent to the Lebesgue measure on a definable set
$T$ with $\nu(T) = 1$, for all definable sets $D\subseteq \IR^n$,
\begin{align*}
\nu(D) > 0 &\Longleftrightarrow \nu(T\cap D) > 0 \\
&\Longleftrightarrow \lebesgue{n}(T\cap D) > 0 \\
&\Longleftrightarrow (T\cap D)^\circ \neq \emptyset\qquad\quad
\text{(by Lemma~\ref{lem:lebesgueOpen})}\\
&\Longleftrightarrow \exists \vect{x}\in T\cap D,\exists r > 0 \land
(\forall
\vect{y}, \norm{\vect{x} - \vect{y}} < r \implies \vect{y}\in T\cap
D)
\puncteq{,}
\end{align*}
where $\norm{\vect{z}} = \sum_{i=1}^n \abs{z_i}$ is a definable function, as we
have assumed that $+$ is in $\calM$.
The same reasoning can be applied to distinguish whether the supports of the distributions are finite or have positive Lebesgue measure.

\begin{remark}
	We could also consider distributions that are a linear combination of both
	a discrete distribution and a distribution equivalent to the Lebesgue
	measure. This would require for each occurring distribution to have
	distinct first-order formulae to define the finite support and the
	continuous support.
	We choose to omit this generalization in order not to complicate the
	notations.
\end{remark}

Thanks to these hypotheses, we show that we obtain~\ref{hyp:definableDelay}.
Let $e = (\ell, a, \ell') \in E$ be an edge, and $D\subseteq\IR^n$ be a
definable set.
The set of states in $\{\ell\}\times\IR^n$ that can reach $D$ through $e$
\emph{without delay} with a positive probability is given by
\[
	D'=\{\ell\}\times\{\ve'\in\IR^n\mid \ve'\in \calG(e),
	\resDist_e(\ve')(D\cap\calR_e(\ve')) > 0\}
	\puncteq{.}
\]
Therefore, we have that
\begin{align*}
	\Pre^{\calT_\calH}(\{\ell'\}\times D) = \bigcup_{
	(\ell,a,\ell')\in E}\{\ell\}\times
	\{\ve\in\IR^n\mid \delDist_{(\ell,\ve)}(\flow_\ell(\ve,\cdot)^{-1}(D')) >
	0\}
\end{align*}
is definable. By Proposition~\ref{prop:decQualReach}, we conclude that the
qualitative reachability problem is decidable for \cyclereset o-minimal SHSs satisfying \hyp. We summarize these ideas in the next proposition.

\begin{restatable}{proposition}{sohsAreDec}
	Let $\calH$ be a \cyclereset o-minimal SHS defined in a structure
	whose theory is decidable. Let $B\in\Sigma_\calH$ be a definable set and
	$\mu\in\Dist(S_\calH)$ be an initial distribution. We assume that
	assumption \emph{\hyp} holds. Then one can decide whether
	$\Prob^{\calT_\calH}_\mu(\F B) = 1$ and whether $\Prob^{\calT_\calH}_\mu(\F
	B) = 0$.
\end{restatable}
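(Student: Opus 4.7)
The plan is to reduce the claim to Proposition~\ref{prop:decQualReach} by verifying its three hypotheses \ref{hyp:defined}, \ref{hyp:decidable}, \ref{hyp:definableDelay}, together with the side condition on the initial distribution. Hypotheses \ref{hyp:defined} and \ref{hyp:decidable} are immediate from the statement: $\calH$ is by assumption an o-minimal SHS defined in $\calM$, and the theory of $\calM$ is decidable. The cycle-reset assumption is also given. Hence all the work reduces to establishing \ref{hyp:definableDelay} (definability of $\Pre^{\calT_\calH}(\{\ell\}\times D)$ for every location $\ell$ and every definable $D\subseteq\IR^n$) and the condition that $\mu(\{\ell\}\times D)>0$ is decidable for every definable $D$.

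For \ref{hyp:definableDelay}, I would rely on the general principle, spelled out just before the proposition, that under \hyp, for every distribution $\nu$ occurring in $\calH$ (a delay distribution $\delDist_s$ or a reset distribution $\resDist_e(\ve)$) and every definable set $D$, the predicate ``$\nu(D)>0$'' is expressible by a first-order formula in $\calM$. Indeed, by \hyp, each such $\nu$ is either equivalent to the discrete uniform distribution on a finite definable support $T$ (in which case $\nu(D)>0$ iff $\exists s\in T,\,s\in D$) or equivalent to the Lebesgue measure on a definable support $T$ with $\lebesgue{n}(T)>0$, in which case Lemma~\ref{lem:lebesgueOpen} gives $\nu(D)>0 \Longleftrightarrow (T\cap D)^\circ\neq\emptyset$, which is first-order expressible using the definable norm $\norm{\cdot}$. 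The two cases can themselves be distinguished by a first-order formula using Lemma~\ref{lem:lebesgueOpen}. Plugging these expressions into the explicit decomposition
\[
\Pre^{\calT_\calH}(\{\ell'\}\times D)=\bigcup_{(\ell,a,\ell')\in E}\{\ell\}\times\{\ve\in\IR^n\mid \delDist_{(\ell,\ve)}(\flow_\ell(\ve,\cdot)^{-1}(D'))>0\},
\]
with $D'$ itself definable via the same mechanism applied to $\resDist_e(\ve')$, yields a first-order definition of the \Pre-image, establishing \ref{hyp:definableDelay}.

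The same argument applied to the components of $\mu$ given by \hyp (each $\mu_{\{\ell\}\times\IR^n}$ is equivalent to either a finite discrete or a Lebesgue-type distribution on a definable support) shows that $\mu(\{\ell\}\times D)>0$ is also expressible as a first-order sentence in $\calM$, hence decidable by \ref{hyp:decidable}. With all hypotheses of Proposition~\ref{prop:decQualReach} in place, a direct invocation gives decidability of the two qualitative reachability problems. The main conceptual obstacle -- that measures of definable sets are not a priori first-order definable quantities in $\calM$ -- is circumvented precisely by the restriction in \hyp to the two ``tame'' classes of distributions, combined with the bridge provided by Lemma~\ref{lem:lebesgueOpen} between positive Lebesgue measure and non-empty interior for definable sets in an o-minimal structure.
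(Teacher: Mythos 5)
Your proposal is correct and follows essentially the same route as the paper: verify hypotheses~\ref{hyp:defined} and~\ref{hyp:decidable} directly, use the dichotomy in \hyp{} together with Lemma~\ref{lem:lebesgueOpen} to express ``$\nu(D)>0$'' as a first-order formula for the delay, reset, and initial distributions, deduce the definability of $\Pre^{\calT_\calH}(\{\ell\}\times D)$ (hypothesis~\ref{hyp:definableDelay}) and the decidability of $\mu(\{\ell\}\times D)>0$, and conclude by Proposition~\ref{prop:decQualReach}. This matches the argument the paper gives in the discussion preceding the proposition.
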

In particular, we can decide the qualitative reachability problems for \cyclereset SHSs defined in $\Ralg$ and satisfying \hyp. Assuming Schanuel's
conjecture~\cite{MW}, we could extend this result to $\Rexp$.

\subsection{Approximate Quantitative Reachability Analysis}
\label{sec:quantReachDec}
In this section, we show under strengthened numerical hypotheses that we
can solve the quantitative reachability problem in \cyclereset{} SHSs.
Let $\calH$ be a \cyclereset{} SHS, $B\in\Sigma_\calH$ and
$\mu\in\Dist(S_\calH)$.  Our goal is to apply the approximation scheme
described in Section~\ref{sec:dec} in order to approximate
$\Prob^{\calT_\calH}_\mu(\F B)$.  To do so, we remind that we require
that $\calT_\calH$ is decisive w.r.t.\ $B$, which is implied by the
\cyclereset{} hypothesis (Proposition~\ref{prop:strongResetAreDec}),
and the ability to compute for all $m\in\IN$, an arbitrarily close
approximation of
\[
\pmYes = \Prob_\mu^\calT(\F[\leq m] B)\ \text{and}\ \pmNo
	= \Prob_\mu^\calT(\comp{B} \U[\leq m]{\Btilde})
	\puncteq{.}
\]
Notice that
\begin{align*}
\pmYes &= \sum_{j=0}^{m}
\Prob_\mu^{\calT_\calH}(\Cyl(\underbrace{\comp{B},\ldots,\comp{B}}_{j\
	\text{times}}, B)) \\
&= \sum_{j=0}^{m} \sum_{(\ell_0,\ldots,\ell_j)\in L^{j+1}}
\Prob^{\calT_\calH}_\mu(
\Cyl(\ell_0\cap\comp{B},\ldots,\ell_{j-1}\cap\comp{B}, \ell_j\cap B))
\puncteq{,}
\end{align*}
where we write $\ell_i$ as a shorthand for $\{\ell_i\}\times\IR^n$.
To compute $\pmYes$, it is thus sufficient to be able to compute, for every
$0\le j\le m$ and for every path $(\ell_0,\ldots,\ell_{j})$ of the graph $(L,
E)$, the probability
\[
	\Prob^{\calT_\calH}_\mu(
	\Cyl(\ell_0\cap\comp{B},\ldots,\ell_{j-1}\cap\comp{B}, \ell_j\cap B))
	\puncteq{.}
\]

Using the \cyclereset{} hypothesis, we show that we can express this
probability as the product of probabilities of paths with bounded length
$b\in\IN$,
where $b$ is the length of the longest path without encountering a
strong reset. If $j$ is greater than $b$, we know that there is necessarily a
smallest index $i \le b$ for which all edges $(\ell_i, a, \ell_{i+1})$ are
strongly reset.
We have
\begin{multline*}
\Prob^{\calT_\calH}_\mu(
\Cyl(\ell_0\cap\comp{B},\ldots,\ell_{j-1}\cap\comp{B}, \ell_j\cap B)) \\
= \Prob^{\calT_\calH}_\mu(
\Cyl(\ell_0\cap\comp{B},\ldots,\ell_i\cap \comp{B}))
 \cdot\Prob^{\calT_\calH}_{\mu_i}(
\Cyl(\ell_{i}\cap\comp{B},\ldots,\ell_{j-1}\cap\comp{B}, \ell_j\cap B))
\puncteq{,}
\end{multline*}
where $\mu_i = \mu_{\ell_0\cap\comp{B},\ldots,\ell_i\cap \comp{B}}$,
using Lemma~\ref{lem:splitCyl}.
As there may be multiple strongly reset edges between $\ell_i$ and
$\ell_{i+1}$, we can rewrite the second factor as
\[
	\sum_{e = (\ell_i,a,\ell_{i+1})}
	\probEdge_e(\mu_i)\cdot
	\Prob^{\calT_{\calH}}_{\resDist^*_{e}}(\Cyl(\ell_{i+1}\cap\comp{B},\ldots,\ell_{j-1}\cap\comp{B},
	\ell_j\cap B))
	\puncteq{,}
\]
where
\[
	\probEdge_e(\mu_i) = \int_{s\in S_\calH} \int_{\tau\in\IR^+}
	\edgDist_{s+\tau}(e)\,\ud\delDist_s(\tau)\,
	\ud\mu_i(s)
\]
is the probability to take edge $e$ in one step from distribution $\mu_i$. We
can then iterate this reasoning for each probability
$\Prob^{\calT_{\calH}}_{\resDist^*_{e}}(\Cyl(\ell_{i+1}\cap\comp{B},\ldots,\ell_{j-1}\cap\comp{B},
\ell_j\cap B))$; notice that this value does not depend on the initial
distribution.
A similar formula can be obtained to compute $\pmNo$, by replacing the final
occurrence of $B$ by $\Btilde$.

\newcommand{\PathsNoSR}{\ensuremath{L^{\mathsf{NoSR}}}}
We write
\[
\PathsNoSR = \{(\ell_0,\ldots,\ell_j) \in L^{j+1}\mid j\in\IN,
\forall 0\le i\le j - 1, \exists e= (\ell_i, a, \ell_{i+1})\in E\ \text{such
that $e$ is non-strongly reset}\}
\]
for the set of paths of the underlying graph of $\calH$ such that strong resets
can be avoided. This set is finite by the \cyclereset{} hypothesis.
To approximate $\Prob^{\calT_\calH}_\mu(\F B)$, it is sufficient to be able to
approximate arbitrarily closely
\begin{itemize}
	\item for all paths $(\ell_0,\ldots,\ell_j)\in\PathsNoSR$, distributions
	$\nu\in\{\mu\}\cup\{\resDist^*_{e'}\mid e'\in E\ \text{strongly reset}\}$,
	the probability
	\[\Prob^{\calT_\calH}_\nu(\Cyl(\ell_0\cap\comp{B},
	\ldots,\ell_{j-1}\cap\comp{B}, \ell_j\cap B))\]
	and the same probability replacing the final occurrence of $B$ by $\Btilde$;
	\item for all edges $e = (\ell,a,\ell')\in E$ strongly reset, paths
	$(\ell_0,\ldots,\ell_{j-1},\ell)\in\PathsNoSR$, and distributions
	$\nu\in\{\mu\}\cup\{\resDist^*_{e'}\mid e'\in E\ \text{strongly reset}\}$,
	the probability
	\[\probEdge_e(\nu_{\ell_0\cap\comp{B}, \ldots,
	\ell_{j-1}\cap\comp{B}, \ell\cap \comp{B}})\puncteq{.}\]
\end{itemize}

Without strong resets (but still decisiveness w.r.t.\ $B$), a similar
scheme may work. However, this would require a way to compute
probabilities involving $\mu_{C_0,\ldots,C_n}$ with arbitrarily long
sequences $(C_i)_{0\le i\le n}$. With strong resets, we can compute
a finite number of probabilities and assemble them to compute $\pmYes$ and
$\pmNo$ for arbitrarily large values of $m$.

\begin{remark}
It is undecidable in general to decide whether some elements of the state space
are in $\Btilde$ (by-product of Proposition~\ref{prop:undecSHS}). However,
the definition of $\Btilde$ is given by a qualitative
reachability property: using the work done in the previous section, and under
the hypotheses of Proposition~\ref{prop:decQualReach}, we can obtain a
first-order formula defining $\Btilde$. These hypotheses can thus help compute
the probabilities involving $\Btilde$ required for the approximate quantitative
problem.
\end{remark}

	\section{Conclusion}
		\subsubsection*{Summary.}
This article presented in Section~\ref{sec:sts} how to solve reachability 
problems in stochastic transition systems (STSs) via the \emph{decisiveness} 
notion, introduced in~\cite{ABM07,BBBC18}. We notably solved in 
Lemma~\ref{lem:GBcGFAGen} a question that was left open in~\cite{BBBC18} about 
the almost-sure reachability of a set of states in the presence of an 
attractor. 
This allowed formulating a general sufficient condition for decisiveness 
in Proposition~\ref{prop:mainDecCritGen}, which encompasses known 
decisiveness criteria from the literature.

From Section~\ref{sec:stochHS} onward, we focused our attention on
\emph{hybrid} models.
We considered a stochastic extension of the classical \emph{hybrid 
systems}, called \emph{stochastic hybrid systems} (SHSs), and gave them a 
semantics as STSs in order to apply the theory developed in 
Section~\ref{sec:sts}.
We showed that the qualitative and 
quantitative reachability problems are undecidable even for reasonably 
well-behaved SHSs.
This result is not surprising, as reachability 
problems are already undecidable for simple classes of non-deterministic 
hybrid systems~\cite{HKPV,HR00}.
We then showed in Section~\ref{sec:decisiveSHS} that SHSs with one \emph{strong 
reset}
per cycle (\emph{\cyclereset{}}) are decisive with respect to any measurable set, using our new decisiveness criterion, and admit a finite abstraction.

We identified in Section~\ref{sec:qualReachAnal} reasonable 
assumptions leading to the effective computability of this abstraction.
These assumptions pertain to the definability of the different components of 
the SHSs (resets, guards, invariants, dynamics, specific properties of 
distributions) in a mathematical structure, and the decidability of first-order 
formulae in this structure.
Combined with the decisiveness results from Section~\ref{sec:decisiveSHS}, the 
finite abstraction can be used to decide qualitative reachability problems. We 
proved that 
\emph{o-minimal SHSs}, which are SHSs defined in an 
\emph{o-minimal} structure, satisfy these hypotheses. When the theory of the 
o-minimal structure is decidable (which is for instance the case of 
$\Ralg = \langle\IR,<,+,\cdot,0,1\rangle$), the nice properties of the definable 
sets allow deciding for a large class of measures if 
definable sets have positive measure. This is sufficient to compute the 
finite abstraction, which can then be used to decide qualitative 
reachability problems.
We ended the article (Section~\ref{sec:quantReachDec}) with sufficient numerical assumptions to solve  
approximate quantitative reachability problems in \cyclereset{} SHSs.

\subsubsection*{Possible extensions and future work.}
We identify some possible extensions of our results.

A first direction of study is to find other classes of decisive
stochastic systems that can be encompassed by our decisiveness
criterion (Proposition~\ref{prop:mainDecCritGen}). In that respect, a
good candidate is the class of \emph{stochastic regenerative Petri
nets}~\cite{HPRV-peva12,PHV16}.  An application of decisiveness
results to \emph{stochastic Petri nets} was briefly discussed
in~\cite[Section~8.3]{BBBC18}, but under severe constraints; we may
be able to relax part of these constraints with the generalized
criterion.

In~\cite[Sections~6 \&~7]{BBBC18}, the authors show that we can reduce the 
verification of a large class of properties (\emph{$\omega$-regular properties})
of STSs to the verification of reachability properties, under decisiveness 
assumptions. This generalization should be transferable to our work with stochastic hybrid systems.

In Section~\ref{sec:ominimal}, we circumvent the issue of the definability of 
measures and their integrals by using a specific property of the o-minimal 
structures (namely, that 
the Lebesgue measure of a definable set is positive if and only if the interior 
of that set is non-empty, which is definable as a first-order formula). 
However, more 
powerful results exist about the compatibility of o-minimal structures and 
measure theory~\cite{Kai12}. Some o-minimal structures are closed under 
integration with respect to a given measure (then called \emph{tame} measure). 
This consideration may help extend our results about o-minimal stochastic
hybrid systems to a larger class that is less restrictive with respect to 
probability distributions. It could also help for the quantitative problem, as 
approximations of probabilities may be definable.

\subsubsection*{Acknowledgments.}
We would like to thank the anonymous reviewers for their valuable advice, which notably helped simplify the proof of Lemma~\ref{lem:GBcGFAGen}.

	\bibliographystyle{elsarticle-num}
	\bibliography{stochomin}

\begin{thebibliography}{10}
\expandafter\ifx\csname url\endcsname\relax
  \def\url#1{\texttt{#1}}\fi
\expandafter\ifx\csname urlprefix\endcsname\relax\def\urlprefix{URL }\fi
\expandafter\ifx\csname href\endcsname\relax
  \def\href#1#2{#2} \def\path#1{#1}\fi

\bibitem{ABM07}
P.~A. Abdulla, N.~Ben~Henda, R.~Mayr,
  \href{https://doi.org/10.2168/LMCS-3(4:7)2007}{Decisive {M}arkov chains},
  Log. Methods Comput. Sci. 3~(4).
\newblock \href {http://dx.doi.org/10.2168/LMCS-3(4:7)2007}
  {\path{doi:10.2168/LMCS-3(4:7)2007}}.
\newline\urlprefix\url{https://doi.org/10.2168/LMCS-3(4:7)2007}

\bibitem{BBBC18}
N.~Bertrand, P.~Bouyer, T.~Brihaye, P.~Carlier,
  \href{https://doi.org/10.1016/j.jlamp.2018.03.004}{When are stochastic
  transition systems tameable?}, J. Log. Algebr. Meth. Program. 99 (2018)
  41--96.
\newblock \href {http://dx.doi.org/10.1016/j.jlamp.2018.03.004}
  {\path{doi:10.1016/j.jlamp.2018.03.004}}.
\newline\urlprefix\url{https://doi.org/10.1016/j.jlamp.2018.03.004}

\bibitem{BMRT04}
T.~Brihaye, C.~Michaux, C.~Rivi{\`{e}}re, C.~Troestler,
  \href{https://doi.org/10.1007/978-3-540-24743-2\_15}{On o-minimal hybrid
  systems}, in: Alur and Pappas  \cite{DBLP:conf/hybrid/2004}, pp. 219--233.
\newblock \href {http://dx.doi.org/10.1007/978-3-540-24743-2\_15}
  {\path{doi:10.1007/978-3-540-24743-2\_15}}.
\newline\urlprefix\url{https://doi.org/10.1007/978-3-540-24743-2\_15}

\bibitem{Gen05}
R.~Gentilini, \href{https://doi.org/10.1007/11603009\_14}{Reachability problems
  on extended o-minimal hybrid automata}, in: P.~Pettersson, W.~Yi (Eds.),
  Formal Modeling and Analysis of Timed Systems, Third International
  Conference, {FORMATS} 2005, Uppsala, Sweden, September 26-28, 2005,
  Proceedings, Vol. 3829 of Lecture Notes in Computer Science, Springer, 2005,
  pp. 162--176.
\newblock \href {http://dx.doi.org/10.1007/11603009\_14}
  {\path{doi:10.1007/11603009\_14}}.
\newline\urlprefix\url{https://doi.org/10.1007/11603009\_14}

\bibitem{LPS00}
G.~Lafferriere, G.~J. Pappas, S.~Sastry,
  \href{https://doi.org/10.1007/PL00009858}{O-minimal hybrid systems}, Math.
  Control Signals Syst. 13~(1) (2000) 1--21.
\newblock \href {http://dx.doi.org/10.1007/PL00009858}
  {\path{doi:10.1007/PL00009858}}.
\newline\urlprefix\url{https://doi.org/10.1007/PL00009858}

\bibitem{BK08}
C.~Baier, J.~Katoen, Principles of model checking, {MIT} Press, 2008.

\bibitem{HKPV}
T.~A. Henzinger, P.~W. Kopke, A.~Puri, P.~Varaiya,
  \href{https://doi.org/10.1006/jcss.1998.1581}{What's decidable about hybrid
  automata?}, J. Comput. Syst. Sci. 57~(1) (1998) 94--124.
\newblock \href {http://dx.doi.org/10.1006/jcss.1998.1581}
  {\path{doi:10.1006/jcss.1998.1581}}.
\newline\urlprefix\url{https://doi.org/10.1006/jcss.1998.1581}

\bibitem{HR00}
T.~A. Henzinger, J.~Raskin,
  \href{https://doi.org/10.1007/3-540-46430-1\_15}{Robust undecidability of
  timed and hybrid systems}, in: Lynch and Krogh  \cite{DBLP:conf/hybrid/2000},
  pp. 145--159.
\newblock \href {http://dx.doi.org/10.1007/3-540-46430-1\_15}
  {\path{doi:10.1007/3-540-46430-1\_15}}.
\newline\urlprefix\url{https://doi.org/10.1007/3-540-46430-1\_15}

\bibitem{AD94}
R.~Alur, D.~L. Dill, \href{https://doi.org/10.1016/0304-3975(94)90010-8}{A
  theory of timed automata}, Theor. Comput. Sci. 126~(2) (1994) 183--235.
\newblock \href {http://dx.doi.org/10.1016/0304-3975(94)90010-8}
  {\path{doi:10.1016/0304-3975(94)90010-8}}.
\newline\urlprefix\url{https://doi.org/10.1016/0304-3975(94)90010-8}

\bibitem{BO04}
A.~Berarducci, M.~Otero, \href{https://doi.org/10.1093/qmath/hah010}{An
  additive measure in o-minimal expansions of fields}, Q. J. Math. 55~(4)
  (2004) 411--419.
\newblock \href {http://dx.doi.org/10.1093/qmath/hah010}
  {\path{doi:10.1093/qmath/hah010}}.
\newline\urlprefix\url{https://doi.org/10.1093/qmath/hah010}

\bibitem{Tar48}
A.~Tarski, A decision method for elementary algebra and geometry, University of
  California Press, Berkeley, California, 1951.

\bibitem{BBBBG08}
C.~Baier, N.~Bertrand, P.~Bouyer, T.~Brihaye, M.~Gr{\"{o}}{\ss}er,
  \href{https://doi.org/10.1109/LICS.2008.25}{Almost-sure model checking of
  infinite paths in one-clock timed automata}, in: Proceedings of the
  Twenty-Third Annual {IEEE} Symposium on Logic in Computer Science, {LICS}
  2008, 24-27 June 2008, Pittsburgh, PA, {USA}, {IEEE} Computer Society, 2008,
  pp. 217--226.
\newblock \href {http://dx.doi.org/10.1109/LICS.2008.25}
  {\path{doi:10.1109/LICS.2008.25}}.
\newline\urlprefix\url{https://doi.org/10.1109/LICS.2008.25}

\bibitem{BBBM08}
N.~Bertrand, P.~Bouyer, T.~Brihaye, N.~Markey,
  \href{https://doi.org/10.1109/QEST.2008.19}{Quantitative model-checking of
  one-clock timed automata under probabilistic semantics}, in: Fifth
  International Conference on the Quantitative Evaluation of Systems {(QEST}
  2008), 14-17 September 2008, Saint-Malo, France, {IEEE} Computer Society,
  2008, pp. 55--64.
\newblock \href {http://dx.doi.org/10.1109/QEST.2008.19}
  {\path{doi:10.1109/QEST.2008.19}}.
\newline\urlprefix\url{https://doi.org/10.1109/QEST.2008.19}

\bibitem{BBB+14}
N.~Bertrand, P.~Bouyer, T.~Brihaye, Q.~Menet, C.~Baier, M.~Gr{\"{o}}{\ss}er,
  M.~Jurdzinski, \href{https://doi.org/10.2168/LMCS-10(4:6)2014}{Stochastic
  timed automata}, Log. Methods Comput. Sci. 10~(4).
\newblock \href {http://dx.doi.org/10.2168/LMCS-10(4:6)2014}
  {\path{doi:10.2168/LMCS-10(4:6)2014}}.
\newline\urlprefix\url{https://doi.org/10.2168/LMCS-10(4:6)2014}

\bibitem{Car17}
P.~Carlier, \href{https://tel.archives-ouvertes.fr/tel-01696130}{Verification
  of stochastic timed automata. (v{\'{e}}rification des automates
  temporis{\'{e}}s et stochastiques)}, Ph.D. thesis, University of
  Paris-Saclay, France (2017).
\newline\urlprefix\url{https://tel.archives-ouvertes.fr/tel-01696130}

\bibitem{HLS00}
J.~Hu, J.~Lygeros, S.~Sastry,
  \href{https://doi.org/10.1007/3-540-46430-1\_16}{Towards a theory of
  stochastic hybrid systems}, in: Lynch and Krogh
  \cite{DBLP:conf/hybrid/2000}, pp. 160--173.
\newblock \href {http://dx.doi.org/10.1007/3-540-46430-1\_16}
  {\path{doi:10.1007/3-540-46430-1\_16}}.
\newline\urlprefix\url{https://doi.org/10.1007/3-540-46430-1\_16}

\bibitem{CL07}
C.~G. Cassandras, J.~Lygeros,
  \href{https://doi.org/10.1201/9781420008548}{Stochastic Hybrid Systems}, CRC
  Press, 2007.
\newblock \href {http://dx.doi.org/10.1201/9781420008548}
  {\path{doi:10.1201/9781420008548}}.
\newline\urlprefix\url{https://doi.org/10.1201/9781420008548}

\bibitem{LP10}
J.~Lygeros, M.~Prandini,
  \href{https://doi.org/10.3166/ejc.16.583-594}{Stochastic hybrid systems: {A}
  powerful framework for complex, large scale applications}, Eur. J. Control
  16~(6) (2010) 583--594.
\newblock \href {http://dx.doi.org/10.3166/ejc.16.583-594}
  {\path{doi:10.3166/ejc.16.583-594}}.
\newline\urlprefix\url{https://doi.org/10.3166/ejc.16.583-594}

\bibitem{Bu12}
L.~M. Bujorianu, Stochastic reachability analysis of hybrid systems, Springer
  Science \& Business Media, 2012.
\newblock \href {http://dx.doi.org/10.1007/978-1-4471-2795-6}
  {\path{doi:10.1007/978-1-4471-2795-6}}.

\bibitem{PH09}
M.~Prandini, J.~Hu, \href{https://doi.org/10.1109/TAC.2008.2011011}{Application
  of reachability analysis for stochastic hybrid systems to aircraft conflict
  prediction}, {IEEE} Trans. Automat. Contr. 54~(4) (2009) 913--917.
\newblock \href {http://dx.doi.org/10.1109/TAC.2008.2011011}
  {\path{doi:10.1109/TAC.2008.2011011}}.
\newline\urlprefix\url{https://doi.org/10.1109/TAC.2008.2011011}

\bibitem{PHLS00}
M.~Prandini, J.~Hu, J.~Lygeros, S.~Sastry,
  \href{https://doi.org/10.1109/6979.898224}{A probabilistic approach to
  aircraft conflict detection}, {IEEE} Trans. Intelligent Transportation
  Systems 1~(4) (2000) 199--220.
\newblock \href {http://dx.doi.org/10.1109/6979.898224}
  {\path{doi:10.1109/6979.898224}}.
\newline\urlprefix\url{https://doi.org/10.1109/6979.898224}

\bibitem{He04}
J.~P. Hespanha, \href{https://doi.org/10.1007/978-3-540-24743-2\_26}{Stochastic
  hybrid systems: Application to communication networks}, in: Alur and Pappas
  \cite{DBLP:conf/hybrid/2004}, pp. 387--401.
\newblock \href {http://dx.doi.org/10.1007/978-3-540-24743-2\_26}
  {\path{doi:10.1007/978-3-540-24743-2\_26}}.
\newline\urlprefix\url{https://doi.org/10.1007/978-3-540-24743-2\_26}

\bibitem{LOQD17}
X.~Li, O.~Omotere, L.~Qian, E.~R. Dougherty,
  \href{https://doi.org/10.1186/s13637-017-0061-5}{Review of stochastic hybrid
  systems with applications in biological systems modeling and analysis},
  {EURASIP} J. Bioinformatics and Systems Biology 2017 (2017) 8.
\newblock \href {http://dx.doi.org/10.1186/s13637-017-0061-5}
  {\path{doi:10.1186/s13637-017-0061-5}}.
\newline\urlprefix\url{https://doi.org/10.1186/s13637-017-0061-5}

\bibitem{SH10}
A.~Singh, J.~P. Hespanha,
  \href{https://doi.org/10.1098/rsta.2010.0211}{Stochastic hybrid systems for
  studying biochemical processes}, Philosophical Transactions of the Royal
  Society A: Mathematical, Physical and Engineering Sciences 368~(1930) (2010)
  4995--5011.
\newblock \href {http://dx.doi.org/10.1098/rsta.2010.0211}
  {\path{doi:10.1098/rsta.2010.0211}}.
\newline\urlprefix\url{https://doi.org/10.1098/rsta.2010.0211}

\bibitem{DDL+12}
A.~David, D.~Du, K.~G. Larsen, A.~Legay, M.~Mikucionis, D.~B. Poulsen,
  S.~Sedwards, \href{https://doi.org/10.4204/EPTCS.92.9}{Statistical model
  checking for stochastic hybrid systems}, in: E.~Bartocci, L.~Bortolussi
  (Eds.), Proceedings First International Workshop on Hybrid Systems and
  Biology, {HSB} 2012, Newcastle Upon Tyne, UK, 3rd September 2012, Vol.~92 of
  {EPTCS}, 2012, pp. 122--136.
\newblock \href {http://dx.doi.org/10.4204/EPTCS.92.9}
  {\path{doi:10.4204/EPTCS.92.9}}.
\newline\urlprefix\url{https://doi.org/10.4204/EPTCS.92.9}

\bibitem{AKLP10}
A.~Abate, J.~Katoen, J.~Lygeros, M.~Prandini,
  \href{https://doi.org/10.3166/ejc.16.624-641}{Approximate model checking of
  stochastic hybrid systems}, Eur. J. Control 16~(6) (2010) 624--641.
\newblock \href {http://dx.doi.org/10.3166/ejc.16.624-641}
  {\path{doi:10.3166/ejc.16.624-641}}.
\newline\urlprefix\url{https://doi.org/10.3166/ejc.16.624-641}

\bibitem{APLS08}
A.~Abate, M.~Prandini, J.~Lygeros, S.~Sastry,
  \href{https://doi.org/10.1016/j.automatica.2008.03.027}{Probabilistic
  reachability and safety for controlled discrete time stochastic hybrid
  systems}, Automatica 44~(11) (2008) 2724--2734.
\newblock \href {http://dx.doi.org/10.1016/j.automatica.2008.03.027}
  {\path{doi:10.1016/j.automatica.2008.03.027}}.
\newline\urlprefix\url{https://doi.org/10.1016/j.automatica.2008.03.027}

\bibitem{BBRRV20}
P.~Bouyer, T.~Brihaye, M.~Randour, C.~Rivi{\`e}re, P.~Vandenhove,
  \href{http://dx.doi.org/10.4204/EPTCS.326.10}{Decisiveness of stochastic
  systems and its application to hybrid models}, in: J.~Raskin, D.~Bresolin
  (Eds.), Proceedings of the Eleventh International Symposium on Games,
  Automata, Logics, and Formal Verification, GandALF 2020, Brussels, Belgium,
  21-22nd September 2020, Vol. 326 of {EPTCS}, 2020, pp. 149--165.
\newblock \href {http://dx.doi.org/10.4204/EPTCS.326.10}
  {\path{doi:10.4204/EPTCS.326.10}}.
\newline\urlprefix\url{http://dx.doi.org/10.4204/EPTCS.326.10}

\bibitem{Ash72}
R.~B. Ash, Real analysis and probability, Vol.~11 of Probability and
  mathematical statistics, Academic {P}ress, {N}ew {Y}ork, 1972.

\bibitem{Pnu77}
A.~Pnueli, \href{https://doi.org/10.1109/SFCS.1977.32}{The temporal logic of
  programs}, in: 18th Annual Symposium on Foundations of Computer Science,
  Providence, Rhode Island, USA, 31 October - 1 November 1977, {IEEE} Computer
  Society, 1977, pp. 46--57.
\newblock \href {http://dx.doi.org/10.1109/SFCS.1977.32}
  {\path{doi:10.1109/SFCS.1977.32}}.
\newline\urlprefix\url{https://doi.org/10.1109/SFCS.1977.32}

\bibitem{Bog07}
V.~I. Bogachev, \href{https://doi.org/10.1007/978-3-540-34514-5}{Measure
  theory}, Vol.~1, Springer Science \& Business Media, 2007.
\newblock \href {http://dx.doi.org/10.1007/978-3-540-34514-5}
  {\path{doi:10.1007/978-3-540-34514-5}}.
\newline\urlprefix\url{https://doi.org/10.1007/978-3-540-34514-5}

\bibitem{GBK16}
D.~Gburek, C.~Baier, S.~Kl{\"{u}}ppelholz,
  \href{https://doi.org/10.4230/LIPIcs.ICALP.2016.102}{Composition of
  stochastic transition systems based on spans and couplings}, in: 43rd
  International Colloquium on Automata, Languages, and Programming, {ICALP}
  2016, July 11-15, 2016, Rome, Italy, Vol.~55 of LIPIcs, Schloss Dagstuhl -
  Leibniz-Zentrum fuer Informatik, 2016, pp. 102:1--102:15.
\newblock \href {http://dx.doi.org/10.4230/LIPIcs.ICALP.2016.102}
  {\path{doi:10.4230/LIPIcs.ICALP.2016.102}}.
\newline\urlprefix\url{https://doi.org/10.4230/LIPIcs.ICALP.2016.102}

\bibitem{ACH+}
R.~Alur, C.~Courcoubetis, N.~Halbwachs, T.~A. Henzinger, P.~Ho, X.~Nicollin,
  A.~Olivero, J.~Sifakis, S.~Yovine,
  \href{https://doi.org/10.1016/0304-3975(94)00202-T}{The algorithmic analysis
  of hybrid systems}, Theor. Comput. Sci. 138~(1) (1995) 3--34.
\newblock \href {http://dx.doi.org/10.1016/0304-3975(94)00202-T}
  {\path{doi:10.1016/0304-3975(94)00202-T}}.
\newline\urlprefix\url{https://doi.org/10.1016/0304-3975(94)00202-T}

\bibitem{Hen96}
T.~A. Henzinger, \href{https://doi.org/10.1109/LICS.1996.561342}{The theory of
  hybrid automata}, in: Proceedings, 11th Annual {IEEE} Symposium on Logic in
  Computer Science, New Brunswick, New Jersey, USA, July 27-30, 1996, {IEEE}
  Computer Society, 1996, pp. 278--292.
\newblock \href {http://dx.doi.org/10.1109/LICS.1996.561342}
  {\path{doi:10.1109/LICS.1996.561342}}.
\newline\urlprefix\url{https://doi.org/10.1109/LICS.1996.561342}

\bibitem{BDFP04}
P.~Bouyer, C.~Dufourd, E.~Fleury, A.~Petit,
  \href{https://doi.org/10.1016/j.tcs.2004.04.003}{Updatable timed automata},
  Theor. Comput. Sci. 321~(2-3) (2004) 291--345.
\newblock \href {http://dx.doi.org/10.1016/j.tcs.2004.04.003}
  {\path{doi:10.1016/j.tcs.2004.04.003}}.
\newline\urlprefix\url{https://doi.org/10.1016/j.tcs.2004.04.003}

\bibitem{Ho97}
W.~Hodges, A Shorter Model Theory, Cambridge University Press, 1997.

\bibitem{vdD84}
L.~{van den Dries},
  \href{https://doi.org/10.1016/S0049-237X(08)71811-1}{Remarks on {T}arski's
  problem concerning ({R}, +, *, exp)}, Studies in Logic and the Foundations of
  Mathematics 112~(C) (1984) 97--121.
\newblock \href {http://dx.doi.org/10.1016/S0049-237X(08)71811-1}
  {\path{doi:10.1016/S0049-237X(08)71811-1}}.
\newline\urlprefix\url{https://doi.org/10.1016/S0049-237X(08)71811-1}

\bibitem{PS86}
A.~Pillay, C.~Steinhorn, \href{https://doi.org/10.2307/2000052}{Definable sets
  in ordered structures. {I}}, Transactions of the American Mathematical
  Society 295~(2) (1986) 565--592.
\newblock \href {http://dx.doi.org/10.2307/2000052}
  {\path{doi:10.2307/2000052}}.
\newline\urlprefix\url{https://doi.org/10.2307/2000052}

\bibitem{Wi96}
A.~J. Wilkie, Model completeness results for expansions of the ordered field of
  real numbers by restricted {P}faffian functions and the exponential function,
  Journal of the American Mathematical Society 9~(4) (1996) 1051--1094.
\newblock \href {http://dx.doi.org/10.1090/S0894-0347-96-00216-0}
  {\path{doi:10.1090/S0894-0347-96-00216-0}}.

\bibitem{MW}
A.~Macintyre, A.~J. Wilkie, On the decidability of the real exponential field,
  in: Kreiseliana: About and around {G}eorg {K}reisel, A. K. Peters, 1996, pp.
  441--467.

\bibitem{KPS86}
J.~F. Knight, A.~Pillay, C.~Steinhorn, Definable sets in ordered structures.
  {II}, Transactions of the American Mathematical Society 295~(2) (1986)
  593--605.
\newblock \href {http://dx.doi.org/10.1090/S0002-9947-1986-0833698-1}
  {\path{doi:10.1090/S0002-9947-1986-0833698-1}}.

\bibitem{Kai12}
T.~Kaiser, \href{https://doi.org/10.1016/j.apal.2012.06.002}{First order
  tameness of measures}, Ann. Pure Appl. Logic 163~(12) (2012) 1903--1927.
\newblock \href {http://dx.doi.org/10.1016/j.apal.2012.06.002}
  {\path{doi:10.1016/j.apal.2012.06.002}}.
\newline\urlprefix\url{https://doi.org/10.1016/j.apal.2012.06.002}

\bibitem{HPRV-peva12}
A.~Horv{\'{a}}th, M.~Paolieri, L.~Ridi, E.~Vicario,
  \href{https://doi.org/10.1016/j.peva.2011.11.002}{Transient analysis of
  non-{M}arkovian models using stochastic state classes}, Perform. Eval.
  69~(7-8) (2012) 315--335.
\newblock \href {http://dx.doi.org/10.1016/j.peva.2011.11.002}
  {\path{doi:10.1016/j.peva.2011.11.002}}.
\newline\urlprefix\url{https://doi.org/10.1016/j.peva.2011.11.002}

\bibitem{PHV16}
M.~Paolieri, A.~Horv{\'{a}}th, E.~Vicario,
  \href{https://doi.org/10.1109/TSE.2015.2468717}{Probabilistic model checking
  of regenerative concurrent systems}, {IEEE} Trans. Software Eng. 42~(2)
  (2016) 153--169.
\newblock \href {http://dx.doi.org/10.1109/TSE.2015.2468717}
  {\path{doi:10.1109/TSE.2015.2468717}}.
\newline\urlprefix\url{https://doi.org/10.1109/TSE.2015.2468717}

\bibitem{DBLP:conf/hybrid/2004}
R.~Alur, G.~J. Pappas (Eds.), \href{https://doi.org/10.1007/b96398}{Hybrid
  Systems: Computation and Control, 7th International Workshop, {HSCC} 2004,
  Philadelphia, PA, USA, March 25-27, 2004, Proceedings}, Vol. 2993 of Lecture
  Notes in Computer Science, Springer, 2004.
\newblock \href {http://dx.doi.org/10.1007/b96398} {\path{doi:10.1007/b96398}}.
\newline\urlprefix\url{https://doi.org/10.1007/b96398}

\bibitem{DBLP:conf/hybrid/2000}
N.~A. Lynch, B.~H. Krogh (Eds.),
  \href{https://doi.org/10.1007/3-540-46430-1}{Hybrid Systems: Computation and
  Control, Third International Workshop, {HSCC} 2000, Pittsburgh, PA, USA,
  March 23-25, 2000, Proceedings}, Vol. 1790 of Lecture Notes in Computer
  Science, Springer, 2000.
\newblock \href {http://dx.doi.org/10.1007/3-540-46430-1}
  {\path{doi:10.1007/3-540-46430-1}}.
\newline\urlprefix\url{https://doi.org/10.1007/3-540-46430-1}

\bibitem{Dur19}
R.~Durrett, \href{https://doi.org/10.1017/9781108591034}{Probability: Theory
  and Examples}, 5th Edition, Cambridge Series in Statistical and Probabilistic
  Mathematics, Cambridge University Press, 2019.
\newblock \href {http://dx.doi.org/10.1017/9781108591034}
  {\path{doi:10.1017/9781108591034}}.
\newline\urlprefix\url{https://doi.org/10.1017/9781108591034}

\bibitem{Minsky67}
M.~L. Minsky, Computation: finite and infinite machines, Prentice-Hall Inc.,
  1967, prentice-Hall Series in Automatic Computation.

\end{thebibliography}
	\appendix
		\section{Expressing Properties of Runs} \label{sec:app-defs}
To express properties of runs of an STS $\calT = (S, \Sigma, \kappa)$, we use a 
notation very similar to 
\LTL for transition systems~\cite{Pnu77}, with each formula characterizing a 
measurable set of runs.
We define a language 
$\calL_{S,\Sigma}$ of formulae. Its syntax is defined by the following grammar:
\[
	\phi ::= B \mid \phi_1\lor\phi_2 \mid \phi_1\land\phi_2 \mid \lnot\phi_1
	         \mid \phi_1 \U[\bowtie\, k] \phi_2 
	\puncteq{,}
\]
where $B\in\Sigma$,
$\phi_1,\phi_2\in\calL_{S,\Sigma}$, 
${\bowtie}\in\{\le,\ge,=\}$,
and $k\in\IN$.
Let $\run = s_0s_1s_2\ldots$ be an run of $\calT$. We denote 
by $\run_{\ge i} = s_is_{i+1}s_{i+2}\ldots$ the run starting at the 
$i$\textsuperscript{th} step of $\run$. For each kind of formula $\phi$, we 
define when $\run$ satisfies formula $\phi$, denoted by $\run\models\phi$:
\begin{alignat*}{3}
	\run &\models B &&\Longleftrightarrow{}
	&&s_0 \in B\puncteq{,} \\
	\run &\models \phi_1\lor\phi_2 &&\Longleftrightarrow{}
	&&\run \models\phi_1\ \text{or}\ \run\models\phi_2\puncteq{,} \\
	\run &\models \phi_1\land\phi_2 &&\Longleftrightarrow{}
	&&\run \models\phi_1\ \text{and}\ \run\models\phi_2\puncteq{,} \\
	\run &\models \lnot\phi_1 &&\Longleftrightarrow{}
	&&\run\not\models\phi_1\puncteq{,} \\
	\run &\models \phi_1\U[\bowtie\, k] \phi_2 &&{}\Longleftrightarrow{}
	&&\exists i\in\IN, i\,\bowtie\,k,\ \text{s.t.}\ \forall 0\le j<i, \run_{\ge 
	j}\models\phi_1\ \text{and}\ \run_{\ge i}\models\phi_2
	\puncteq{.}
\end{alignat*}

For $\phi\in\calL_{S,\Sigma}$ a formula, we write
$
	\ev{\calT}{\phi} = \{\run\in\Runs(\calT)\mid \run\models\phi\}
$
for the set of runs of $\calT$ satisfying $\phi$. A standard result states that 
for any formula $\phi\in\calL_{S,\Sigma}$, the set of runs $\ev{\calT}{\phi}$ 
is measurable.

For an initial distribution $\mu$, we therefore have that 
$\Prob_\mu^\calT(\ev{\calT}{\phi})$
is well-defined. In the article, we 
write $\Prob_\mu^\calT(\phi)$ instead of $\Prob_\mu^\calT(\ev{\calT}{\phi})$ 
for brevity. As is standard, we also denote
$\phi_1\U \phi_2$ for $\phi_1\U[\ge 0] \phi_2$,
$\F[\le n] \phi$ for $S \U[\le n] \phi$, $\F\phi$ for $S\U[\ge 0] \phi$,
$\GG \phi$ for $\lnot\F\lnot\phi$, $\X \phi$ for $S\U[=1]\phi$, and $\X[n]\phi$ 
for $S\U[= n]\phi$.

	\section{Technical Results and Proofs of Section~\ref{sec:mainDecCrit}}
		\label{sec:app-sts}
		As discussed in the main text, we will prove Lemma~\ref{lem:GBcGFAGen} using \emph{L\'evy's zero-one law}.
The current proof owes much to comments by anonymous reviewers---it used to be a much longer \textit{ad hoc} proof that did not rely on martingale theory.
The reader is referred to~\cite[Section~4]{Dur19} for a thorough introduction to conditional expectations, martingales and L\'evy's zero-one law.

Let $\calT = (S, \Sigma, \kappa)$ be an STS.
We define a sequence $(\oalgf_n)_{n\ge 1}$ of $\sigma$-algebras such that for $n \ge 1$,
\[
\oalgf_n = \{\Cyl(A_0, \ldots, A_{n-1})\mid (A_0, \ldots, A_{n-1})\in \Sigma^n\}.
\]
This is a \emph{filtration} (since $\oalgf_0 \subseteq \oalgf_1 \subseteq \oalgf_2 \subseteq \ldots$), and $\sigma(\bigcup_i \oalgf_i)$ is equal to the $\sigma$-algebra $\oalgf$ generated by all the cylinders.

For a given initial distribution $\mu\in\Dist(S)$, we recall that $(S^\omega, \oalgf, \Prob^\calT_\mu)$ is a probabilistic space.
For an $\oalgf$-measurable and bounded real random variable $\var\colon S^\omega \to \IR$, we denote by $\expecSTS{\calT}{\mu}{\var}\in\IR$ the expected value of $\var$ on this probabilistic space.
For $\var\colon S^\omega \to \IR$, $n\in\IN$, we denote by $\expecCondSTS{\calT}{\mu}{\var}{\oalgf_n}\colon S^\omega\to\IR$ the conditional expectation of $\var$ given $\oalgf_n$: this is an $\oalgf_n$-measurable function that maps an infinite run $\rho$ to the expected value of $\var$ given the first $n$ steps of $\rho$.
See~\cite[Section~4.1]{Dur19} for a formal definition of conditional expectation.
We will use two properties of conditional expectations.

\begin{proposition}[L\'evy's zero-one law for STSs] \label{prop:levy}
	Let $\calE \in \oalgf$ be a measurable event, $\mu\in\Dist(S)$ be an initial distribution.
	We have
	\[
	\lim_{n \to \infty} \expecCondSTS{\calT}{\mu}{\ind{\calE}}{\oalgf_n} = \ind{\calE}
	\]
	almost surely.
\end{proposition}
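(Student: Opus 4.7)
The plan is to derive Proposition~\ref{prop:levy} as a direct specialization of the classical L\'evy upward convergence theorem (a standard consequence of the martingale convergence theorem) to the probability space $(S^\omega, \oalgf, \Prob^\calT_\mu)$ equipped with the filtration $(\oalgf_n)_{n\ge 1}$ defined just above. In other words, the ``STS-flavored'' statement carries no new measure-theoretic content beyond wrapping the classical result in the notation introduced for cylinders and conditional distributions.

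First I would verify that the hypotheses of the classical theorem are satisfied. (i) $(S^\omega, \oalgf, \Prob^\calT_\mu)$ is a bona fide probability space, as established earlier in the section via Carath\'eodory's extension theorem. (ii) The sequence $(\oalgf_n)$ is indeed a filtration: each $\oalgf_n$ is a $\sigma$-algebra of cylinders, and the inclusion $\oalgf_n \subseteq \oalgf_{n+1}$ holds because $\Cyl(A_0, \ldots, A_{n-1}) = \Cyl(A_0, \ldots, A_{n-1}, S)$. (iii) The equality $\oalgf = \sigma(\bigcup_n \oalgf_n)$ was noted just above the statement, and it follows directly from the fact that $\oalgf$ is by construction generated by the cylinders. (iv) The random variable $\ind{\calE}$ is bounded between $0$ and $1$, hence integrable, and is $\oalgf$-measurable since $\calE \in \oalgf$.

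Next I would invoke L\'evy's upward convergence theorem: for any integrable $\oalgf$-measurable random variable $X$ and any filtration $(\oalgf_n)$ with $\oalgf = \sigma(\bigcup_n \oalgf_n)$, the martingale $(\expecCondSTS{\calT}{\mu}{X}{\oalgf_n})_{n\ge 1}$ converges almost surely (and in $L^1$) to $\expecCondSTS{\calT}{\mu}{X}{\oalgf}$. Applying this to $X = \ind{\calE}$ and using that $\ind{\calE}$ is already $\oalgf$-measurable (so the tower property gives $\expecCondSTS{\calT}{\mu}{\ind{\calE}}{\oalgf} = \ind{\calE}$ almost surely) yields precisely
\[
\lim_{n \to \infty} \expecCondSTS{\calT}{\mu}{\ind{\calE}}{\oalgf_n} = \ind{\calE}
\]
almost surely, which is the conclusion.

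There is essentially no technical obstacle: the work is bookkeeping to match notation, and the proof can be kept to a few lines by quoting an appropriate reference such as \cite[Theorem~4.6.8]{Dur19}. The only point worth a sentence of justification is that the filtration generates the full $\sigma$-algebra $\oalgf$, which is immediate since $\oalgf$ is defined as the $\sigma$-algebra generated by all cylinders and every cylinder belongs to some $\oalgf_n$.
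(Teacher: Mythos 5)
Your proposal is correct and matches the paper's treatment: the paper likewise dispatches Proposition~\ref{prop:levy} by citing the classical L\'evy upward convergence theorem (\cite[Theorem~4.6.9]{Dur19}) applied to the run space $(S^\omega,\oalgf,\Prob^\calT_\mu)$ with the cylinder filtration $(\oalgf_n)_{n\ge 1}$. Your additional bookkeeping (checking the filtration property, that $\sigma(\bigcup_n \oalgf_n)=\oalgf$, and that $\ind{\calE}$ is $\oalgf$-measurable so the limit is $\ind{\calE}$ itself) is exactly the implicit content of that citation.
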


This is a classical result from martingale theory~\cite[Theorem~4.6.9]{Dur19}, applied to infinite runs of STSs with the aforementioned filtration.
Intuitively, it states that given an infinite run $\rho = s_0s_1\ldots$, almost surely, the probability that a run starting from $s_0\ldots s_n$ satisfies $\calE$ converges, as $n\to\infty$, either to $1$ (in which case $\rho\in\calE$) or to $0$ (in which case $\rho\notin \calE$).

Let us now discuss a second property of conditional expectations.
We say that an event $\calE\in\oalgf$ is \emph{shift-invariant} if for all infinite words $\rho\in S^\omega$, finite words $\rho' \in S^*$, $\rho\in\calE$ if and only if $\rho'\rho\in\calE$---in other words, if being an element of $\calE$ is not affected by changing a finite prefix.

\begin{lemma} \label{lem:shiftinvariantCond}
	Let $\calE$ be a shift-invariant event. For any $i \ge 1$, for any initial distribution $\nu\in\Dist(S)$, for any infinite run $\rho = s_0s_1\ldots$,
	\[
	\expecCondSTS{\calT}{\nu}{\ind{\calE}}{\oalgf_i}(\rho) = \Prob^\calT_{\delta_{s_{i-1}}}(\calE).
	\]
\end{lemma}
This equality is very intuitive: if we can observe the first $i$ steps of $\rho$, then the probability to satisfy $\calE$, given that $\calE$ is shift-invariant, is simply the probability to satisfy $\calE$ from $s_{i-1}$.
\begin{proof}
	We need to show that for $i \ge 1$, $\nu\in\Dist(S)$, the $\oalgf_i$-measurable function $f_i\colon \rho = s_0s_1\ldots \mapsto \Prob^\calT_{\delta_{s_{i-1}}}(\calE)$ satisfies the definition of the conditional expectation $\expecCondSTS{\calT}{\nu}{\ind{\calE}}{\oalgf_i}$~\cite[Section~4.1]{Dur19}, that is
	\[
	\forall C\in\oalgf_i, \expecSTS{\calT}{\nu}{\ind{\calE}\ind{C}} = \expecSTS{\calT}{\nu}{f_i\ind{C}}.
	\]
	By the monotone class theorem, it is sufficient to prove this equality for all $C = \Cyl(A_0, \ldots, A_{i-1})$ with $(A_0, \ldots, A_{i-1}) \in \Sigma^{i}$.
	We proceed by induction on $i$.
	Let $i = 1$, $\nu\in\Dist(S)$, $C = \Cyl(A_0)$ with $A_0\in\Sigma$.
	We have
	\begin{align*}
	\expecSTS{\calT}{\nu}{\ind{\calE}\ind{C}} &= \Prob^\calT_\nu(\calE\cap\Cyl(A_0)) \\
	&= \int_{s_0\in A_0} \Prob^\calT_{\delta_{s_0}}(\calE)\ud\nu(s_0) \\
	&= \int_{s_0\in S} \Prob^\calT_{\delta_{s_0}}(\calE)\ind{A_0}(s_0)\ud\nu(s_0) \\
	&= \expecSTS{\calT}{\nu}{f_1\ind{C}}.
	\end{align*}

	Now let $i > 1$, $\nu\in\Dist(S)$, $C = \Cyl(A_0, \ldots, A_{i-1})$ with $(A_0, \ldots, A_{i-1}) \in \Sigma^{i}$.
	We assume that the desired property holds for all $j < i$. We have
	\begin{align*}
	\expecSTS{\calT}{\nu}{\ind{\calE}\ind{C}} &= \Prob^\calT_\nu(\calE\cap\Cyl(A_0,\ldots,A_{i-1})) \\
	&= \int_{s_0\in A_0} \Prob^\calT_{\kappa(s_0, \cdot)}(\calE\cap\Cyl(A_1,\ldots,A_{i-1}))\ud\nu(s_0) &&\text{as $\calE$ is shift-invariant}\\
	&= \int_{s_0\in A_0} \expecSTS{\calT}{\kappa(s_0, \cdot)}{\ind{\calE}\ind{\Cyl(A_1, \ldots, A_{i-1})}}\ud\nu(s_0)\\
	&= \int_{s_0\in A_0} \expecSTS{\calT}{\kappa(s_0, \cdot)}{f_{i-1}\ind{\Cyl(A_1, \ldots, A_{i-1})}}\ud\nu(s_0) &&\text{by induction hypothesis}\\
	&= \int_{s_0\in S} \expecSTS{\calT}{\delta_{s_0}}{f_{i}\ind{\Cyl(A_0, A_1, \ldots, A_{i-1})}}\ud\nu(s_0) &&\text{as $\calE$ is shift-invariant}\\
	&= \expecSTS{\calT}{\nu}{f_i\ind{C}}.
	\end{align*}
	\qed
\end{proof}

We now restate and prove the technical lemma from Section~\ref{sec:mainDecCrit}.

\GBcGFAGen*
\begin{proof}
	In order not to obfuscate the interesting ideas of the proof with technical considerations, we first prove the lemma for $n=0$ (with $\calA = A\in\Sigma$), and explain afterwards how to extend the proof to obtain the general statement.
	We want to prove that for all $\mu\in\Dist(S)$,
	\[
		\Prob_\mu^\calT(\GG \comp{B} \land \GG\F A) = 0\puncteq{.}
	\]
	Let $\mu\in\Dist(S)$ be an initial distribution.
	We assume w.l.o.g.\ that $A\cap B = \emptyset$---indeed, if that is not the case, we simply notice that $\Prob^\calT_\mu(\GG \comp{B} \land \GG\F A) = \Prob^\calT_\mu(\GG \comp{B} \land \GG\F (A\cap \comp{B}))$ and we replace $A$ by $A\cap\comp{B}$ in the rest of the proof.

	Let us consider a modified STS $\calT_B$ which is equal to $\calT$, except that $B$ is made absorbing (we assume that for $s\in B$, $\kappa(s,\cdot)$ is the Dirac distribution $\delta_s$).
	Notice that $\Prob^\calT_\mu(\F B) = \Prob^{\calT_B}_\mu(\F\GG B)$, and $\Prob^\calT_\mu(\GG \comp{B} \land \GG\F A) \le \Prob^{\calT_B}_\mu(\GG\F A)$ (as $A\cap B = \emptyset$, runs that see $A$ infinitely often without seeing $B$ in $\calT$ are just as likely in $\calT_B$).
	Notice also that the event $\F\GG B$ is shift-invariant. We have
	\begin{align*}
	\ev{\calT_B}{&\GG\F A} \\
	&= \{\rho = s_0s_1\ldots\in S^\omega \mid \forall i, \exists j \ge i, s_j \in A\}\\
	&\subseteq \{\rho = s_0s_1\ldots\in S^\omega \mid \forall i, \exists j \ge i, \Prob^{\calT_B}_{\delta_{s_j}}(\F B) \ge p \} &&\text{by hypothesis on $A$} \\
	&= \{\rho = s_0s_1\ldots\in S^\omega \mid \forall i, \exists j \ge i, \Prob^{\calT_B}_{\delta_{s_j}}(\F\GG B) \ge p \} &&\text{by construction of $\calT_B$} \\
	&= \{\rho \in S^\omega \mid \forall i, \exists j \ge i, \expecCondSTS{\calT_B}{\mu}{\ind{\F\GG B}}{\oalgf_{j+1}}(\rho) \ge p \} &&\text{by Lemma~\ref{lem:shiftinvariantCond}, as $\F\GG B$ is shift-invariant} \\
	&\subseteq \{\rho \in S^\omega \mid \lim_{i\to\infty} \expecCondSTS{\calT_B}{\mu}{\ind{\F\GG B}}{\oalgf_i}(\rho) \text{ is not $0$ if it exists}\} \\
	&= \{\rho \in S^\omega \mid \ind{\F\GG B}(\rho) \neq 0 \} &&\text{by Lévy's zero-one law (Prop.~\ref{prop:levy})} \\
	&= \{\rho \in S^\omega \mid \ind{\F\GG B}(\rho) = 1 \} \\
	&= \ev{\calT_B}{\F\GG B}.
	\end{align*}

	All inclusions and equalities are almost sure.
	In $\calT_B$, as $A\cap B = \emptyset$ and $B$ is absorbing, we have that $\Prob^{\calT_B}_\mu(\GG\F A \land \F\GG B) = 0$.
	As $\ev{\calT_B}{\GG\F A} \subseteq \ev{\calT_B}{\F\GG B}$, this implies that $\Prob^{\calT_B}_\mu(\GG\F A) = 0$.

	We conclude
	\begin{align*}
	\Prob^\calT_\mu(\GG \comp{B} \land \GG\F A)
	&\le \Prob^{\calT_B}_\mu(\GG\F A) = 0.
	\end{align*}

	This proof can be adapted for a sequence $\calA = (A_i)_{0\le i\le n}\in\Sigma^{n+1}$.
	To do so, we define a slightly different sequence of $\sigma$-algebras $(\oalgf_i)_{i\ge 1}$ such that
	\[
		\oalgf_i' = \sigma(\{\Cyl(B_0, \ldots, B_{i-1}, A_1,\ldots,A_n)\mid (B_0,\ldots,B_{i-1})\in\Sigma^i\}).
	\]
	For $\rho\in S^\omega$, let $\calA(\rho)$ be the only sequence $A'_0,\ldots,A'_n$ with $A'_i\in \{A_i, \comp{A_i}\}$ such that $\rho\in\ev{\calT}{\phi_{\calA(\rho)}}$.
	Then, we can obtain an equivalent to Lemma~\ref{lem:shiftinvariantCond} for a shift-invariant event $\calE$: for $i \ge 1$, for an initial distribution $\nu\in\Dist(S)$, for an infinite run $\rho = s_0s_1\ldots$,
	\[
	\expecCondSTS{\calT}{\nu}{\ind{\calE}}{\oalgf'_i}(\rho) = \Prob^\calT_{(\delta_{s_{i-1}})_{\calA(\rho_{\ge i-1})}}(\calE),
	\]
	where $\rho_{\ge i-1} = s_{i-1}s_i\ldots$, assuming $(\delta_{s_{i-1}})_{\calA(\rho_{\ge i-1})}$ (a conditional distribution, see Definition~\ref{def:condDist}) is well-defined.
	Every step of the proof can then be modified by replacing $A$ with $\calA$ or $\phi_\calA$, $\oalgf_i$ with $\oalgf'_i$, and the application of Lemma~\ref{lem:shiftinvariantCond} with the previous equality.
	\qed
\end{proof}

Using this lemma, we can prove our new decisiveness criterion.
\decCrit*
\begin{proof}
	We want to prove that $\calT$ is decisive w.r.t.\ $B$, i.e., that for all
	$\mu\in\Dist(S)$, $\Prob_\mu^\calT(\F B\lor \F \Btilde) = 1$.
	Let $\mu\in\Dist(S)$. We show equivalently that
	\[
	\Prob_\mu^\calT(\GG \comp{B}\land \GG \comp{(\Btilde)}) = 0
	\puncteq{.}
	\]
	We write $\calA_j'$ for the sequence $A^{(j)}_0,\ldots,A^{(j)}_{n_j - 1},
	A^{(j)}_{n_j}\cap\comp{(\Btilde)}$. We have that
	\begin{align*}
	\Prob^\calT_\mu(\GG \comp{B}\land\GG \comp{(\Btilde)})
	&=\Prob^\calT_\mu(\GG \comp{B}\land\GG \comp{(\Btilde)} \land
	\bigvee_{0\le j<m}\GG\F \phi_{\calA_j}) \\
	&= \Prob^\calT_\mu(\GG \comp{B}\land\GG \comp{(\Btilde)} \land
	\bigvee_{0\le j<m}\GG\F \phi_{\calA_j'}) \\
	&\le \sum_{0\le j<m} \Prob^\calT_\mu(\GG \comp{B}\land
	\GG\F \phi_{\calA_j'})
	\puncteq{.}
	\end{align*}
	Let $0 \le j < m$, let $\nu\in\Dist(S)$ with $\nu_{\calA_j'}$ well-defined.
	Notice that $\nu_{\calA_j'} = (\nu_{\calA_j})_{\comp{(\Btilde)}}$.
	As it holds that $\Prob^\calT_{\nu_{\calA_j'}}(\F
	B) \ge p$, we obtain by Lemma~\ref{lem:GBcGFAGen} that
	\[
	\Prob^\calT_\mu(\GG \comp{B}\land
	\GG\F \phi_{\calA_j'}) = 0
	\puncteq{.}
	\]
	Hence $\Prob^\calT_\mu(\GG \comp{B}\land\GG \comp{(\Btilde)}) = 0$. \qed
\end{proof}

The next two propositions focus on proving a claim that was stated throughout
the article, which is that the decisiveness criterion from
Proposition~\ref{prop:mainDecCritGen} generalizes those
from~\cite{ABM07,BBBC18}.
\begin{proposition} \label{app:prop:gen1}
	The criterion provided in Proposition~\ref{prop:mainDecCritGen} generalizes
	those found in~\cite[Lemmas~3.4 \&~3.7]{ABM07}.
\end{proposition}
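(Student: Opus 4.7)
The plan is to derive each of Lemmas~3.4 and~3.7 of~\cite{ABM07} as a special case of Proposition~\ref{prop:mainDecCritGen}, exploiting the flexibility offered by the sequences $\calA_j$ in its statement. Both arguments amount to choosing a convenient ``attractor-like'' structure and verifying the probability lower bound.

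For Lemma~3.4, whose hypothesis is the existence of a \emph{finite} attractor $A$ in a denumerable Markov chain, I would apply Corollary~\ref{cor:mainDecCrit} directly. The key observation is that $A' = A\cap\comp{(\Btilde)}$ is finite, and each of its elements $s$ satisfies $\Prob^\calT_{\delta_s}(\F B) > 0$ by definition of $\Btilde$. Consequently,
\[
p = \min_{s\in A'} \Prob^\calT_{\delta_s}(\F B) > 0
\]
is well-defined. In the denumerable setting, any $\nu\in\Dist(A')$ is a convex combination of the Dirac distributions concentrated on the points of $A'$, so linearity of probability yields $\Prob^\calT_\nu(\F B)\ge p$, which matches the hypothesis of Corollary~\ref{cor:mainDecCrit}.

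For Lemma~3.7, the assumption is instead a uniform pointwise dichotomy: there exists $p>0$ such that for every state $s$, either $s\in\Btilde$ or $\Prob^\calT_{\delta_s}(\F B) \ge p$. Here I would instantiate Proposition~\ref{prop:mainDecCritGen} with $m=1$, $n_0=0$, and $\calA_0 = S$. With this choice, $\phi_{\calA_0}\equiv S$, so $\Prob^\calT_\nu(\GG\F \phi_{\calA_0}) = 1$ holds trivially. For any $\nu\in\Dist(S)$ with $\nu_{\calA_0} = \nu$ and $\nu(\Btilde) < 1$, the conditional distribution $\nu_{\comp{(\Btilde)}}$ is supported on $\comp{(\Btilde)}$, and integrating the pointwise lower bound $\Prob^\calT_{\delta_s}(\F B)\ge p$ (valid on $\comp{(\Btilde)}$) against this distribution gives $\Prob^\calT_{\nu_{\comp{(\Btilde)}}}(\F B)\ge p$, exactly as required by the criterion.

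The main obstacle I anticipate is bookkeeping rather than conceptual: the identity $\Prob^\calT_\nu(\F B) = \int \Prob^\calT_{\delta_s}(\F B)\,\ud\nu(s)$ invoked in the second case relies on the measurability of $s\mapsto \Prob^\calT_{\delta_s}(\F B)$, which is a standard consequence of the Carath\'eodory construction of $\Prob^\calT_\nu$ but deserves a clean citation. In the denumerable Markov chain setting of~\cite{ABM07}, however, this reduces to a countable sum, and the first case collapses to a finite sum over the points of $A'$, so no delicate measure-theoretic issues arise in either derivation.
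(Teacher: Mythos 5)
Your proposal is correct and takes essentially the same route as the paper: Lemma~3.4 via Corollary~\ref{cor:mainDecCrit} with $p=\min_{s\in A'}\Prob^\calT_{\delta_s}(\F B)$, and Lemma~3.7 via the $m=1$, $n_0=0$, $\calA_0=S$ instantiation of Proposition~\ref{prop:mainDecCritGen}, which is exactly the corollary applied with $A=S$ as the attractor (the paper's phrasing). The only cosmetic difference is that the paper explicitly sets $p=1$ when $A'=\emptyset$, an edge case your minimum leaves implicit.
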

\begin{proof}
	The result of~\cite[Lemma~3.4]{ABM07} states (for countable Markov chains, but we generalize it here to STSs) that if there exists a
	\emph{finite} attractor $A$ for an STS $\calT = (S,\Sigma,\kappa)$, then
	$\calT$ is decisive w.r.t.\ any measurable set $B$.
	For $B\in\Sigma$, using Corollary~\ref{cor:mainDecCrit} and writing $A' = A\cap \comp{(\Btilde)}$, we can thus simply take
	\[
	p = \min_{s\in A'}\Prob^\calT_{\delta_s}(\F B) > 0
	\puncteq{,}
	\]
	or $p = 1$ if $A'$ is empty.

	The result of~\cite[Lemma~3.7]{ABM07} states that \emph{globally coarse}
	Markov chains~\cite[Definition~3.5]{ABM07} are decisive.
	We say that an STS $\calT = (S,\Sigma,\kappa)$ is \emph{globally coarse}
	w.r.t.\ $B\in\Sigma$ if there exists $p>0$ such that for all $s\in S$,
	either
	$\Prob_{\delta_s}^\calT(\F B) \ge p$, or $\Prob_{\delta_s}^\calT(\F B)=0$.
	To prove the decisiveness of a globally coarse STS, we can simply apply
	Corollary~\ref{cor:mainDecCrit} with $A = S$ as the attractor. \qed
\end{proof}

\begin{proposition}
	 The criterion provided in Proposition~\ref{prop:mainDecCritGen}
	 generalizes the one found in~\cite[Proposition~36]{BBBC18}.
\end{proposition}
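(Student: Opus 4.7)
The plan is to first recall the statement of~\cite[Proposition~36]{BBBC18}: it assumes that $\calT$ admits a finite $\alpha$-abstraction with $\alpha$ compatible with $B$, and that there exist uniform constants $p>0$ and $k\in\IN$ such that for every class $P$ of the (finite) partition induced by $\alpha$ with $P \not\subseteq \Btilde$ and for every conditional distribution $\nu$ concentrated on $P\cap\comp{(\Btilde)}$, one has $\Prob^\calT_\nu(\F[\le k] B) \ge p$. I would then show that these hypotheses imply those of Proposition~\ref{prop:mainDecCritGen}, applied with the singleton sequences $\calA_j = (P_j)$ (so $n_j = 0$) indexed by the finitely many classes $P_0,\ldots,P_{m-1}$ of the partition.

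The verification splits into two checks. First, the visiting hypothesis $\Prob^\calT_\nu\bigl(\bigvee_{0\le j<m}\GG\F \phi_{\calA_j}\bigr)=1$ holds almost surely for every $\nu\in\Dist(S)$ by a purely combinatorial argument: since $\{P_0,\ldots,P_{m-1}\}$ partitions $S$ into finitely many measurable sets, every infinite run must visit at least one class infinitely often. Second, I would verify the probability lower bound case by case: if $P_j\subseteq\Btilde$ then $\nu_{\calA_j}(\Btilde)=1$ trivially (since $\nu_{\calA_j}$ is concentrated on $P_j$); otherwise $(\nu_{\calA_j})_{\comp{(\Btilde)}}$ is a well-defined distribution concentrated on $P_j\cap\comp{(\Btilde)}$, and the hypothesis of~\cite[Proposition~36]{BBBC18} gives $\Prob^\calT_{(\nu_{\calA_j})_{\comp{(\Btilde)}}}(\F[\le k] B)\ge p$, from which the weaker inequality $\Prob^\calT_{(\nu_{\calA_j})_{\comp{(\Btilde)}}}(\F B)\ge p$ follows by the trivial inclusion $\F[\le k] B \subseteq \F B$.

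Both hypotheses of Proposition~\ref{prop:mainDecCritGen} are therefore met, so decisiveness of $\calT$ w.r.t.\ $B$ follows. To argue that the generalization is \emph{strict}, I would point to the cycle-reset SHSs studied in Section~\ref{sec:decisiveSHS}: Proposition~\ref{prop:strongResetAreDec} establishes their decisiveness via our new criterion, yet concrete variants related to Example~\ref{ex:noFiniteAbs} (dropping the cycle-reset structure on a purely continuous state space, or keeping it while working modulo a partition that is not stable under $\Pre^\calT$) admit no finite abstraction compatible with $B$, so~\cite[Proposition~36]{BBBC18} cannot apply.

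The main obstacle in carrying this out is bookkeeping around the conditioning on $\comp{(\Btilde)}$: one must check that $(\nu_{\calA_j})_{\comp{(\Btilde)}}$ is well-defined precisely when $P_j \not\subseteq \Btilde$ and that its support lies in the subset $P_j\cap\comp{(\Btilde)}$ on which the BBBC18 lower bound is stated, using that $P_j$ is $\alpha$-closed and that $\Btilde\in\Sigma$ (by~\cite[Lemma~14]{BBBC18}). Once this conditioning matches, passing from $\F[\le k] B$ to $\F B$ is an inclusion of events and involves no computation, which is precisely what makes our criterion a clean generalization.
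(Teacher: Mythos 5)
Your argument starts from a mis-remembered statement of \cite[Proposition~36]{BBBC18}, and the discrepancy is exactly where the proof breaks. The actual hypotheses (as recalled in the paper) are not that $\calT$ admits a \emph{finite} $\alpha$-abstraction whose classes partition $S$: they are that $\calT_2$ is a \emph{countable} Markov chain abstraction of $\calT_1$ possessing a \emph{finite attractor} $A_2=\{s_1,\ldots,s_n\}$ such that $A_1=\alpha^{-1}(A_2)$ is an attractor for $\calT_1$, and the dichotomy (either $\Prob^{\calT_1}_\mu(\F[\leq k] B)\geq p$ for every $\mu\in\Dist(\alpha^{-1}(s_i))$, or $\Prob^{\calT_1}_\mu(\F B)=0$ for every such $\mu$) is imposed only on the finitely many attractor classes $\alpha^{-1}(s_i)$, not on every class of a partition of the whole state space, and in the first branch it holds for \emph{all} distributions on the class, not only those concentrated on $\alpha^{-1}(s_i)\cap\comp{(\Btilde)}$. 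Consequently the step you call ``purely combinatorial''---every run visits one of the finitely many classes infinitely often because they partition $S$---does not apply to the real hypotheses: there may be infinitely many classes, and the sets $\alpha^{-1}(s_i)$ do not cover $S$. What replaces it is precisely the attractor hypothesis, which your proposal never uses: since $A_1$ is an attractor, Lemma~\ref{lem:AttrInfOften} gives $\Prob^{\calT_1}_\mu(\GG\F A_1)=1$ for every $\mu$, and a pigeonhole over the finitely many classes composing $A_1$ yields the infinitely-often condition for your singleton sequences.

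Once that is repaired, the remainder of your case analysis is sound and essentially coincides with the paper's: classes in the second branch of the dichotomy are contained in $\Btilde$, so the conditional distribution gives $\Btilde$ full mass; classes in the first branch give the bound for every distribution on the class, in particular for $(\nu_{\calA_j})_{\comp{(\Btilde)}}$, and the inclusion of $\F[\leq k] B$ in $\F B$ removes the horizon $k$. The paper packages this more compactly by applying Corollary~\ref{cor:mainDecCrit} with $A=A_1$, observing that $A_1\cap\comp{(\Btilde)}=\bigcup_{s_i\in A_2'}\alpha^{-1}(s_i)$ where $A_2'$ collects the first-branch states, so that every distribution on $A_1\cap\comp{(\Btilde)}$ reaches $B$ with probability at least $p$. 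Finally, the proposition only claims generalization, so your strictness discussion is not needed; as written it is too vague to stand on its own (the variants of Example~\ref{ex:noFiniteAbs} are not actually constructed), and the paper does not attempt strictness inside this proof.
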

\begin{proof}
	We state the result of~\cite[Proposition~36]{BBBC18}.
	\begin{claim}
		Let $\calT_2$ be a countable Markov chain such that $\calT_2$ is an
		$\alpha$-abstraction of $\calT_1$.
		\begin{enumerate}
			\item Assume that there is a finite set
			$A_2=\lbrace s_1, \ldots, s_n\rbrace\subseteq S_2$ such that
			$A_2$ is an attractor for $\calT_2$ and
			$A_1= \alpha^{-1}(A_2)$ is an attractor for	$\calT_1$.
			\item Assume moreover that for every
			$\alpha$-closed set $B$ in $\Sigma_1$, there exist $p>0$ and
			$k\in\IN$ such that for every $1\leq i\leq n$:
			\begin{itemize}
				\item for every $\mu\in\Dist(\alpha^{-1}(s_i))$,
				$\Prob^{\calT_1}_{\mu}(\F[\leq k] B)\geq p$, or
				\item for every $\mu\in\Dist(\alpha^{-1}(s_i))$,
				$\Prob^{\calT_1}_{\mu}(\F B)=0$.
			\end{itemize}
		\end{enumerate}
		Then $\calT_1$ is decisive w.r.t.\ every $\alpha$-closed set.
	\end{claim}
	We prove that these hypotheses satisfy our criterion. Let $B$ be an
	$\alpha$-closed set in $\Sigma_1$.
	We use Corollary~\ref{cor:mainDecCrit}, using $A = A_1$ as an attractor
	for $\calT_1$.
	We consider
	\[
	A_2' = \{ s_i \in
	A_2\mid\forall\mu\in\Dist(\alpha^{-1}(s_i)),
	\Prob^{\calT_1}_{\mu}(\F[\leq k] B)\geq p\}
	\puncteq{.}
	\]
	Writing $A'_1 = A_1\cap \comp{(\Btilde)}$, notice that $A_1' =
	\bigcup_{s_i\in A_2'} \alpha^{-1}(s_i)$.
	Hence for all distributions $\mu\in\Dist(A_1')$, we have that
	$\Prob^{\calT_1}_{\mu}(\F B)\geq p$,
	so $\calT_1$ is decisive w.r.t.\ $B$. \qed
\end{proof}

The criterion from~\cite[Proposition~37]{BBBC18} can also be recovered. We omit
to prove it, as this would require to use properties that are established
in the proof of the proposition itself.

	\section{Proofs of Section~\ref{sec:stochHS}}
		\label{sec:app-shs}
		To prove the undecidability of reachability problems for SHSs, we reduce 
the halting problem for two-counter machines to qualitative reachability 
problems for SHSs. We first recall the definition of a two-counter machine.

\begin{definition}[Two-counter machine]
A \emph{two-counter machine} $M$ is a triple
$(\{b_0,\ldots,b_m\}, C, D)$
where $\{b_0,\ldots,b_m\}$ are instructions and $C$, $D$ are two 
counters ranging over the natural numbers. Each instruction $b_i$, $0 
\leq i \leq m$, is of one of the following three types:
\begin{itemize}
	\item an increment (resp.\ decrement) instruction increments (resp.\ 
	decrements) a specific counter by one, and then jumps to an instruction 
	$b_{k_i}$;
	\item a conditional jump instruction branches to one or another 
	instruction based upon whether a specific counter has currently 
	value $0$;
	\item one halting instruction $b_m$ terminates the machine execution.
\end{itemize}
\end{definition}
We assume that before every decrement instruction, a test is done to verify 
that the counter being decremented is not zero; if it is, then its value is 
unchanged. A \emph{configuration} of a two-counter machine is a triple $\exec 
= (i, c, d)$ where $i$ is the number of the current instruction, and 
$c$ and $d$ specify respectively the values of counters $C$ and $D$ before 
executing $b_i$. 
An \emph{execution} is a finite or infinite sequence of configurations 
$\run = \exec_0\exec_1\exec_2\ldots$ such that 
$\exec_0 = (0, 0, 0)$ is the initial configuration, and for $i 
\geq 0$, if $\exec_j = (i, c, d)$, $\exec_{j+1}$ is the new 
configuration after executing instruction $b_i$. We denote by 
$\len(\run) \in \IN \cup \{\infty\}$ the length of  
execution $\run$. If an execution is 
finite, then its last instruction has to be the halting instruction 
$b_m$. 
The problem of deciding whether the unique execution of a two-counter 
machine ends with a halting instruction is undecidable 
\cite{Minsky67}.

\begin{lemma} \label{lem:redShs2cm}
	Let $M = (\{b_0,\ldots,b_m\}, C, D)$ be a two-counter machine. 
	There is an SHS $\calH$, a measurable set $B \in \Sigma_{\calH}$ and
	an initial distribution $\mu \in \Dist(S_\calH)$ such that 
	$\Prob_\mu^{\calT_\calH}(\F B) = 1$ if and only if $M$ halts and 
	$\Prob_\mu^{\calT_\calH}(\F B) = 0$ if and only if $M$ does not 
	halt. 
	Moreover, the probability distributions on time delays of $\calH$ 
	are all purely continuous distributions, guards are defined by 
	linear comparisons of variables and constants, and dynamics are positive 
	integer slopes.
\end{lemma}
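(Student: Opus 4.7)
The plan is to construct an SHS $\calH$ that simulates the two-counter machine $M$ step by step, in such a way that the continuous stochastic delays have no qualitative impact on the simulation. I would introduce continuous variables $x_C$ and $x_D$ encoding the counter values, together with an auxiliary clock $y$, all with positive integer slope $1$, maintaining the invariant that at each location entry $y = 0$ and $x_C = c$, $x_D = d$, where $(c,d)$ is the current pair of counter values. Since $x_C$, $x_D$ and $y$ all evolve at the same rate, the quantities $x_C - y$ and $x_D - y$ are preserved by any delay, so the actual value of a delay drawn from a purely continuous distribution is immaterial to the encoding.

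To each instruction $b_i$ of $M$ I would associate one location of $\calH$. Increment and decrement instructions are implemented by a single outgoing edge whose Dirac reset linearly updates the counter variable, for instance $\calR_e(\ve) = \{(\ve(x_C) - \ve(y) + 1, \ve(x_D) - \ve(y), 0)\}$ for an increment of $C$, which, after the mixed transition, restores the invariant with counter value $c+1$. A conditional jump on $C$ is implemented by two outgoing edges with complementary linear guards $x_C = y$ and $x_C > y$; since these depend only on the conserved quantity $x_C - y$, exactly one of them is enabled throughout the delay, determined solely by whether $c = 0$. The delay distribution $\delDist_s$ is chosen to be any purely continuous distribution supported on the admissible delays (e.g.\ an exponential restricted to $I(s)$), and $\edgDist_{s+\tau}$ is the Dirac on the unique enabled edge. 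The initial distribution $\mu$ is the Dirac at $(b_0, 0, 0, 0)$, and the target is $B = \{b_m\}\times\IR^n$.

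With this construction, every run of $\calT_\calH$ executes, step by step, the unique computation of $M$: delays influence the instantaneous values of $x_C$, $x_D$ and $y$ but not which edge is enabled, since all guards depend only on the conserved quantities $x_C - y$ and $x_D - y$. Hence, if $M$ halts in $N$ steps, every run of $\calT_\calH$ reaches $B$ in exactly $N$ steps and $\Prob^{\calT_\calH}_\mu(\F B) = 1$; if $M$ does not halt, no run ever reaches $B$ and $\Prob^{\calT_\calH}_\mu(\F B) = 0$. The main point to verify carefully is that the equality guard $x_C = y$, which would \emph{a priori} be a measure-zero event for a freshly drawn continuous delay, is here preserved throughout the location because it reduces to the entry-time condition $c = 0$; this is precisely why the stochasticity of the delays plays no role in the qualitative behaviour of the simulation, and why the reduction remains within the claimed syntactic restrictions of purely continuous delay distributions, linear guards, and positive integer slopes.
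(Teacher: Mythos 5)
Your construction is correct as a proof of the lemma as stated, but it takes a genuinely different route from the paper. The paper follows the robust-undecidability approach of~\cite{HR00}: each counter value $n$ is encoded only \emph{approximately}, via four variables per counter satisfying $\frac{1}{2^{n+1}} < x_2 - x_1 < x_4 - x_3 < \frac{1}{2^n}$, and increments/decrements are realized by letting the variables race with distinct positive integer slopes ($1$, $2$, $3$) under uniform and exponential delay distributions, so that the encoding is preserved only \emph{almost surely}; crucially, no arithmetic is performed by the reset mechanism. Your construction instead stores the counters exactly in the delay-invariant differences $x_C - y$, $x_D - y$ (all slopes $1$) and cancels the elapsed time through value-dependent affine Dirac resets such as $x_C \defeq x_C - y + 1$, so that every run is an exact, delay-independent simulation of $M$. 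This is simpler and yields surely-correct (not merely almost-sure) simulation, and it does satisfy the three syntactic constraints in the statement (purely continuous delays, linear guards, positive integer slopes), since the statement places no restriction on resets. What the paper's heavier encoding buys is precisely that the hardness does not hinge on the reset mechanism: since the decidable class identified later is carved out by restricting resets (one strong reset per cycle), keeping the resets of the undecidable instances trivial makes the result a sharper counterpoint, whereas your reduction offloads the counter arithmetic onto updatable-timed-automata-style resets. Two small technicalities in your write-up are worth fixing but are not gaps: to keep all states non-blocking (so that $\delDist_s(I(s))=1$ can hold everywhere) the zero-test guards should be complementary over all of $\IR^n$, e.g.\ $x_C \le y$ versus $x_C > y$, which does not affect reachable states since $x_C - y = c \ge 0$ is conserved; and decrement on a zero counter should either be excluded by the machine's test-before-decrement convention or implemented with the same two-guard branching so that the value stays unchanged.
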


\begin{proof}
	We define an SHS $\calH$ such that for each instruction $b_i$ of 
	$M$, there is a location $\ell_i^*$. To encode the value of 
	counter $C$ (resp.\ $D$), we use four variables $x_1, \ldots, x_4$ 
	(resp.\ $y_1, \ldots, y_4$). We will also use four extra 
	variables $z_1,\ldots,z_4$, for a total of twelve continuous 
	variables ($|X| = 12$). We explain how we encode the value of 
	counter $C$ in our SHS $\calH$; the same method is used for counter $D$.
	If $C = n$, we will require that
	\begin{equation} \label{eq:encoding}
	\frac{1}{2^{n + 1}} < x_2 - x_1 < x_4 - x_3 <\frac{1}{2^n}
	\end{equation}
	almost surely in some key locations of $\calH$.
	
	We show how we perform each type of instruction with our 
	encoding.
	We only show the instructions concerning counter $C$.
	Notice that we can always preserve the encoding of the value of a counter 
	by setting the evolution rate of all the variables used for its 
	encoding to 1. 
	Let $0 \leq i \leq m$ be the index of an instruction of $M$.
	\begin{itemize}
		\item If $b_i$ increments $C$:
		the SHS $\calH^{inc}_{C}$ incrementing counter $C$ is shown in 
		Figure~\ref{fig:increment}. Notice that if $x_1^*,\ldots, 
		x_4^*\in\IR$ are the values of $x_1, \ldots, x_4$ when entering 
		$\ell_i^*$ 
		and encode that $C = n$ as in Equation~\eqref{eq:encoding}, then in 
		location $\ell^{**}_i$, almost surely,
		\[
		\frac{1}{2^{n+2}} < z_2 - z_1 < z_4 - z_3 < \frac{1}{2^{n+1}}
		\puncteq{.}
		\]
		Indeed, if we set $d= x_2^*-x_1^*$, then the time spent in the 
		second location of $\calH^{inc}_C$ is at least $\frac{d}{2} > 
		\frac{1}{2^{n+2}}$, as we need to wait until $x_1>x_2$. Hence in 
		$\ell_i^{**}$, $
		\frac{1}{2^{n+2}} < z_2 - z_1$. Clearly, we also have that $z_2 - z_1 < 
		z_4 - z_3$. If we set 
		$d'= x_4^*-x_3^*$, then the sum of the times spent in the two 
		central 
		locations is $z_4-z_3 < \frac{d'}{2} < \frac{1}{2^{n+1}}$ as $x_3$ has 
		not yet become greater than $x_4$.
		\begin{figure}[tbh]
			\centering
			\fbox{\begin{minipage}{0.25\columnwidth}{\footnotesize
						In $\ell^*_i$ and $\ell_{i}^{**}$: \\$\forall v \in X, 
						\dot{v} 
						= 1$. \\
						In the other locations: \\
						$\forall 1 \leq j \leq 4, \dot{z_j} = 1$, \\
						$\dot{x_1}=\dot{x_3}=3,\newline \dot{x_2}=\dot{x_4}=1$.}
			\end{minipage}}
			\hspace{5pt}
			\begin{minipage}{0.67\columnwidth}
				\includegraphics[width=\textwidth]{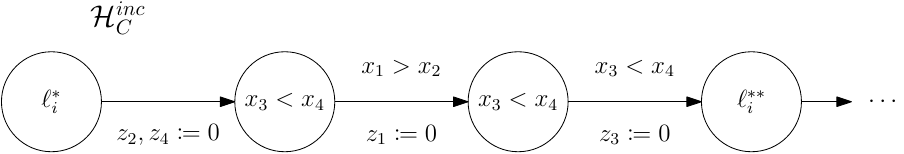}
			\end{minipage}
			\caption{SHS $\calH_C^{inc}$ incrementing the value of $C$. The
				distribution on the time delay in $\ell^*_i$ is any exponential 
				distribution, and in the next two locations it is a uniform 
				distribution on the times after which the outgoing edge is 
				enabled.}
			\label{fig:increment}
		\end{figure}

		A visual representation of the time it takes for $x_1$ to 
		become larger than $x_2$ in the second location is shown in 
		Figure~\ref{fig:slopesXiInc}.
		Therefore the $z_i$ variables satisfy almost surely the requirement for 
		a counter to be equal to $n + 1$ in location $\ell_i^{**}$. To recover 
		the right encoding 
		using the $x_i$ variables, we simply have to append to 
		$\calH^{inc}_{C}$ the exact same SHS where for $1 \leq j \leq 4$, $x_j$ 
		and $z_j$ have been swapped and $\dot{z_1} = \dot{z_3} = 2$ 
		instead of $3$.
		\item If $b_i$ decrements $C$: we use an SHS $\calH_C^{dec}$ which is 
		almost the same SHS as $\calH_C^{inc}$ in 
		Figure~\ref{fig:increment}, but we modify the evolution 
		rates in the two central locations: for $1 \leq j \leq 4$, 
		$\dot{z_j} = 2$, $\dot{x_1}=\dot{x_3}=2$, 
		$\dot{x_2}=\dot{x_4}=1$. A visual representation of the time it takes 
		for $x_1$ to overtake $x_2$ in the second location is shown in 
		Figure~\ref{fig:slopesXiDec}.
		\item If $b_i$ tests whether $C = 0$: the part of the SHS simulating 
		this instruction is shown in 
		Figure~\ref{fig:test}. No matter at what time the transition is taken, 
		only one of the edges is enabled. Notice that $C = 0$ in a location 
		$\ell_i^*$ if and only if $\frac{1}{2} < x_2 - x_1 < 1$ almost surely.
		
		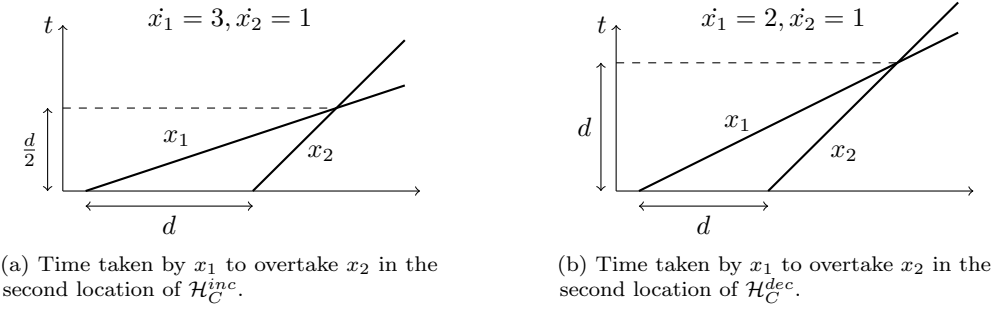
\begin{figure}[htb]
			\centering
			\hspace*{\fill}
			\subfloat[Time taken by $x_1$ to overtake $x_2$ in the second 
			location of $\calH_C^{inc}$.]
				{
				\begin{tikzpicture}
				\draw (2.2,2) node[anchor=south] {$\dot{x_1} = 3,
					\dot{x_2} = 1$};
				\draw[->] (0,0) -- (0,2.2) node[anchor=east] {$t$};
				\draw[<->] (-0.2,0) -- (-0.2,1.1);
				\draw (-0.2,0.55) node[anchor=east] {$\frac{d}{2}$};
				\draw[->] (0,0) -- (4.7,0) node[anchor=north] {};
				\draw[<->] (0.3,-0.2) -- (2.5,-0.2);
				\draw (1.4,-0.2) node[anchor=north] {$d$};
				\draw[dashed] (0,1.1) -- (3.6,1.1);
				\draw[thick] (0.3,0) -- (4.5, 1.4);
				\draw (1.5,0.7) node {$x_1$}; 
				\draw[thick] (2.5,0) -- (4.5,2);
				\draw (3.4,0.5) node {$x_2$}; 
				\end{tikzpicture}
				\label{fig:slopesXiInc}
			}
			\hfill
			\subfloat[Time taken by $x_1$ to overtake $x_2$ in the second 
			location of $\calH_C^{dec}$.]
				{
				\begin{tikzpicture}
				\draw (2.2,2) node[anchor=south] {$\dot{x_1} = 2, 
					\dot{x_2} = 1$};
				\draw[->] (0,0) -- (0,2.2) node[anchor=east] {$t$};
				\draw[<->] (-0.2,0) -- (-0.2,1.7);
				\draw (-0.2,0.85) node[anchor=east] {$d$};
				\draw[->] (0,0) -- (4.7,0) node[anchor=north] {};
				\draw[<->] (0.3,-0.2) -- (2,-0.2);
				\draw (1.15,-0.2) node[anchor=north] {$d$};
				\draw[dashed] (0,1.7) -- (3.7,1.7);
				\draw[thick] (0.3,0) -- (4.5, 2.1);
				\draw (1.6,0.9) node {$x_1$}; 
				\draw[thick] (2,0) -- (4.5,2.5);
				\draw (3,0.5) node {$x_2$}; 
				\end{tikzpicture}
				\label{fig:slopesXiDec}
				}
			\hspace*{\fill}
			\caption{Time taken by $x_1$ to overtake $x_2$ given their 
			evolution rate. The initial difference between $x_2$ and $x_1$ is 
			$d$. Remember that for decrementing, for $1 \leq j \leq 4$, 
			$\dot{z_j} = 2$.}
			\label{fig:slopesXi}
		\end{figure}
	
		\begin{figure}[htb]
			\centering
			\includegraphics[width=0.33\columnwidth]{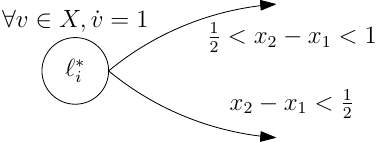}
			\caption{SHS testing whether $C = 0$. The distribution
				on the time delay is any exponential distribution.}
			\label{fig:test}
		\end{figure}
		\item If $b_i = b_m$ is the halting instruction: we model it using
		a single location $\ell_m^*$ with a self-loop edge with no guard or 
		invariant.
	\end{itemize}
	We set $B = \ell_m^* \times \IR^{12}$. The initial distribution 
	$\mu$ assigns $x_1 \defeq 0$, $x_2 \defeq \frac{5}{8}$, $x_3 \defeq 0$, 
	$x_4 \defeq \frac{7}{8}$ in location $\ell_0^*$ ($z_1,\ldots,z_4$ can take
	any value).
	
	We now prove that $M$ halts if and only if 
	$\Prob_\mu^{\calT_\calH}(\F B) = 1$. Let $\run = 
	(i_0, c_0, d_0)(i_1, c_1, d_1)\ldots$ be the execution of $M$.
	For $c, d \in \IN$, let
	\[
	\calV_{c, d} = \{\ve \in \IR^{12} \mid \ve
	\text{ encodes the value $c$ (resp.\ $d$) for counter $C$ (resp.\ $D$)} \}
	\puncteq{.}
	\]
	For an instruction $b_i$ of $M$, we denote by $m_i$ the number of 
	locations used in $\calH$ to represent it. We can prove by 
	induction that for all $0 \leq k < \len(\run)$,
	\begin{equation*}
	\Prob_\mu^{\calT_\calH}(\Cyl(
	\ell^*_{i_0}\times \calV_{c_0, d_0}, 
	\underbrace{S_\calH,\ldots,S_\calH}_{m_{i_0} - 1 \text{ times }},
	\ell^*_{i_1}\times \calV_{c_1, d_1}, 
	\underbrace{S_\calH,\ldots,S_\calH}_{m_{i_1} - 1 \text{ times }},
	\ldots, \ell^*_{i_k}\times \calV_{c_k, d_k}
	)) = 1
	\puncteq{.}
	\end{equation*}
	By definition of $\mu$, we have $\Prob_\mu^{\calT_\calH}(\Cyl(
	\ell^*_{i_0}\times \calV_{c_0, d_0})) = 1$. The induction step is 
	proved by construction of $\calH$; we showed that every 
	instruction could be simulated to preserve almost surely our 
	encoding.
	
	Therefore, if $M$ halts with counter values $c$ and $d$, we get that
	\[
	\Prob_\mu^{\calT_\calH}(\F B) \geq \Prob_\mu^{\calT_\calH}(\F 
	(\ell^*_m\times \calV_{c, d})) = 1
	\puncteq{.}
	\]
	If $M$ does not halt, then location $\ell_m^*$ is almost surely 
	never reached. Thus $\Prob_\mu^{\calT_\calH}(\F B) = 0$. \qed
\end{proof}

As a consequence of this lemma, we can easily prove 
Proposition~\ref{prop:undecSHS}. We recall its statement.

\undecSHSs*
\begin{proof}
	In the previous lemma, we have reduced the 
	halting problem for two-counter machines to the qualitative 
	reachability problem for stochastic hybrid systems. As this halting problem 
	is 
	undecidable~\cite{Minsky67}, using the notations of the previous lemma, we 
	have proved that there is no algorithm to decide if 
	$\Prob_\mu^{\calT_\calH}(\F B) = 1$ or 
	$\Prob_\mu^{\calT_\calH}(\F B) = 0$.
	Moreover, if we could approximate the probability of eventually 
	reaching $B$, since $\Prob_\mu^{\calT_\calH}(\F B)$ is either $0$ 
	or $1$, we could decide if $\Prob_\mu^{\calT_\calH}(\F B)$ 
	is greater or less than $\frac{1}{2}$, and thus decide whether 
	$M$ halts or not. This means that the approximate quantitative 
	reachability problem is also undecidable for any fixed $\epsilon < 
	\frac{1}{2}$. \qed
\end{proof}

\begin{remark}
	Note that the SHS $\calH$ that we have
	built in the proof of Lemma~\ref{lem:redShs2cm} is decisive w.r.t.\ $B$ 
	from $\mu$. We have indeed that either 
	$\Prob_\mu^{\calT_\calH}(\F B) = 1$, or $B$ is almost surely 
	non-reachable from $\mu$ (which means that $\mu(\Btilde) = 1$).
	In both cases, we have $\Prob_\mu^{\calT_\calH}(\F B \lor \F
	\Btilde) = 1$. The key element preventing us from using the procedure 
	to approximate $\Prob_\mu^{\calT_\calH}(\F B)$
	described in Section~\ref{sec:dec} is the computation of $\Btilde$. 
	Deciding whether the initial distribution lies in $\Btilde$ is indeed
	equivalent to deciding whether $M$ halts. This shows that decisiveness, 
	albeit a desirable property to get closer to the decidability frontier, is 
	not sufficient to make any reachability problem decidable, even for very 
	simple (uniform and exponential) distributions, evolution rates, and 
	resets.
\end{remark}

\end{document}